%% file: main.tex
\documentclass{article}[11pt]
\input{header/usepackages}
\input{header/commands}

\input{header/comments}
\usepackage[a4paper, left=1in, right=1in, top=1in, bottom=1in]{geometry}

\title{Optimal Purity Estimation with Incoherent Measurements}

\author{Junseo Lee\thanks{Email: \href{mailto:junseolee@fas.harvard.edu}{junseolee@fas.harvard.edu}} \\ Harvard University \\ 
\and Chirag Wadhwa\thanks{Email: \href{mailto:chirag.wadhwa@ed.ac.uk}{chirag.wadhwa@ed.ac.uk}} \\ University of Edinburgh}
\date{}

\makeatletter
\renewcommand\paragraph{\@startsection{paragraph}{4}{\z@}%
{1.8ex \@plus1ex \@minus.2ex}%
{-1em}%
{\normalfont\normalsize\bfseries}}
\makeatother

\begin{document}

\maketitle

\disablecomments

\begin{abstract}
    In this work, we consider the fundamental task of estimating the purity of an unknown state by measuring its copies. Prior work has established tight bounds for this task in the setting of additive error estimation, but such estimates are not particularly informative when the state's purity is itself much smaller than the desired precision. Instead, we consider estimating to within $\textit{multiplicative}$ error $\varepsilon$. When one can perform general collective measurements, $\Theta\left(\frac{\sqrt{d}}{\varepsilon^2} + \frac{d}{\varepsilon}\right)$ copies are known to be necessary and sufficient for multiplicative-error purity estimation~\cite{AISW20}. However, implementing collective measurements on such a large number of copies can be experimentally demanding, and we thus aim to characterize the copy complexity of this problem with \emph{incoherent} measurements.

    In this setting, the only non-trivial result is a non-adaptive algorithm that uses $\bigo\left(
        \frac{d}{\varepsilon^2} + \frac{d^2}{\varepsilon}
    \right)$ copies~\cite{pelecanos2026beating}, which is polynomially larger than the best-known lower bound. Our first result is a new algorithm performing non-adaptive incoherent measurements that succeeds using $\bigo\left(
        \frac{d}{\varepsilon^2} + \frac{d^{3/2}}{\varepsilon}
    \right)$ copies, improving on the latter term in the copy complexity. Our estimator performs random basis measurements on individual copies in batches, and estimates the collision probability of the output distributions. Moreover, we show that for any algorithm restricted to non-adaptive measurements, the above copy complexity is optimal.

    We also develop a new \emph{adaptive} estimator for the purity of a state that improves on the above complexity in the high-precision regime, i.e., for $\eps = o(1/d)$. Our estimator proceeds in two stages: first, we use an initial batch of copies to estimate the state, and then measure the remaining copies in the estimated basis. We also show that the non-adaptive and adaptive estimators together yield the optimal complexity for incoherent purity estimation; in particular, we show that the copy complexity of this problem is $$
        \Theta\left(\min\left\{
        \frac{d}{\varepsilon^2} + \frac{d^{3/2}}{\varepsilon}, \frac{d^2}{\varepsilon} + \frac{\sqrt{d}}{\varepsilon^2}\right\}
    \right). 
    $$ 
    We prove our adaptive lower bounds by analyzing a suitable mixture-versus-mixture testing problem. We achieve this by showing that a classical problem of testing Gaussian distributions from noisy queries reduces to this quantum testing problem with a constant-factor overhead and then proving the desired lower bound for the classical task. We expect such reductions to have further applications in proving quantum inference lower bounds against adaptive algorithms. 
\end{abstract}

\newpage

\tableofcontents

\newpage

\input{sections/1_intro.tex}
\input{sections/2_preliminaries}
\input{sections/3_nonadaptive_estimator}

\input{sections/4_adaptive_estimator}

\input{sections/5_nonadaptive_lower_bounds}
\input{sections/6_adaptive_lower}

\appendix
\input{sections/A_gaussian_simulation}

\bibliographystyle{alpha}
\bibliography{biblio}

\end{document}

%% file: header/usepackages.tex
\usepackage[utf8]{inputenc}

\usepackage{xcolor}
\definecolor{blueviolet}{rgb}{0.2, 0.2, 0.6}
\definecolor{webgreen}{rgb}{0,.5,0}
\definecolor{webbrown}{rgb}{.6,0,0}
\usepackage[pdftex,
  bookmarks=false,
  colorlinks=true, %allcolors=blueviolet,
  urlcolor=webbrown,
  linkcolor=blueviolet, 
  citecolor=webgreen,
  pdfstartpage=1,
  pdfstartview={FitH},  % FitBH
  bookmarksopen=false,
  pagebackref
  ]{hyperref}
\renewcommand*{\backref}[1]{}
\renewcommand*{\backrefalt}[4]{%
    \ifcase #1%
          \or Cited on page~#2.%
          \else Cited on pages~#2.%
    \fi%
    }
\usepackage{amsmath}
\usepackage{amssymb}
\usepackage{amsfonts}
\usepackage{amsthm}
\usepackage{zref-clever}
\zcsetup{cap = true}
\newcommand{\Cref}{\zcref}
\allowdisplaybreaks %Toggle page breaks in align environments

\usepackage{enumerate}
\usepackage{enumitem}
\usepackage{comment}
\usepackage{xpatch}
\usepackage{graphicx}
\usepackage{tabularx}
\usepackage{braket}
\usepackage{amsthm}
\usepackage{tikz}
\usepackage{commath}
\usepackage{mathtools}
\usepackage{qcircuit}
\usepackage{algorithm}
\usepackage{algpseudocodex}[indLines = true,italicComments = false, ]
\usepackage{tabto}
\usepackage{pgf-umlsd}
\usepackage{bm}
\usepackage{bbm}
\usepackage{multirow, multicol}
\usepackage{float}
\usepackage{pdfpages}
\usepackage{caption}
\usepackage{subcaption}
\usepackage{makecell}
\usepackage{cite}
\usepackage{tablefootnote}

\usepackage{pgfplots}
\pgfplotsset{compat=1.18}
\usetikzlibrary{arrows.meta}
\usepackage{xcolor}

\definecolor{adaptiveblue}{HTML}{6FA8FF}
\definecolor{plum}{HTML}{6D4A7F}
\definecolor{phasegray}{HTML}{D5D8DE}
\definecolor{axisgray}{HTML}{3A3A3A}

%% file: header/commands.tex
\let\Pr\relax
\DeclareMathOperator*{\Pr}{\mathbf{Pr}}
\DeclareMathOperator*{\Var}{\mathbf{Var}}

\let\originalleft\left
\let\originalright\right
\renewcommand{\left}{\mathopen{}\mathclose\bgroup\originalleft}
\renewcommand{\right}{\aftergroup\egroup\originalright}

\newcommand{\mc}[1]{\mathcal{#1}}
\newcommand{\mbb}[1]{\mathbb{#1}}

\newcommand{\mrm}[1]{\mathrm{#1}}

\newcommand{\dtv}{\mathrm{d_{TV}}}
\newcommand{\dkl}{\mathrm{KL}}

\newcommand{\bigo}{\mathcal{O}}

\newcommand{\bfY}{\boldsymbol{Y}}
\newcommand{\bfZ}{\boldsymbol{Z}}
\newcommand{\bftheta}{\boldsymbol{\theta}}

\newcommand{\eps}{\varepsilon}

\newcommand{\tr}{\mathrm{tr}}

\newcommand{\swap}{\mathrm{SWAP}}

\newtheorem{theorem}{Theorem}[section]
\newtheorem{proposition}[theorem]{Proposition}
\newtheorem{lemma}[theorem]{Lemma}

\newtheorem{fact}[theorem]{Fact}

\AddToHook{env/theorem/begin}{%
\zcsetup{countertype={theorem=theorem}}}
\AddToHook{env/proposition/begin}{%
\zcsetup{countertype={theorem=proposition}}}
\AddToHook{env/lemma/begin}{%
\zcsetup{countertype={theorem=lemma}}}
\AddToHook{env/corollary/begin}{%
\zcsetup{countertype={theorem=corollary}}}
\AddToHook{env/claim/begin}{%
\zcsetup{countertype={theorem=claim}}}
\AddToHook{env/problem/begin}{%
\zcsetup{countertype={theorem=problem}}}
\AddToHook{env/conjecture/begin}{%
\zcsetup{countertype={theorem=conjecture}}}
\AddToHook{env/fact/begin}{%
\zcsetup{countertype={theorem=fact}}}
\zcRefTypeSetup{fact}{
Name-sg = Fact ,
name-sg = fact ,
Name-pl = Facts ,
name-pl = facts ,
}
\AddToHook{env/question/begin}{%
\zcsetup{countertype={theorem=question}}}
\AddToHook{env/claim/begin}{%
\zcsetup{countertype={theorem=claim}}}
\zcRefTypeSetup{claim}{
Name-sg = Claim ,
name-sg = claim ,
Name-pl = Claims ,
name-pl = claims ,
}

\theoremstyle{definition}
\newtheorem{definition}[theorem]{Definition}

\AddToHook{env/definition/begin}{%
\zcsetup{countertype={theorem=definition}}}
\AddToHook{env/notation/begin}{%
\zcsetup{countertype={theorem=notation}}}
\AddToHook{env/example/begin}{%
\zcsetup{countertype={theorem=example}}}
\AddToHook{env/assumption/begin}{%
\zcsetup{countertype={theorem=assumption}}}
\AddToHook{env/remark/begin}{%
\zcsetup{countertype={theorem=remark}}}

\numberwithin{equation}{section}

%% file: header/comments.tex
\definecolor{chiragcolor}{RGB}{66,175,210}
\definecolor{junseocolor}{rgb}{0, 0.125, 0.376} %Feel free to pick a different color here @Junseo

\makeatletter
\newif\if@comments
\newcommand{\chirag}[1]{\if@comments\textcolor{chiragcolor}{[CW: #1]}\fi}
\newcommand{\junseo}[1]{\if@comments\textcolor{junseocolor}{[JL: #1]}\fi}
\newcommand\enablecomments{\@commentstrue}
\newcommand\disablecomments{\@commentsfalse}
\enablecomments

\makeatother

%% file: sections/1_intro.tex
\section{Introduction}
\label{sec:intro}

The purity $\tr(\rho^2)$ is one of the simplest and most fundamental properties of a quantum state $\rho$. It quantifies mixedness, ranging from $1/d$ for the maximally mixed state to $1$ for a pure state, and determines the second R\'enyi entropy $S_2(\rho)=-\log\tr(\rho^2)$. For a subsystem of a pure bipartite state, this entropy measures entanglement with the remaining system. Purity estimation is thus a fundamental primitive for characterizing quantum states and probing many-body entanglement, with experimental applications based on randomized measurements~\cite{BEJ+19,EVRZ19}. Although purity can be measured directly through the identity $\tr(\rho^2)=\tr(\operatorname{SWAP}\rho^{\otimes 2})$, this approach requires coherent access to two copies of the state. Such access can be experimentally demanding, motivating protocols that measure each copy separately and combine only classical measurement outcomes. We study this experimentally relevant model of \emph{incoherent} or \emph{single-copy} measurements: arbitrary measurements on each copy are allowed, but no quantum information is retained across copies.

Despite the fundamental nature of purity estimation, its optimal sample complexity with incoherent measurements has remained unresolved. Randomized measurements and classical shadows provide general approaches to estimating purity and related nonlinear properties~\cite{EVRZ19,HKP20}. For additive error $\eta$, the inner-product estimator of Anshu, Landau, and Liu~\cite{ALL22} yields a non-adaptive purity estimator using $\bigo(\max\{1/\eta^2,\sqrt{d}/\eta\})$ copies. Chen, Cotler, Huang, and Li~\cite{CCHL21} established an $\Omega(\sqrt{d})$ lower bound even for constant additive error and adaptive single-copy measurements. Gong, Haferkamp, Ye, and Zhang~\cite{GHYZ24} subsequently proved the precision-dependent lower bound $\Omega(\max\{1/\eta^2,\sqrt{d}/\sqrt{\eta}\})$, and obtained the stronger bound $\Omega(\max\{1/\eta^2,\sqrt{d}/\eta\})$ for protocols that repeat an identical single-copy projective measurement. Here we focus on \emph{relative} error $\eps$, which remains informative even when the purity is as small as $1/d$ and, for $0<\eps\le 1/2$, corresponds to additive $\bigo(\eps)$ accuracy in the second R\'enyi entropy. With unrestricted collective measurements, the optimal complexity is $\Theta(\max\{\sqrt{d}/\eps^2,d/\eps\})$~\cite{AISW20}. With incoherent measurements, however, the best previously known upper bound was $\bigo(\max\{d/\varepsilon^2,d^2/\varepsilon\})$, obtained by specializing the moment-estimation bounds of Pelecanos, Tan, Tang, and Wright~\cite{pelecanos2026beating} to the second moment; the same bound is recovered by the single-copy specialization of the limited-entanglement estimator of Wadhwa and Chen~\cite{WC26}. Meanwhile, the mixedness-testing lower bound of~\cite{chen2022tight}, together with the collective-measurement lower bound of~\cite{AISW20}, only implies $\Omega(\max\{d^{3/2}/\varepsilon, \sqrt{d}/\varepsilon^2\})$ copies are necessary, even when measurements may be chosen adaptively. These bounds leave a polynomial gap and do not determine the optimal dependence on dimension and precision. Our first goal is to close this gap and determine the number of copies needed to estimate purity as a function of dimension and precision. First, we formalize the relative-error estimation task.

\begin{definition}[Purity estimation]
\label{def:purity-estimation}
    We say that an algorithm succeeds at $(d,\varepsilon)$-purity estimation if, given copies of any state $\rho$, it produces an estimate $\hat{P}$ such that\footnote{The success probability $\frac23$ here is arbitrary and can be boosted to $1-\delta$ for any $\delta > 0$ at the cost of $\bigo(\log(1/\delta))$ repetitions.}
    \begin{equation}
        \mathbf{Pr}[|\hat{P} - \tr(\rho^2)| \leq \eps \cdot \tr(\rho^2)] \geq \frac23.
    \end{equation}
\end{definition}

We will follow the above definition throughout the paper, and may omit ``$(d,\varepsilon)$'' when clear from context. With this definition in place, our central question is the following.

\begin{quote}
    \textbf{Question 1.} \emph{What is the optimal sample complexity of $(d,\eps)$-purity estimation using incoherent measurements?}
\end{quote}

We also note that the single-copy estimators achieving the upper bounds discussed above are non-adaptive, i.e., all measurements are fixed before any outcomes are observed. An adaptive incoherent protocol can instead use previous outcomes to choose subsequent measurements, potentially directing its measurements toward features of the unknown state that are most informative about its purity. Whether this flexibility reduces copy complexity depends on the task. For example, adaptivity does not improve the asymptotic complexity of state certification~\cite{chen2022tight}, but it yields a near-quadratic improvement in the precision dependence of quantum state tomography under infidelity loss~\cite{CHLLS23}. Since purity depends only on the spectrum of the state, it is natural to ask whether learning from earlier outcomes can nevertheless help estimate this single scalar quantity.

\begin{quote}
    \textbf{Question 2.} \emph{Is there any regime of $d,\varepsilon$ in which adaptivity helps reduce the copy complexity of purity estimation with single-copy measurements?}
\end{quote}

\subsection{Our results}

As mentioned previously, the best-known upper bound for purity estimation with incoherent measurements is $\bigo(\max\{d/\varepsilon^2, d^2/\varepsilon\})$, implied by the moment estimation bounds of \cite{pelecanos2026beating}. Our first result is a non-adaptive algorithm that improves on this bound, which we also show to be optimal.
\begin{theorem}[Purity estimation with non-adaptive measurements]
    \label{thm:intro-non-adaptive}
    For all sufficiently large $d$ and sufficiently small $\eps$, the copy complexity of $(d,\varepsilon)$-purity estimation with \emph{non-adaptive} incoherent measurements is 
    \begin{equation}
        \Theta\Bigl(
            \max \Bigl\{
            \frac{d}{\varepsilon^2}, \,\frac{d^{3/2}}{\varepsilon}
            \Bigr\}
        \Bigr).
    \end{equation}
\end{theorem}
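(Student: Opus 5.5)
The proof has two halves: an upper bound from an explicit non-adaptive estimator, and a matching lower bound for every non-adaptive incoherent protocol.

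\textbf{Upper bound.} I will use the estimator sketched in the abstract. The $n$ copies are split into $m$ batches of $k$ copies each. For each batch I draw a Haar-random unitary $U$ and measure all $k$ copies in the basis $U$. This produces i.i.d.\ samples from the distribution $p_U(i)=\langle i|U^\dagger\rho U|i\rangle$. From each batch I form the unbiased collision estimator
\[
Z_U=\binom{k}{2}^{-1}\sum_{a<b}\mathbf 1[x_a=x_b],
\]
so that $\mathbb E[Z_U\mid U]=\|p_U\|_2^2$. The Haar average is computed by Weingarten calculus (second moment):
\[
\mathbb E_U\|p_U\|_2^2=\frac{1+\tr(\rho^2)}{d+1}.
\]
Averaging $Z_U$ over batches and inverting this affine map gives the estimator $\hat P$. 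It then suffices to show
\[
\Var(\hat P)\le \eps^2\tr(\rho^2)^2/3.
\]
Write $P=\tr(\rho^2)\ge 1/d$. The variance splits into two parts.
\begin{itemize}
\item \emph{Basis fluctuation.} This is $\Var_U\|p_U\|_2^2\cdot(d+1)^2/m$. A fourth-moment Haar computation should give $\Var_U\|p_U\|_2^2=\bigo(P^2/d^3)$, up to lower-order terms. The contribution is therefore $\bigo(P^2/(dm))$, which forces $m\gtrsim 1/(d\eps^2)$. This is a weak requirement.
\item \emph{Sampling noise within a batch.} This is the usual collision-estimator variance,
\[
\bigo\!\left(\frac{\|p\|_2^2}{k^2}+\frac{\|p\|_3^3}{k}\right),
\]
multiplied by $d^2/m$. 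Typically $\|p_U\|_2^2\approx P/d$ and $\mathbb E\|p_U\|_3^3\lesssim P^{3/2}/d^{3/2}$ (using $\tr\rho^3\le P^{3/2}$). The contribution becomes
\[
\bigo\!\left(\frac{dP}{mk^2}+\frac{\sqrt d\,P^{3/2}}{mk}\right).
\]
Dividing by $\eps^2P^2$ and using $P\ge1/d$, with $n=mk$, the requirements are $n\gtrsim \sqrt d/\eps^2$ and $mk^2\gtrsim d^2/\eps^2$.
\end{itemize}
Choosing the batch size $k\approx \max\{1,\sqrt{d}\,\eps\}$ balances these and yields $n=\bigo(d/\eps^2+d^{3/2}/\eps)$. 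The parameters must be chosen carefully; in particular $k\ge2$ is needed for a nonzero collision count.

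\textbf{Lower bound.} The $d^{3/2}/\eps$ term should come from a mixedness-testing style argument. I will reduce from distinguishing the maximally mixed state $I/d$ from a random perturbation
\[
\rho=\frac{I+\sqrt{\eps}\,U\Sigma U^\dagger}{d},
\]
with $\Sigma$ a $\pm1$ diagonal matrix and $U$ Haar-random. The perturbed state has purity $(1+\eps)/d$, so a $(d,\eps/3)$-estimator would distinguish the two cases. The non-adaptive lower bound of Chen et al.\ for testing with perturbation strength $\delta$ in trace distance gives $\Omega(d^{3/2}/\delta^2)$ copies. Here the Hilbert--Schmidt perturbation is $\sqrt{\eps/d}$, corresponding to $\delta^2\approx\eps$, which yields $\Omega(d^{3/2}/\eps)$.

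For the $d/\eps^2$ term I will use two nearly-mixed ensembles with purities $(1+\eps')/d$ and $(1+\eps'+c\eps)/d$, with $\eps'$ of constant order. I then bound the chi-square (second-moment) divergence of the outcome distributions for an arbitrary fixed sequence of POVMs, by expanding over pairs of copies in the Haar average. Each pair of copies sees a perturbation whose distinguishing contribution is $\bigo(\eps^2/d)$. Summing over copies then gives $n\gtrsim d/\eps^2$.

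\textbf{Main obstacle.} I expect the hardest step to be this second, mixture-versus-mixture lower bound under general non-adaptive POVMs. The cross terms in the Haar expansion must be controlled uniformly over arbitrary measurements. My plan is to use the Weingarten-based moment bounds from the mixedness-testing literature and show that every higher-order term is dominated whenever $n\ll d/\eps^2$.
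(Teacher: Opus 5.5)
\textbf{Upper bound.} Your estimator is the paper's. However, the variance bookkeeping is wrong, and as a result your batch size does not give the claimed rate.

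Every distribution on $d$ outcomes has $\|p_U\|_2^2\ge 1/d$. Indeed $\mathbb{E}_U\|p_U\|_2^2=(1+P)/(d+1)\approx 1/d$, not $P/d$. Likewise $\mathbb{E}_U\|p_U\|_3^3=(1+3P_2+2P_3)/((d+1)(d+2))\approx 1/d^2$, which is not $\lesssim P^{3/2}/d^{3/2}$ when $P\approx 1/d$.

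Since $\widehat P_2=(d+1)\widehat Q-1$, you must estimate a collision probability of size $\approx 1/d$ to additive error $\eps P/d$. In the worst case this is relative error $\eps/d$.
\begin{itemize}
\item The pure-noise term contributes order $d/(mk^2)$ to $\Var(\widehat P_2)$. Against $\eps^2P^2\ge\eps^2/d^2$, this forces $mk^2\gtrsim d^3/\eps^2$.
\item The $1/k$ term depends on $Q_3-Q_2^2$, not $Q_3$. It is affordable only because of a Haar cancellation you never establish: $(d+1)^2\,\mathbb{E}_U[Q_3-Q_2^2]=(d+1)^2(\mu_3-\mu_{22})\le 3(P-1/d)$, which is the paper's fourth-moment computation. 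This gives $n\gtrsim d/\eps^2$. Bounding via $\|p_U\|_3^3$ alone would give only $n\gtrsim d^2/\eps^2$.
\end{itemize}
Consequently, with $k=O(1)$ your scheme needs order $d^3/\eps^2$ copies, and with $k\approx\sqrt d\,\eps$ about $d^{5/2}/\eps^3$. Even under your own stated requirements, $k\approx\max\{1,\sqrt d\,\eps\}$ does not yield $d/\eps^2+d^{3/2}/\eps$. Also, from $\sqrt d\,P^{3/2}/(mk)$ and $P\ge 1/d$ one gets $n\gtrsim d/\eps^2$, not $\sqrt d/\eps^2$. The batches must be large: $k\approx\min\{d^{3/2}/\eps,d^2\}$ with $m\approx\max\{1,1/(d\eps^2)\}$.

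\textbf{Lower bound.} Your $d^{3/2}/\eps$ argument via mixedness testing matches the paper. For the $d/\eps^2$ term, though, you leave the decisive step open, and the route you sketch faces a real obstruction.

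Both hypotheses are Haar mixtures. The standard second-moment (Ingster-type) expansion needs a fixed product reference, and you do not say how to handle a mixture in the denominator. Routing through a fixed reference such as $I/d$ via the triangle inequality cannot work either. Each of your ensembles has purity $(1+\eps')/d$ with $\eps'=\Theta(1)$, so it is already distinguishable from $I/d$ with $O(d^{3/2})$ non-adaptive copies. That is far fewer than $d/\eps^2$ when $\eps<d^{-1/2}$.

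The missing idea is to \emph{couple} the hidden direction across the two hypotheses. Use the same Haar $v$ in $\rho_{\theta,v}=(1-\theta)I/d+\theta|v\rangle\langle v|$, with $\theta_0=d^{-1/2}$ and $\theta_1=(1+8\eps)d^{-1/2}$. Your $U\Sigma U^\dagger$ family at two strengths sharing the same $U$ also works. The argument then runs in three steps.
\begin{itemize}
\item Joint convexity of $\mathrm{KL}$ moves $\mathbb{E}_v$ outside the divergence.
\item Non-adaptivity makes the transcript a product conditional on $v$ and the seed, so $\mathrm{KL}$ adds over copies.
\item Per copy, $\mathbb{E}_v\mathrm{KL}\le\mathbb{E}_v\chi^2\le(\theta_1-\theta_0)^2/(1-\theta_0)=O(\eps^2/d)$. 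This uses only second Haar moments and $\tr(M_y^2)\le\tr(M_y)^2$.
\end{itemize}
No higher-order cross terms ever arise, so your ``main obstacle'' disappears. Using $\mathrm{KL}$ rather than $\chi^2$ matters here: $\chi^2$ tensorizes multiplicatively, so even after coupling you would still need concentration over $v$.
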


The above theorem partially answers our first central question by pinning down the copy complexity with \emph{non-adaptive} measurements; addressing it fully requires extending our results to the setting where measurements can be chosen \emph{adaptively}.

In this setting of adaptive measurements, we actually show that the above complexity is \emph{not} tight: we develop a new adaptive estimator that improves on this complexity in certain parameter regimes. In particular, our adaptive estimator has complexity $\bigo(\max\{d^2/\varepsilon, \sqrt{d}/\varepsilon^2\})$. As we demonstrate in \Cref{fig:main}, this improves on the non-adaptive estimator whenever $\eps \lesssim 1/d$. This answers our second question in the affirmative, showing that adaptivity indeed helps for purity estimation in the high-precision regime. We also prove adaptive lower bounds that match both algorithms in their respective parameter regimes, obtaining an optimal characterization of purity estimation with single-copy measurements and fully addressing our first question as well.

\begin{theorem}[Purity estimation with adaptive measurements]
\label{thm:intro-adaptive}
   For all sufficiently large $d$ and sufficiently small $\eps$, the copy complexity of $(d,\varepsilon)$-purity estimation with incoherent \emph{adaptively chosen} measurements is
    \begin{equation}
        \Theta\Bigl(
            \min \Bigl\{
                \frac{d}{\eps^2} + \frac{d^{3/2}}{\eps}, \, \frac{d^2}{\eps} + \frac{\sqrt{d}}{\eps^2} 
            \Bigr\}
        \Bigr).
    \end{equation}
\end{theorem}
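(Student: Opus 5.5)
The theorem splits into an upper bound and a lower bound. Since the minimum of two rates is achievable by running whichever algorithm is cheaper for the given $(d,\eps)$, I handle each side separately.

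\textbf{Upper bound.} The first term comes directly from the non-adaptive estimator of \Cref{thm:intro-non-adaptive}. The second term needs the two-stage adaptive estimator.
\begin{enumerate}
\item Spend $\bigo(d^2/\eps)$ copies on single-copy tomography to get an estimate $\hat\rho$. Diagonalize it as $\hat\rho = \sum_i \hat\lambda_i \ket{\hat v_i}\bra{\hat v_i}$.
\item Measure the remaining $\bigo(\sqrt d/\eps^2 + d/\eps)$ copies in the eigenbasis $\{\ket{\hat v_i}\}$. This gives samples from the distribution $q_i = \bra{\hat v_i}\rho\ket{\hat v_i}$.
\item Estimate $\sum_i q_i^2$ by the unbiased collision statistic. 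This requires $\bigo(\sqrt{1/\|q\|_2^2}/\eps^2 + \dots)$ samples, and since $\|q\|_2^2 \ge 1/d$ this is at most $\bigo(\sqrt d/\eps^2)$. The collision-probability estimation bounds for distributions give this rate relative to $\|q\|_2^2$.
\end{enumerate}
The bias $\tr(\rho^2)-\sum_i q_i^2$ equals the squared Frobenius norm of the off-diagonal part of $\rho$ in the estimated basis. I would bound this bias by $\bigo(\|\rho-\hat\rho\|_F^2)$. The reason is that $\hat\rho$ is diagonal in that basis, so the off-diagonal part of $\rho$ coincides with the off-diagonal part of $\rho-\hat\rho$. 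Tomography with $n$ copies gives $\mathbb E\|\rho-\hat\rho\|_F^2 \lesssim d^2/n$. The bias must be at most $\eps\,\tr\rho^2$, and $\tr\rho^2 \ge 1/d$, so a naive count would require $n \gtrsim d^3/\eps$. To reach $d^2/\eps$ I would use a tomography guarantee whose error scales with $\tr(\rho^2)$, namely $\mathbb E\|\rho-\hat\rho\|_F^2 \lesssim d\,\tr(\rho^2)\cdot d/n$. Such a bound is plausible for a Pauli or random-basis estimator, since its variance depends on $\tr\rho^2$.

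\textbf{Lower bound.} The non-adaptive lower bound is already part of \Cref{thm:intro-non-adaptive}, but here measurements may be adaptive. I need to show that any adaptive protocol uses $\Omega(\min\{\cdot,\cdot\})$ copies. The $\sqrt d/\eps^2$ term follows from \cite{AISW20}, and the $d^{3/2}/\eps$ term from the mixedness-testing bound of \cite{chen2022tight}. What remains is to show $\Omega(\min\{d/\eps^2, d^2/\eps\})$ against adaptive protocols.

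I would set up a mixture-versus-mixture test. Both hypotheses are near-maximally-mixed states $\rho = I/d + \Delta$, where $\Delta$ is a random traceless perturbation in a Haar-random basis. The two hypotheses use spectra whose purities differ by a factor $1+\Theta(\eps)$ but whose low moments are otherwise hard to distinguish, for example Gaussian spectral perturbations of two slightly different variances. Then:
\begin{enumerate}
\item Each adaptive single-copy measurement reveals roughly a noisy linear query of the Gaussian perturbation. From this I would reduce a classical problem to the quantum test: distinguishing $N(0,\sigma_0^2 I)$ from $N(0,\sigma_1^2 I)$ via adaptive noisy queries. The reduction should carry only constant overhead.
\item I would then prove the classical lower bound by bounding the accumulated KL or $\chi^2$ divergence over $n$ adaptive queries. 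The per-query information is $\min\{\eps^2/d,\ \eps/d^2\}$-ish, and this regime-dependent per-query bound is exactly what would yield the $\min$ in the statement.
\end{enumerate}
I expect the main obstacle to be this reduction together with a tight adaptive information bound. Adaptivity lets the learner concentrate queries along directions it has already learned, which is what the upper bound exploits, so the divergence bound must charge for that learning. My first attempt would be a chain-rule argument: condition on the posterior after $t$ queries and show that the posterior variance in unqueried directions stays large until $\Omega(d^2/\eps)$ copies have been used.
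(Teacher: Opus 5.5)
\textbf{The gap is in the adaptive upper bound, at the bias step.} Because $\hat\rho$ is diagonal in $\{\ket{\hat v_i}\}$, the bias of your collision statistic is exactly $\|\mathrm{offdiag}_{\hat v}(E)\|_2^2$, where $E=\hat\rho-\rho$. At $n=\Theta(d^2/\eps)$ this bias is $\Theta(\eps)$, not $\bigo(\eps\tr\rho^2)$.

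For a concrete example, let $\rho=\frac{2}{d}P$ with $P$ a rank-$d/2$ projector. Then $\tr\rho^2=2/d$ and the spectral gap is $2/d$. Uniform-POVM tomography gives $\|E\|_2^2\approx d^2/n$ and $\|E\|_\infty\lesssim\sqrt{d\log d/n}$. So for $\eps\ll 1/(d\log d)$ the gap dominates, and the estimated basis nearly respects the split into $P$ and $I-P$. Perturbation theory then gives bias $\approx 2\|(I-P)EP\|_2^2\approx d^2/(2n)=\Theta(\eps)$. You need at most $\eps\tr\rho^2=2\eps/d$. This forces $n\gtrsim d^3/\eps$, precisely in the regime $\eps\ll 1/d$ where adaptivity is supposed to help.

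Your proposed rescue, $\mathbb{E}\|\rho-\hat\rho\|_2^2\lesssim d^2\tr(\rho^2)/n$, is false. Random-basis (uniform-POVM) tomography has exactly $\mathbb{E}\|\rho-\hat\rho\|_2^2=(d^2+d-1-\tr\rho^2)/n$, which does not scale with purity. No single-copy scheme can do better near $I/d$ either. For any POVM $\{M_y\}$, the Fisher information of the $d^2-1$ traceless coordinates at $I/d$ has trace $\sum_y d\,\|M_y-\tr(M_y)I/d\|_2^2/\tr(M_y)\le d(d-1)$. A van Trees argument then gives Frobenius mean-squared error $\gtrsim d^2/n$ there once $n\gtrsim d^3$.

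\textbf{The missing idea is that the bias need not be small, only predictable.} With $q_i=\bra{\hat v_i}\rho\ket{\hat v_i}$, one has
$\tr\rho^2=\sum_i q_i^2+\|\rho-\hat\rho\|_2^2-\sum_i(q_i-\hat\lambda_i)^2=2\tr(\rho\hat\rho)-\tr(\hat\rho^2)+\|\rho-\hat\rho\|_2^2$.
So the second stage should estimate the linear quantity $\tr(\rho\hat\rho)$, by averaging $\hat\lambda_i$ over observed outcomes $i$, rather than a collision probability. The error term is then the full $\|\rho-\hat\rho\|_2^2$, whose mean $(D-\tr\rho^2)/n$ with $D=d^2+d-1$ is explicit. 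This gives the paper's unbiased estimator $(2\overline{Y}-\tr(\hat\rho^2)+D/n)/(1+1/n)$.

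The crux is concentration. $\|\rho-\hat\rho\|_2^2$ has mean $\Theta(d^2/n)$ but variance only $\bigo(d^2/n^2)$. The reason is that the covariance operator $\mathcal{C}(A)=\mathbb{E}[W\tr(WA)]$ of one snapshot has trace at most $2d^2$ and operator norm at most $3$, so $\Var[\tr(W_1W_2)]=\tr(\mathcal{C}^2)\le 6d^2$. Its fluctuation is therefore $\bigo(\eps/d)$ at $n=\Theta(d^2/\eps)$. The $m$ second-stage copies add variance $\bigo(\tr(\rho^3)/m+d/(nm))$.

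\textbf{On the lower bound,} your outline matches the paper's architecture, but two things need fixing.
\begin{itemize}
\item The per-query information must be $\max\{\eps^2/d,\eps/d^2\}$, not $\min$. A $\min$ would give $\max\{d/\eps^2,d^2/\eps\}$, which contradicts the non-adaptive upper bound for $\eps\gg 1/d$.
\item The perturbation should have i.i.d.\ Gaussian coefficients on a fixed orthonormal basis of traceless Hermitian matrices, not Gaussian eigenvalues in a Haar basis. This is what makes outcome probabilities affine in a Gaussian vector and keeps posteriors Gaussian.
\end{itemize}
The steps you leave open are the technical core:
\begin{itemize}
\item An \emph{exact} POVM simulation. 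The paper does rejection sampling from the maximally mixed outcome law, accepting with probability $(1+\mu(y))/2$ where $\mu(y)=d\tr(M_y\Delta_\theta)/\tr(M_y)$. That acceptance coin is produced from coins $\mathbf{1}[a^\top\theta+Z\ge 0]$ via a Bernoulli factory for $x\mapsto(1+\Phi^{-1}(x))/2$.
\item The choice $v_0\asymp d^{-3}\min\{1,d\eps\}$, which keeps the states valid and the purity spread within each mixture below $\eps/d$.
\item The bound $I_s\le 4nL^2$ in $s=\sqrt{v}$, proved from Gaussian posterior updates and the potential $m^\top C m$, rather than by tracking unqueried directions.
\end{itemize}
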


We demonstrate how the above complexity transitions across parameter regimes in \Cref{fig:main}. Interestingly, as we move across parameter regimes, the complexity alternates between Heisenberg ($\eps^{-1}$) and standard quantum scaling ($\eps^{-2}$). Together, our results show that adaptivity is indeed necessary for optimal purity estimation; however, this is only the case in the high-precision regime, i.e., for $\varepsilon = o(1/d)$. 

\input{sections/figure}

\subsection{Technical overview}

We provide an overview of our two purity estimators in \Cref{sec:overview-estimators}, and present the lower bound ideas in \Cref{sec:overview-lower}. Throughout this overview, for integer $k \geq 2$, we denote $P_k \coloneqq \operatorname{tr}(\rho^k)$, and follow \Cref{def:purity-estimation}.

\subsubsection{Our estimators}
\label{sec:overview-estimators}

We develop two complementary purity estimation algorithms from single-copy measurements. Our non-adaptive estimator applies random basis measurements and estimates the collision probability of the outcome distribution. Our adaptive estimator instead first obtains a preliminary estimate of the unknown state, and uses this to refine further measurements.

\paragraph{Non-adaptive estimator:}
Given $n$ copies of the state, we split them up into $B$ batches of $s = n/B$ copies each. For each batch, we pick a Haar-random basis and measure all copies in the batch in this basis. For the $b$th batch, let $N_{b,i}$ be the number of occurrences of outcome $i$. We use these outcomes to construct an unbiased estimate of this basis's collision probability:
\begin{equation}
    \widehat{Q}_b \coloneqq \frac{\sum_i N_{b,i} (N_{b,i}-1)}{s(s-1)}.
\end{equation}
Averaging over the Haar measure, we show that
\begin{equation}
    \mathbb{E}[\widehat{Q}_b] = \frac{1 + P_2}{d+1}.
\end{equation}
We debias this estimator and average over all batches to obtain our final estimator:

\begin{equation}
    \widehat{P}_2 = \frac{1}{B} \sum_{b = 1}^B ((d+1) \widehat{Q}_b - 1).
\end{equation}
We analyze the variance of this estimator in \Cref{na-upper:variance-bound} and balance $B,s$ to keep the variance below $\bigo(\eps^2 P_2^2)$, ensuring that our estimator is accurate with high probability. Specifically, we take $B=\Theta(\max\{1,1/(d\varepsilon^2)\})$ and $s=\Theta(\min\{d^{3/2}/\varepsilon,d^2\})$, which yields the claimed upper bound of $\bigo(d^{3/2}/\varepsilon+d/\varepsilon^2)$ in \Cref{thm:intro-non-adaptive}.

\paragraph{Adaptive estimator:} This estimator proceeds in two stages. First, we take $n$ copies of the unknown state $\rho$, and apply a single-copy state estimator to each one, averaging the results to get an unbiased estimate $\hat{\rho}$. We then measure $m$ copies of $\rho$ individually in the basis of $\hat{\rho}$ to estimate the overlap $\tr(\rho \hat{\rho})$; let this final estimate be $\overline{Y}$. As $\overline{Y}$ is an unbiased estimator for the purity, a natural strategy would be to immediately output this estimate. However, the variance of $\overline{Y}$ contains a prohibitively large error term from the state estimation routine, which prevents one from improving over the non-adaptive rate.

Instead, we propose a new unbiased estimator that corrects for the state estimation error and reduces the variance. In particular, we compute
\begin{equation}
    \hat{P}_2 = \frac{2\overline{Y} - \tr(\hat{\rho}^2) + D/n}{1+1/n}, \quad \textnormal{where } D = d^2 + d - 1.
\end{equation}
The key reason for this variance reduction is that, conditioned on $\hat{\rho}$, the first two terms in the numerator have expectation $\tr(\rho^2) - \|\rho - \hat{\rho}\|_2^2$. Over the randomness of $\hat{\rho}$, the variance of this quadratic error term ends up being smaller than that of $\tr(\rho\hat{\rho})$ in relevant regimes, allowing us to obtain our improved adaptive complexity.

\subsubsection{Lower bound ideas}
\label{sec:overview-lower}

\paragraph{Non-adaptive lower bounds:}
The two terms in the non-adaptive lower bound come from different testing problems. First, the identity $P_2=1/d+\|\rho-I/d\|_F^2$ implies that relative-error purity estimation solves mixedness testing at trace-norm separation $\Theta(\sqrt\varepsilon)$. The lower bound of~\cite{chen2022tight} therefore gives $\Omega(d^{3/2}/\varepsilon)$ copies, even for adaptive measurements.

For the $\Omega(d/\varepsilon^2)$ term, we consider states $\rho_{\theta,v}=(1-\theta)I/d+\theta|v\rangle\langle v|$, where $v$ is a Haar-random direction drawn once and held fixed across all copies. We compare spike strengths $\theta_0=d^{-1/2}$ and $\theta_1=(1+8\varepsilon)d^{-1/2}$. Their purities, given by $1/d+(1-1/d)\theta^2$, have disjoint relative-error intervals. For any POVM chosen independently of $v$, Haar second-moment identities bound the average relative entropy between its two outcome distributions by $\bigo(\varepsilon^2/d)$. Non-adaptivity allows us to sum this bound over all copies, even for different POVMs and shared measurement randomness. Distinguishing the two hypotheses with constant advantage consequently requires $\Omega(d/\varepsilon^2)$ copies.

\paragraph{Prior adaptive lower bounds:} As stated previously, the mixedness testing lower bound of \cite{chen2022tight} already implies the $\Omega(d^{3/2}/\varepsilon)$ term. Moreover, \cite{AISW20} already showed an $\Omega(\sqrt{d}/\eps^2)$ lower bound for purity estimation even with entangled measurements. Our main technical contribution here is an $\Omega\Bigl( \min\Bigl\{
    \frac{d}{\eps^2}, \frac{d^2}{\eps}
\Bigr\}\Bigr)$ lower bound. We note that these three lower bounds together are equivalent to the lower bound of \Cref{thm:intro-adaptive}, and so it suffices for us to prove this latter bound; the rest of the section outlines this proof.

\paragraph{Gaussian encoding states:} Our hard instance will be two mixtures of states with separated purities. In particular, let $V_1, \dots, V_{d^2-1}$ be an orthonormal basis of the space of Hermitian traceless $d \times d$ matrices. We will pick a random vector $\bftheta \sim \mc{N}(0, v I_{d^2-1})$, and use these as coefficients to perturb the maximally mixed state. Specifically, we define
\begin{equation}
    \Delta_\theta \coloneqq \sum_{r = 1}^{d^2-1} \theta_r V_r, \quad \rho_\theta \coloneqq \frac{I}{d} + \Delta_\theta.
\end{equation}
We note that similar hard instances have previously been used for mixedness testing lower bounds in various settings~\cite{liu2024role,liu2024quantum,chen2026distributed,WC26}, where the perturbation coefficients were Rademacher random variables instead of Gaussian ones. While one might expect such sub-Gaussian vectors to suffice for our lower bound proofs, as we explain later, it will be crucial to our analysis that the coefficients are \emph{exactly} Gaussian, necessitating this new hard instance. 

In \Cref{lem:instance-valid-state}, we show that the operators $\rho_{\bftheta}$ are highly likely to be valid quantum states. We will select two variance parameters $v_1$ and $v_0$, and use these to parameterize the perturbation vector $\bftheta \sim \mc{N}(0, v_i I_{d^2-1})$; the tester will be tasked with distinguishing between these two state mixtures. In expectation, the resulting states have purity $\frac1d + v_i(d^2-1)$, and in \Cref{lem:instance-purity-separation} we show that the states have purity sufficiently concentrated around their mean. Consequently, for $v_0$ and $v_1$ chosen appropriately, a purity estimation algorithm can distinguish between the two mixtures, making this a suitable hard instance.

\paragraph{Reduction to Gaussian testing:} Unfortunately, we cannot directly prove the indistinguishability of these two mixtures. While the operators are \emph{likely} to be valid quantum states, not every possible operator $\rho_\theta$ is a valid quantum state. While we could get around this by conditioning only on the valid vectors, the conditional distribution of these vectors would no longer be Gaussian, removing our ability to exploit several desirable properties of such distributions. Instead, we develop a reduction to a classical testing problem for Gaussian distributions, and embed the undesirable events in low-probability branches of a classical simulator.

We will still aim to prove a lower bound for determining $i$ given access to $\bftheta \sim \mc{N}(0, v_i I_{d^2-1})$. However, instead of access to the encoding quantum states $\rho_{\bftheta}$, we consider a noisy query model: an algorithm can provide an input vector $a \in \mathbb{R}^{d^2-1}$ with bounded norm, and receives a random output
\begin{equation}
    Y = a^\top \bftheta + \bfZ, \quad \bfZ \sim \mc{N}(0,1).
\end{equation}
In \Cref{lem:gaussian-simulation}, we show that any single-copy POVM on a state $\rho_\theta$ can be simulated \emph{exactly} with a constant expected number of noisy queries to an unknown vector $\theta$. Consequently, one can take a purity estimation algorithm and simulate all its measurements with these noisy linear queries to the vector $\theta$, giving us a classical algorithm for the Gaussian testing problem. Conversely, a lower bound for this classical problem implies the same (up to constant factors) for purity estimation.

Note that the above reduction helps us get around the aforementioned issue of certain vectors $\theta$ not corresponding to valid states. Even in such cases, the classical testing problem remains well-defined, obviating the need to study conditional priors. Moreover, we can allow the simulated purity estimator to act arbitrarily on these low-probability events and simply adjust the overall success probability of the classical tester. We handle this formally in the proof of \Cref{thm:adaptive-lower-main}. We highlight that this reduction also allows us to map a quantum mixture-versus-mixture distinguishing task to that of distinguishing between two fixed Gaussian distributions, greatly simplifying the subsequent analysis.

\paragraph{Gaussian testing lower bound:} We briefly discuss our classical lower bound techniques, with full details in \Cref{sec:gaussian-lower}. Prior classical work has considered related adaptive inference problems from noisy queries, such as estimating an unknown deterministic vector $\theta$~\cite{arias2012fundamental} and testing or detection problems for sparse vectors~\cite{castro2014adaptive}. However, to our knowledge, testing the scale of a Gaussian vector from adaptive noisy queries has not been studied previously. Our classical lower bound may therefore be of independent interest.

We prove this lower bound by showing that the distributions over query transcripts are indistinguishable across the two regimes. To this end, we relate the total variation distance between these distributions to the Fisher information of a transcript with respect to the parameter $v_i$. We directly analyze this quantity by proving uniform upper bounds on the per-query increment to the Fisher information. A key property we exploit for this purpose is that the posterior distribution of the vector $\theta$ \emph{remains Gaussian} after a noisy linear query, and the parameter updates can be tracked explicitly (see~\Cref{lem:posterior-updates}). This property is another crucial simplification offered by our use of the Gaussian-encoded hard instance over the standard Rademacher one.

\subsection{Related work}

\paragraph{Purity and moment estimation.}
Purity estimation belongs to the broader problem of estimating spectral moments and R\'enyi entropies without reconstructing an entire state. Early work of Ekert et al.~\cite{E02} introduced controlled-SWAP networks for directly estimating nonlinear state functionals, while Brun~\cite{Brun04} showed how polynomial functionals can be measured using observables on multiple copies. These questions have close classical counterparts: the power sum $\sum_{i=1}^d p_i^k$ of a probability distribution can be estimated from $k$-way sample collisions. For each fixed integer $k\ge 2$, Acharya, Orlitsky, Suresh, and Tyagi~\cite{AOST17} established the optimal sample complexity $\Theta(d^{1-1/k}/\varepsilon^2)$ for relative-error estimation of these moments or equivalently additive-error estimation of the corresponding R\'enyi entropy. In particular, the classical collision probability requires $\Theta(\sqrt d/\varepsilon^2)$ samples. Acharya, Issa, Shende, and Wagner~\cite{AISW20} characterized the analogous quantum problem with unrestricted collective measurements, obtaining $\Theta(\max\{d^{1-1/k}/\varepsilon^2,d^{2-2/k}/\varepsilon^{2/k}\})$ copies. For purity, their bound specializes to $\Theta(\max\{\sqrt d/\varepsilon^2,d/\varepsilon\})$. Measuring in a known eigenbasis reduces the problem to classical sampling; our setting concerns an unknown eigenbasis and measurements restricted to individual copies.

\paragraph{Incoherent estimation.}
Randomized measurements provide a general route to estimating nonlinear properties from single-copy data. Van Enk and Beenakker~\cite{vEB12} studied spectral-moment estimation using random measurements, and subsequent protocols based on correlations between locally randomized measurements enabled purity and entanglement-entropy estimation in many-body systems~\cite{EVRZ19,BEJ+19}. Classical shadows offer a related framework for estimating linear and nonlinear state properties from reusable classical measurement records~\cite{HKP20}. Here, the single-copy restriction permits arbitrary measurements within each copy and therefore differs from a restriction to spatially local measurements. For additive purity error $\eta$, the distributed inner-product estimator of Anshu, Landau, and Liu~\cite{ALL22} implies a non-adaptive upper bound of $\bigo(\max\{\eta^{-2},\sqrt d/\eta\})$. Their matching lower bound concerns the distributed inner-product problem and does not directly imply a matching purity lower bound. For purity itself, Chen, Cotler, Huang, and Li~\cite{CCHL21} established an $\Omega(\sqrt d)$ lower bound at constant additive accuracy, even for adaptive single-copy measurements. Gong, Haferkamp, Ye, and Zhang~\cite{GHYZ24} strengthened this to $\Omega(\max\{\eta^{-2},\sqrt{d/\eta}\})$ and proved the stronger bound $\Omega(\max\{\eta^{-2},\sqrt d/\eta\})$ for protocols repeating an identical single-copy projective measurement. The latter restriction is more restrictive than non-adaptivity and does not allow for arbitrary non-adaptive measurement schedules. More recently, Akresh and Beckey~\cite{AB26} obtained sharp constant-accuracy purity-testing lower bounds using a positive-partial-transpose relaxation that includes adaptive single-copy protocols. For relative-error estimation, Pelecanos, Tan, Tang, and Wright~\cite{pelecanos2026beating} developed single-copy moment estimators as a component of their spectrum-learning algorithm; the second-moment specialization uses $\bigo(\max\{d/\varepsilon^2,d^2/\varepsilon\})$ copies. The same purity bound follows from the single-copy specialization of the limited-entanglement estimator of Wadhwa and Chen~\cite{WC26}. Our non-adaptive result improves the $d^2/\varepsilon$ term to $d^{3/2}/\varepsilon$ and establishes the optimal rate over all non-adaptive single-copy protocols.

\paragraph{Power of adaptivity.}
The value of adapting measurements to earlier outcomes depends on both the task and the loss function. Adaptive purity estimation already appeared in the qubit setting: Bagan, Ballester, Mu\~noz-Tapia, and Romero-Isart~\cite{BBMR05} gave a two-stage separable-measurement protocol that asymptotically attains the optimal collective-measurement performance under a Bayesian fidelity criterion. Their protocol first estimates the Bloch direction and then measures the remaining copies along the estimated direction. This fixed-dimensional asymptotic result differs from the uniform, relative-error guarantees studied here. In higher dimensions, Chen, Huang, Li, and Liu~\cite{chen2022tight} showed that adaptivity does not improve the asymptotic complexity of mixedness testing: $\Theta(d^{3/2}/\alpha^2)$ copies are necessary and sufficient to distinguish $I/d$ from states at trace distance at least $\alpha$. Chen, Huang, Li, Liu, and Sellke~\cite{CHLLS23} established the same absence of an asymptotic advantage for tomography in trace distance, but showed that this resource improves infidelity tomography from a non-adaptive rate of $\Omega(d^3/\gamma^2)$ to an adaptive one of $\widetilde{\bigo}(d^3/\gamma)$ for target infidelity $\gamma$. Our results identify a precision-dependent advantage for estimating a single spectral functional. Together with the matching lower bounds, our results show that adaptivity improves the asymptotic rate when $\varepsilon=o(1/d)$, with the advantage reaching a factor of order $\sqrt d$ when $\varepsilon\lesssim d^{-3/2}$. Thus, the benefit of adaptivity for purity estimation is governed jointly by dimension and relative precision.

\subsection*{Acknowledgments}
The authors thank Qisheng Wang and Angus Lowe for helpful discussions at an early stage of this project. J.L. acknowledges support from the Harvard Quantum Initiative and the Kwanjeong Foundation, and this work benefited from interactions with the NSF AI Institute for Artificial Intelligence and Fundamental Interactions (IAIFI), which is supported by the National Science Foundation under Cooperative Agreement PHY-2525568. C.W. was supported by the UK EPSRC through the Quantum Advantage Pathfinder project with grant
reference EP/X026167/1.

\subsection*{LLM Use}

The proof ideas for our main results were developed by frontier LLMs, except for the non-adaptive estimator where LLMs only assisted with calculations. The authors digested these proofs, verified all steps, and carried out substantive rewrites, and are fully responsible for the paper's contents. An LLM also assisted with the TikZ code used to generate \Cref{fig:main}.

%% file: sections/figure.tex
\begin{figure}[t]
\centering
\begin{tikzpicture}
\begin{axis}[
    width=0.92\linewidth,
    height=0.42\linewidth,
    xmin=0, xmax=2.0,
    ymin=0, ymax=34,
    axis lines=none,
    clip=false,
    enlargelimits=false,
    xtick=\empty,
    ytick=\empty,
    tick label style={font=\small},
]

% --- epsilon-regime labels below the x-axis ---

\node[
    anchor=north,
    font=\small
] at (axis description cs:0.12,0)
    {$\epsilon \gtrsim d^{-1/2}$};

\node[
    anchor=north,
    font=\small
] at (axis description cs:0.375,0)
    {$d^{-1}\!\lesssim\!\epsilon\!\lesssim\! d^{-1/2}$};

\node[
    anchor=north,
    font=\small
] at (axis description cs:0.625,0)
    {$d^{-3/2}\!\lesssim\!\epsilon\!\lesssim\! d^{-1}$};

\node[
    anchor=north,
    font=\small
] at (axis description cs:0.875,0)
    {$\epsilon \lesssim d^{-3/2}$};

% y-axis label
\node[
    anchor=south,
    font=\small
] at (axis description cs:0,1.02)
    {Copy complexity};

\node[
    anchor=north east,
    font=\small
] at (axis description cs:1.1,0)
    {$\log(1/\epsilon)$};

% --- vertical phase boundaries ---
\addplot[
    densely dotted,
    line width=1pt,
    color=phasegray
] coordinates {(0.5,0) (0.5,34)};
\addplot[
    densely dotted,
    line width=1pt,
    color=phasegray
] coordinates {(1.0,0) (1.0,34)};
\addplot[
    densely dotted,
    line width=1pt,
    color=phasegray
] coordinates {(1.5,0) (1.5,34)};

% --- adaptive curve: 2^{min(max(3/2+\alpha,1+2\alpha), max(2+\alpha,1/2+2\alpha))} ---
\addplot[
    domain=0:2,
    samples=300,
    smooth,
    line width=1.1pt,
    color=adaptiveblue
]
{pow(2, min(max(1.5 + x, 1 + 2*x), max(2 + x, 0.5 + 2*x)))};

% --- non-adaptive curve: 2^{max(3/2+\alpha,1+2\alpha)} ---
\addplot[
    domain=0:2,
    samples=300,
    smooth,
    line width=1.0pt,
    dashed,
    dash pattern=on 5pt off 3pt,
    color=plum
]
{pow(2, max(1.5 + x, 1 + 2*x))};

% --- axes with aligned arrowheads ---
\draw[-{Latex[length=2mm,width=1.4mm]}, line width=0.45pt, color=axisgray]
    (rel axis cs:0,0) -- (rel axis cs:1.03,0);

\draw[-{Latex[length=2mm,width=1.4mm]}, line width=0.45pt, color=axisgray]
    (rel axis cs:0,0) -- (rel axis cs:0,1.03);

% --- labels for active complexity regimes ---
\node[anchor=south] at (axis cs:0.22,4.2) {$\Theta\bigl(\frac{d^{3/2}}{\epsilon}\bigr)$};
\node[anchor=south] at (axis cs:0.73,6.0) {$\Theta\bigl(\frac{d}{\epsilon^2}\bigr)$};
\node[anchor=north] at (axis cs:1.22,9.0) {$\Theta\bigl(\frac{d^2}{\epsilon}\bigr)$};
\node[anchor=north] at (axis cs:1.77,15.8) {$\Theta\bigl(\frac{\sqrt{d}}{\epsilon^2}\bigr)$};

% --- curve labels outside the plotting region ---
\node[
    anchor=west,
    text=plum,
    font=\small\sffamily
] at (axis cs:2.04,32.0) {Non-adaptive};

\node[
    anchor=west,
    text=adaptiveblue,
    font=\small\sffamily
] at (axis cs:2.04,22.6) {Adaptive};

\end{axis}
\end{tikzpicture}

\caption{Demonstration of the copy complexity across the different regimes of $\epsilon$. The {\color{plum} non-adaptive} curve corresponds to the complexity of \Cref{thm:intro-non-adaptive}, while the {\color{adaptiveblue} adaptive} one corresponds to \Cref{thm:intro-adaptive}. For $\eps \gtrsim d^{-1}$, the non-adaptive estimator already attains the optimal adaptive complexity, making the curves coincide. For $\eps \ll d^{-1}$, the adaptive estimator instead outperforms the non-adaptive one, demonstrating a strict separation.}
\label{fig:main}
\end{figure}

%% file: sections/2_preliminaries.tex
\section{Preliminaries}
\label{sec:prelim}

Throughout, for integer $k \geq 1$, we will use $P_k(\rho) \coloneqq \tr(\rho^k)$ to denote the $k$th spectral power sum of a state $\rho$. When $\rho$ is clear from context, we may omit this and only use $P_k$. Throughout, for $p \geq 1$, $\|\cdot\|_p$ denotes the $\ell_p$-norm for vectors and the Schatten $p$-norm for operators.

In this paper, we only consider algorithms performing single-copy measurements. When all POVMs are chosen \emph{before} any outcomes are observed, we call the ordered list of measurements a \emph{non-adaptive} POVM schedule, and say the algorithm is non-adaptive. Note that non-adaptive algorithms may use shared randomness to determine the measurement schedule, as long as this is done before any measurements are actually performed. More generally, algorithms can also choose measurements \emph{adaptively} based on prior outcomes. However, conditioning on any shared randomness and prior measurement outcomes, this choice is deterministic. One can thus model an adaptive schedule of measurements as a deterministic decision tree, where the branches are chosen according to the current measurement outcome and all pre-shared randomness.

To prove our lower bounds, we will show that output distributions under various measurement schedules are indistinguishable. To this end, we will need various notions of distances between distributions.

\begin{definition}[TV distance, KL divergence]
    Let $P,Q$ be two discrete distributions. Their TV distance and KL divergences are given by
    \begin{equation}
        \dtv(P,Q) = \frac12 \sum_x |P(x) - Q(x)|, \quad \textnormal{and} \quad \dkl(P \| Q) = \sum_x p(x) \log\frac{p(x)}{q(x)}
    \end{equation}
    respectively. For continuous distributions $P,Q$ with densities $p,q$ over a common dominating measure, these quantities are instead

    \begin{equation}
        \dtv(P,Q) = \frac12 \int |p(x)-q(x)| \mrm{d}x, \quad \textnormal{and} \quad \dkl(P \| Q) = \int p(x) \log\frac{p(x)}{q(x)} dx.
    \end{equation}
\end{definition}
These two quantities are related by Pinsker's inequality, which states:
\begin{equation}
    \dtv(P,Q) \leq \sqrt{\frac12 \dkl(P \| Q)}.
\end{equation}
Next, we define the Fisher information.
\begin{definition}[Fisher information]
\label{def:fisher-info}
    Let $\{p_s\}$ be a differentiable family of densities with respect to a common dominating measure, such that $s \mapsto p_s(x)$ is absolutely continuous for almost every $x$. We define the score of this distribution with respect to $s$ as
    \begin{equation}
        S_s(x) \coloneqq \frac{\partial \log(p_s(x))} {\partial s},
    \end{equation}
    and the Fisher information is given by
    \begin{equation}
        I_s \coloneqq \mathbb{E}_{x \sim p_s}[S_s(x)^2].
    \end{equation}
\end{definition}

The following standard fact relates the TV distance between two distributions from a parameterized family to the Fisher information and will be crucial to our adaptive lower bounds.

\begin{fact}
\label{fact:tv-and-fisher-info}
    Let $p_s, S_s, I_s$ be as in \Cref{def:fisher-info}, with $I_s < \infty$. Then, for any two parameters $s_0$ and $s_1$,
    \begin{equation}
        \dtv(p_{s_0},p_{s_1}) \leq \frac12 \left|\int_{s_0}^{s_1} \sqrt{I_s} ds\right|.
    \end{equation}
\end{fact}

We will make use of the uniform POVM, a continuous measurement whose measurement operators are $d\ket{\psi}\bra{\psi} \mrm{d}\psi$, where $\mrm{d}\psi$ denotes the Haar measure on $d$-dimensional pure states. For our analysis, we will make use of the following standard fact about the first two moments of such measurements.

\begin{lemma}[Moments of uniform POVM]
  \label{lem:uniform-povm-moments}
  Let $\ket{\psi}$ be the outcome of the uniform POVM applied to a $d$-dimensional state $\rho$. Let $X = (d+1)\ket{\psi}\bra{\psi} - I$. Then,
  \begin{equation}
    \mathbb{E}[X] = \rho \quad \textnormal{and} \quad \mathbb{E}[X^{\otimes 2}] = \frac{1}{d+2}(I^{\otimes 2} + \rho\otimes I + I \otimes \rho )((d+1)\cdot \swap - I).
  \end{equation}
\end{lemma}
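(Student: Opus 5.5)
The plan is a direct Haar-moment computation. The outcome density of the uniform POVM on $\rho$ is $p(\psi) = d\bra{\psi}\rho\ket{\psi}$ with respect to the Haar measure $\mrm{d}\psi$. Every expectation over outcomes therefore becomes a Haar integral of a polynomial in $\ket{\psi}\bra{\psi}$, weighted by $d\,\tr(\rho\ket{\psi}\bra{\psi})$. The key tool is the standard symmetric-subspace identity
\begin{equation}
\int (\ket{\psi}\bra{\psi})^{\otimes k}\,\mrm{d}\psi = \frac{\Pi_{\mathrm{sym}}^{(k)}}{\binom{d+k-1}{k}},
\end{equation}
where $\Pi_{\mathrm{sym}}^{(k)} = \frac{1}{k!}\sum_{\pi\in S_k} W_\pi$ and $W_\pi$ is the permutation operator. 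I will use it for $k=2$ to get the first moment and for $k=3$ to get the second moment.

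\textbf{First moment.} I compute $\mathbb{E}[\ket{\psi}\bra{\psi}] = d\,\tr_1\!\big[(\rho\otimes I)\int(\ket{\psi}\bra{\psi})^{\otimes 2}\mrm{d}\psi\big]$. The $k=2$ Haar integral equals $(I+\swap)/(d(d+1))$. Tracing against $\rho$ on the first factor gives $(I+\rho)/(d+1)$, so $\mathbb{E}[X] = (I+\rho) - I = \rho$.

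\textbf{Second moment.} I write
\begin{equation}
\mathbb{E}[(\ket{\psi}\bra{\psi})^{\otimes 2}] = d\,\tr_1\!\big[(\rho\otimes I\otimes I)\,\tfrac{6}{d(d+1)(d+2)}\Pi_{\mathrm{sym}}^{(3)}\big] = \frac{1}{(d+1)(d+2)}\sum_{\pi\in S_3}\tr_1[(\rho\otimes I\otimes I)W_\pi].
\end{equation}
I then evaluate the six partial traces, which is the bookkeeping step:
\begin{itemize}
\item the identity permutation gives $I\otimes I$;
\item the transposition $(23)$ gives $\swap$;
\item $(12)$ gives $\rho\otimes I$, and $(13)$ gives $I\otimes\rho$;
\item the two 3-cycles give $(\rho\otimes I)\swap$ and $(I\otimes\rho)\swap$, up to fixing which is which.
\end{itemize}
Collecting terms, the sum equals $(I^{\otimes2}+\rho\otimes I+I\otimes\rho)(I+\swap)$. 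Each claimed 3-cycle term I will verify by checking traces against product operators $A\otimes B$.

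\textbf{Assembling.} I expand
\begin{equation}
\mathbb{E}[X^{\otimes 2}] = (d+1)^2\,\mathbb{E}[(\ket{\psi}\bra{\psi})^{\otimes2}] - (d+1)\big(\mathbb{E}[\ket{\psi}\bra{\psi}]\otimes I + I\otimes \mathbb{E}[\ket{\psi}\bra{\psi}]\big) + I\otimes I.
\end{equation}
Let $M = I^{\otimes2}+\rho\otimes I+I\otimes\rho$. The first term is $\frac{d+1}{d+2}M(I+\swap)$. The middle term is $-(2I^{\otimes2}+\rho\otimes I+I\otimes\rho) = -(M + I^{\otimes 2})$.

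The main obstacle is verifying that these combine into $\frac{1}{d+2}M((d+1)\swap - I)$. The $\swap$ coefficient matches immediately. For the remaining terms, I need
\begin{equation}
\frac{d+1}{d+2}M - M - I^{\otimes 2} + I^{\otimes 2} = -\frac{1}{d+2}M,
\end{equation}
which holds exactly. So the constant terms cancel, and the formula follows.
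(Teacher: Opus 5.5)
Your proposal is correct, and it is the computation the paper has in mind. The paper omits the proof and only says the lemma follows "with a little work" from the Haar-moment identity; you supply that work by integrating against the outcome density $d\langle\psi|\rho|\psi\rangle$ with the $r=2$ and $r=3$ cases, and your partial traces for the 3-cycles and the final assembly into $\frac{1}{d+2}(I^{\otimes 2}+\rho\otimes I+I\otimes\rho)((d+1)\swap-I)$ both check out.
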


We omit the proof of this fact, but note that it follows, with a little work, from the following lemma about moments of Haar-random states.

\begin{lemma}[Haar-measure moments]
    \label{lem:haar-moments}
    Let $r \in \mathbb{N}$, and let $\ket{v} \sim \mathbb{C}^d$ be a $d$-dimensional Haar-random state. Then,
    \begin{equation}
        \mathbb{E}_v(|v\rangle\langle v|)^{\otimes r} =\frac{\sum_{\pi\in S_r}W_\pi}{d(d+1)\cdots(d+r-1)},   
    \end{equation}
    where $W_\pi$ permutes the $r$ tensor factors according to the permutation $\pi$.
\end{lemma}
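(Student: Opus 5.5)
The moment formula follows from the Schur--Weyl form of the Haar twirl: the $r$-fold moment commutes with every $U^{\otimes r}$ and is supported on the symmetric subspace. Concretely, I would set $M \coloneqq \mathbb{E}_v(|v\rangle\langle v|)^{\otimes r}$ and argue in three steps.

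\emph{Support.} Unitary invariance of the Haar measure gives $U^{\otimes r} M (U^{\dagger})^{\otimes r} = M$ for every unitary $U$. Moreover, each $|v\rangle^{\otimes r}$ lies in the symmetric subspace $\mathrm{Sym}^r(\mathbb{C}^d)$. Hence $M$ is supported on $\mathrm{Sym}^r(\mathbb{C}^d)$, and $W_\pi M = M W_\pi = M$ for all $\pi \in S_r$.

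\emph{Proportionality to the projector.} $\mathrm{Sym}^r(\mathbb{C}^d)$ is an irreducible representation of $U(d)$ under $U \mapsto U^{\otimes r}$; this is the standard Schur--Weyl fact, or one can check it via highest-weight vectors. Restricted to this subspace, $M$ commutes with an irreducible action, so Schur's lemma gives $M = c\,\Pi_{\mathrm{sym}}$ for a scalar $c$. Here
\[
\Pi_{\mathrm{sym}} = \frac{1}{r!}\sum_{\pi \in S_r} W_\pi .
\]
This projector identity is standard: $\frac{1}{r!}\sum_\pi W_\pi$ is Hermitian and idempotent, because $W_\sigma \sum_\pi W_\pi = \sum_\pi W_\pi$. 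It also fixes symmetric vectors and kills their orthogonal complement.

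\emph{Normalization.} Taking traces, $\operatorname{tr} M = 1$ because each $|v\rangle\langle v|^{\otimes r}$ has unit trace. On the other side, $\operatorname{tr}\Pi_{\mathrm{sym}} = \dim \mathrm{Sym}^r(\mathbb{C}^d) = \binom{d+r-1}{r}$, which counts multisets of size $r$ from $d$ basis elements. Therefore $c = 1/\binom{d+r-1}{r}$, and
\[
M = \frac{r!}{d(d+1)\cdots(d+r-1)} \cdot \frac{1}{r!}\sum_\pi W_\pi ,
\]
which is the claimed formula.

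The main obstacle is justifying irreducibility of $\mathrm{Sym}^r$ under $U^{\otimes r}$. If I want to avoid invoking Schur--Weyl, I would first try a direct route: every operator on $\mathrm{Sym}^r$ commuting with all $U^{\otimes r}$ also commutes with all $A^{\otimes r}$, by polarization. These operators span all operators on $\mathrm{Sym}^r$, so the commutant is scalar.

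A Gaussian alternative avoids representation theory entirely. Write $|v\rangle = g/\|g\|$ with $g$ a standard complex Gaussian vector, and use independence of $\|g\|$ and $g/\|g\|$. Wick's theorem gives $\mathbb{E}(gg^\dagger)^{\otimes r} = \sum_\pi W_\pi$, and $\mathbb{E}\|g\|^{2r} = d(d+1)\cdots(d+r-1)$ because $\|g\|^2$ is Gamma$(d)$. Dividing the first identity by the second yields the result directly.
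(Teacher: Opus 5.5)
Your proof is correct. There is no paper proof to compare it against: the paper states this lemma as a standard fact without proof and only uses its consequences (the cycle-sum identity in the non-adaptive analysis and the uniform-POVM moments).

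\emph{Schur--Weyl route.} This is the standard textbook argument, and each step checks out. $M$ is supported on $\mathrm{Sym}^r(\mathbb{C}^d)$, Schur's lemma on that irreducible $U(d)$-module makes $M$ a scalar there, and the trace gives $c=\binom{d+r-1}{r}^{-1}=r!/(d(d+1)\cdots(d+r-1))$.

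\emph{Gaussian route.} This one is also complete and is arguably the better proof to write down in this paper. It needs only three facts:
\begin{itemize}
\item unitary invariance, so that $g/\|g\|$ is Haar and independent of $\|g\|$;
\item the complex Wick formula $\mathbb{E}[g_{i_1}\cdots g_{i_r}\bar g_{j_1}\cdots \bar g_{j_r}]=\sum_{\pi}\prod_k \delta_{i_k,j_{\pi(k)}}$, which relies on $\mathbb{E}[g_ig_j]=0$;
\item $\mathbb{E}\|g\|^{2r}=\Gamma(d+r)/\Gamma(d)$.
\end{itemize}
It avoids representation theory entirely. Contracted against $A_1\otimes\cdots\otimes A_r$, it also yields in one step the cycle formula $\mathbb{E}\prod_j\langle v|A_j|v\rangle$ that the paper actually uses. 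The Schur--Weyl route instead explains why the answer must be proportional to the symmetric projector, and it generalizes to other twirls.

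\emph{One correction to your fallback for irreducibility.} The step from ``$M$ commutes with every $U^{\otimes r}$'' to ``$M$ commutes with every $A^{\otimes r}$'' is not a polarization argument. Polarization expresses symmetrized products through $A^{\otimes r}$ with non-unitary $A$, which goes the wrong way. The right justification is analytic continuation:
\begin{itemize}
\item differentiate $(e^{itH})^{\otimes r}$ at $t=0$ to get commutation with $\sum_k I\otimes\cdots\otimes H\otimes\cdots\otimes I$ for every Hermitian $H$;
\item extend complex-linearly to all $X\in M_d(\mathbb{C})$;
\item exponentiate to reach $GL_d(\mathbb{C})$, then use its density in $M_d(\mathbb{C})$.
\end{itemize}
Alternatively, argue directly on $\mathrm{Sym}^r$: the diagonal torus has one-dimensional weight spaces there, indexed by occupation numbers. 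The operators $\sum_k I\otimes\cdots\otimes|i\rangle\langle j|\otimes\cdots\otimes I$ connect all occupation vectors, which forces $M$ to be scalar. Your spanning claim is fine as stated, since $\mathrm{End}(\mathrm{Sym}^r)$ is spanned by the operators $(|u\rangle\langle w|)^{\otimes r}$.
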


%% file: sections/3_nonadaptive_estimator.tex
\section{Non-adaptive estimator}
\label{sec:upper-non-adaptive}

We now restate and prove our non-adaptive upper bound from \Cref{thm:intro-non-adaptive}.

\begin{theorem}[Non-adaptive upper bound]
\label{thm:na-upper-purity}
For $d\geq2$, $0<\varepsilon\leq1/4$, there is a non-adaptive single-copy $(d,\varepsilon)$-purity estimator that uses at most
\begin{equation}
    \bigo\Bigl(\max\Bigl\{\frac{d^{3/2}}{\varepsilon},
          \frac{d}{\varepsilon^2}\Bigr\}\Bigr)
\end{equation}
copies of the unknown state.
\end{theorem}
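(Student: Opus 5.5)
We analyze the batched random-basis collision estimator from the technical overview. Split the $n$ copies into $B$ batches of size $s$. For batch $b$, draw a Haar-random unitary $U_b$ (independently of everything else) and measure each copy in the basis $\{U_b\ket{i}\}$. This gives i.i.d.\ outcomes from the distribution $p_b(i)=\bra{i}U_b^\dagger\rho U_b\ket{i}$. Form $\widehat Q_b=\sum_i N_{b,i}(N_{b,i}-1)/(s(s-1))$, which is conditionally unbiased for $Q_b:=\sum_i p_b(i)^2$. Output $\widehat P_2=\frac1B\sum_b((d+1)\widehat Q_b-1)$.

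\textbf{Unbiasedness.} Write $Q_b=\sum_i \tr\bigl((U_b\ket{i}\bra{i}U_b^\dagger)^{\otimes 2}\rho^{\otimes 2}\bigr)$. Apply \Cref{lem:haar-moments} with $r=2$ to each (marginally Haar) basis vector. This gives $\mathbb{E}[Q_b]=d\cdot\frac{\tr(\rho)^2+\tr(\rho^2)}{d(d+1)}=\frac{1+P_2}{d+1}$, so $\widehat P_2$ is unbiased. By Chebyshev, it suffices to show $\Var(\widehat P_2)\le \eps^2P_2^2/3$.

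\textbf{Variance decomposition.} Since batches are independent, $\Var(\widehat P_2)=\frac{(d+1)^2}{B}\Var(\widehat Q_b)$. By the law of total variance,
\begin{equation}
\Var(\widehat Q_b)=\mathbb{E}_U\bigl[\Var(\widehat Q_b\mid U)\bigr]+\Var_U(Q_b).
\end{equation}
For the first term, use the standard classical collision-estimator variance bound
\begin{equation}
\Var(\widehat Q\mid p)\lesssim \frac{\sum_i p_i^2}{s^2}+\frac{\sum_i p_i^3-(\sum_i p_i^2)^2}{s}.
\end{equation}
Averaging requires $\mathbb{E}_U\sum_i p_i^2=\frac{1+P_2}{d+1}$ and $\mathbb{E}_U\sum_i p_i^3$. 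Computing the latter with \Cref{lem:haar-moments} at $r=3$ gives
\begin{equation}
\mathbb{E}_U\sum_i p_i^3=\frac{6\,\mathrm{sym}(1,P_2,P_3)}{(d+1)(d+2)}=\frac{1+3P_2+2P_3}{(d+1)(d+2)}.
\end{equation}
The second term $\Var_U(Q_b)$ is a fourth-moment Haar quantity because it involves cross terms $p(i)^2p(j)^2$ over distinct basis vectors. I would bound it via Weingarten calculus, or more cheaply by writing $\rho=I/d+\Delta$. Then $Q_b-1/d=\sum_i\bra{i}U^\dagger\Delta U\ket{i}^2$, the squared norm of the diagonal of a randomly rotated traceless matrix. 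This should give $\Var_U(Q_b)\lesssim \|\Delta\|_2^4/d^3+\tr(\Delta^4)/d^2$, roughly $(P_2-1/d)^2/d^3$ up to constants.

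\textbf{Assembling the bound.} Multiply by $(d+1)^2/B$ and use $P_2\ge 1/d$ and $P_3\le P_2^{3/2}\le P_2$. The contributions are:
\begin{itemize}
\item the $1/s^2$ term: $\lesssim d P_2/(Bs^2)$;
\item the $1/s$ term: $\lesssim (P_2+P_3)/(Bs)$;
\item the basis-randomness term: $\lesssim P_2^2/(Bd)$.
\end{itemize}
Requiring each to be at most $\eps^2P_2^2$, in the worst case $P_2=1/d$, forces:
\begin{itemize}
\item $Bs^2\gtrsim d^3/\eps^2$ from the first term;
\item $Bs\gtrsim d/\eps^2$ from the second;
\item $B\gtrsim 1/(d\eps^2)$ from the third.
\end{itemize}
Take $B=\lceil\max\{1,1/(d\eps^2)\}\rceil$ and $s=\max\{d^{3/2}/\eps,\ d/(B\eps^2)\}$, capped at $\Theta(d^2)$ when $B>1$. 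Then $n=Bs=\bigo(d^{3/2}/\eps+d/\eps^2)$. For the second term I must check that $P_3/P_2^2$ does not blow up; it is at most $1/P_2\le d$, which is exactly what gives the $d/\eps^2$ rate.

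\textbf{Main obstacle.} The hard step is the Haar fourth-moment computation of $\Var_U(Q_b)$, together with the $U$-averaged third moment. Both need explicit constants that stay uniform over all $\rho$, including near-pure states where $P_2\approx 1$. I would first use the traceless decomposition $\rho=I/d+\Delta$ and the known formula for the variance of $\sum_i\bra{i}U^\dagger\Delta U\ket{i}^2$. Only if that fails would I fall back to full Weingarten expansion at $r=4$.
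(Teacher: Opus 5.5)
Your plan matches the paper's proof. It uses the same batched Haar-basis collision estimator and the same law-of-total-variance split. It uses the same Haar moments $\mu_2=(1+P_2)/(d+1)$ and $\mu_3=(1+3P_2+2P_3)/((d+1)(d+2))$, and effectively the same parameters $B\asymp\max\{1,1/(d\eps^2)\}$, $s\asymp\min\{d^{3/2}/\eps,d^2\}$.

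\textbf{A harmless difference.} For the $1/s$ term you use Jensen, $\mathbb{E}_U Q_b^2\ge(\mathbb{E}_U Q_b)^2$, which gives $(d+1)^2(\mu_3-\mu_2^2)\le P_2+2P_3$. The paper instead computes $\mu_{22}:=\mathbb{E}_U Q_b^2$ exactly and gets $(d+1)^2(\mu_3-\mu_{22})\le 3x$ with $x=P_2-1/d$. Both yield $Bs\gtrsim d/\eps^2$, so your shortcut is valid. Subtracting $\mu_2^2$ is what cancels the constant $1$ in $(d+1)^2\mu_3$; without it you would need $Bs\gtrsim d^2/\eps^2$.

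\textbf{A slip.} Your first bullet should read $\lesssim d/(Bs^2)$, since $(d+1)^2\mu_2=(d+1)(1+P_2)$, not $dP_2$. The requirement $Bs^2\gtrsim d^3/\eps^2$ you state is the correct one for that corrected expression.

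\textbf{The real gap is in the step you flagged.} Your displayed guess $\Var_U(Q_b)\lesssim\|\Delta\|_2^4/d^3+\tr(\Delta^4)/d^2$ is true but too weak. After multiplying by $(d+1)^2/B$ and dividing by $P_2^2$, the second term contributes about $\tr(\Delta^4)/(BP_2^2)\approx 1/B$ for a nearly pure state. That forces $B\gtrsim 1/\eps^2$. Combined with $Bs^2\gtrsim d^3/\eps^2$, it gives $n=Bs=\sqrt{B\cdot Bs^2}\gtrsim d^{3/2}/\eps^2$, which misses the claimed rate.

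You need the quartic term to carry $d^{-3}$ too, which is what your ``roughly $(P_2-1/d)^2/d^3$'' silently assumes. The paper proves exactly
\[
\mu_{22}-\mu_2^2=\frac{2\bigl[d(d+1)\tr(\Delta^4)+(d^2+3d+3)x^2\bigr]}{d(d+1)^2(d+2)(d+3)}.
\]
Since $\tr(\Delta^4)\le x^2$, this gives $(d+1)^2\Var_U(Q_b)\le 4x^2/d$. As a check, for a pure state $Q_b=\sum_i|u_i|^4$ with $u$ Haar, whose variance is $4(d-1)/((d+1)^2(d+2)(d+3))\approx 4/d^3$.

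The paper avoids Weingarten calculus by writing $\mu_{22}=d\,\mathbb{E}a^4+d(d-1)\,\mathbb{E}[a^2b^2]$, where $a=\langle v|\rho|v\rangle$ and $b=\langle w|\rho|w\rangle$ for two columns $v,w$ of $U$. Conditioned on $v$, the column $w$ is Haar on $v^\perp$. So $\mathbb{E}[b^2\mid v]$ follows from the second-moment identity in dimension $d-1$ applied to $R\rho R$, with $R=I-|v\rangle\langle v|$. Everything then reduces to single-vector Haar moments of order at most four from \Cref{lem:haar-moments}, notably $\mathbb{E}a^4$ and $\mathbb{E}[a^2\langle v|\rho^2|v\rangle]$. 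With that bound in place, your assembly goes through as written.
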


We now present our estimator. We choose $B$ independent Haar-random orthonormal bases and measure $s\geq2$ fresh copies in each basis. For basis $b$, let $N_{b,i}$ denote the number of occurrences of outcome $i\in\{1,\ldots,d\}$. Define the purity estimate by
\begin{equation}
    \widehat P_2 :=\frac1B\sum_{b=1}^B\left((d+1)\widehat Q_b-1\right),
 \label{na-upper:estimator}
\end{equation}
where
\begin{equation}
    \widehat Q_b :=\frac{\sum_{i=1}^dN_{b,i}(N_{b,i}-1)}{s(s-1)}
\end{equation}
is the empirical collision probability in basis $b$.

We will show that our estimator is unbiased and prove the following bound on its variance:
\begin{proposition}[Uniform variance bound]
\label{na-upper:variance-bound}
For all $d\geq2$, $s\geq2$, and $B\geq1$, the estimator in
\Cref{na-upper:estimator} satisfies
\begin{equation}
    \operatorname{Var}(\widehat P_2)
 \leq\frac1B\left(\frac{12d}{s^2}+\frac{12x}{s}+\frac{4x^2}{d}\right).
\end{equation}
\end{proposition}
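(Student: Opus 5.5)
Throughout I read $x$ in the statement as the purity $P_2=\tr(\rho^2)$; this is the reading under which the bound gives the claimed copy complexity with $B=\Theta(\max\{1,1/(d\eps^2)\})$ and $s=\Theta(\min\{d^{3/2}/\eps,d^2\})$. The plan has four steps: reduce to a single basis, bound the within-basis variance, bound the between-basis variance, and collect terms.

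\textbf{Step 1: reduction to one basis.} Since the $B$ batches use independent bases and fresh copies, the summands $(d+1)\widehat Q_b-1$ are i.i.d. Hence
\begin{equation}
\Var(\widehat P_2)=\frac{(d+1)^2}{B}\Var(\widehat Q_1).
\end{equation}
I will apply the law of total variance, conditioning on the Haar-random basis $\{\ket{u_i}\}$:
\begin{equation}
\Var(\widehat Q_1)=\mathbb{E}_U\bigl[\Var(\widehat Q_1\mid U)\bigr]+\Var_U(q).
\end{equation}
Here $p_i=\bra{u_i}\rho\ket{u_i}$ and $q=\sum_i p_i^2$. Conditioned on $U$, the counts are multinomial$(s,p)$, and $\widehat Q_1$ is the standard U-statistic for the collision probability, so it is unbiased for $q$.

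\textbf{Step 2: within-basis variance.} The standard U-statistic variance formula gives
\begin{equation}
\Var(\widehat Q_1\mid U)=\frac{4(s-2)}{s(s-1)}\Bigl(\sum_i p_i^3-q^2\Bigr)+\frac{2}{s(s-1)}(q-q^2).
\end{equation}
For the second term, $\mathbb{E}_U q=(1+P_2)/(d+1)$ (which also gives unbiasedness of $\widehat P_2$) and $P_2\le 1$. Together these give $(d+1)^2\cdot\frac{2}{s(s-1)}\mathbb{E}q\le 4(d+1)(1+P_2)/s^2$, which is at most $12d/s^2$.

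The first term is where I expect the main obstacle. Bounding it by $4\mathbb{E}\sum_i p_i^3/s$ alone leaves a term of order $1/s$ after multiplying by $(d+1)^2$. That is too large when $P_2\approx 1/d$, so the subtraction of $q^2$ must be used; this cancellation is the delicate step. I will compute both pieces with \Cref{lem:haar-moments}.

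For the cubic moment, $r=3$ gives
\begin{equation}
\mathbb{E}\sum_i p_i^3=d\cdot\frac{1+3P_2+2P_3}{d(d+1)(d+2)}.
\end{equation}
For $\mathbb{E}q^2=\sum_i\mathbb{E}p_i^4+\sum_{i\ne j}\mathbb{E}p_i^2p_j^2$, the diagonal terms come from $r=4$. The off-diagonal terms involve two orthogonal Haar vectors, so I will use the corresponding Weingarten-type formula for $\mathbb{E}\,(\ket{u_1}\bra{u_1})^{\otimes 2}\otimes(\ket{u_2}\bra{u_2})^{\otimes 2}$; as a shortcut I will try Jensen, $\mathbb{E}q^2\ge(\mathbb{E}q)^2=(1+P_2)^2/(d+1)^2$.

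The leading constants should cancel, leaving
\begin{equation}
(d+1)^2\Bigl(\mathbb{E}\sum_i p_i^3-\mathbb{E}q^2\Bigr)\lesssim P_2+2P_3+O(1/d)\le 3P_2+O(1/d).
\end{equation}
Using $1/d\le P_2$, this is at most $12x/s$ after the factor $4/s$. If Jensen's lower bound turns out too lossy for the exact constant, I will carry out the exact off-diagonal computation instead.

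\textbf{Step 3: between-basis variance.} The remaining term is $(d+1)^2\Var_U(q)$, which requires $\mathbb{E}_Uq^2$ exactly, reusing the fourth-moment computation from Step 2. Expanding in $P_2,P_3,P_4$ and using $P_4\le P_3^{\cdot}\le P_2^2\le\dots$ (more precisely $P_3\le P_2^{3/2}$ and $P_4\le P_2^2$), together with $P_2\ge 1/d$, I expect
\begin{equation}
(d+1)^2\Var_U(q)\le \frac{4P_2^2}{d}.
\end{equation}
This reflects that Haar averaging over $d$ nearly independent outcomes concentrates $q$ to relative fluctuations of order $1/\sqrt d$.

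\textbf{Step 4: conclusion.} Summing the three contributions from Steps 2 and 3 and dividing by $B$ gives the stated bound.
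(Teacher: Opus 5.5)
There is a genuine gap: you have misread $x$. Your decomposition matches the paper's: i.i.d.\ batches, law of total variance over the random basis, the U-statistic conditional variance, and Haar moments. But the paper defines $x:=\tr(\Delta^2)=P_2-1/d$ with $\Delta=\rho-I/d$, not $x=P_2$.

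Your target $\frac1B\bigl(\frac{12d}{s^2}+\frac{12P_2}{s}+\frac{4P_2^2}{d}\bigr)$ is strictly weaker than the proposition; it follows from it because $x\le P_2$. That your version would still suffice for \Cref{thm:na-upper-purity} is no evidence for your reading, since the paper's proof also works with $x/P_2^2\le d/4$ and $x^2/P_2^2\le 1$.

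With the correct $x$, the right-hand side reduces to $12d/(Bs^2)$ at $\rho=I/d$. So your bounds on the overlap term $(d+1)^2(\mu_3-\mu_{22})$ and the between-basis term $(d+1)^2(\mu_{22}-\mu_2^2)$ must both vanish at the maximally mixed state. Your planned estimates do not:
\begin{itemize}
\item In Step~2, $P_2+2P_3+O(1/d)$ is of order $1/d$ at $I/d$.
\item In Step~3, bounding via $P_3\le P_2^{3/2}$, $P_4\le P_2^2$, $P_2\ge 1/d$ can at best give $4P_2^2/d\approx 4/d^3$ there, never $4x^2/d$.
\end{itemize}

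What is missing is an exact computation followed by re-centering. First compute $\mu_{22}=\mathbb{E}_U q^2$ exactly. Step~3 needs this anyway, since your Jensen shortcut in Step~2 does not provide it. You can use your Weingarten route, or, as the paper does, condition on the first column $v$ and apply the second-moment identity in dimension $d-1$ to a Haar vector on $v^\perp$. Then substitute $P_2=1/d+x$, $P_3=1/d^2+3x/d+t_3$, and $P_4=1/d^3+6x/d^2+4t_3/d+t_4$. The constant and linear terms cancel and $t_3$ drops out, leaving
\[
(d+1)^2(\mu_{22}-\mu_2^2)=\frac{2\bigl[d(d+1)t_4+(d^2+3d+3)x^2\bigr]}{d(d+2)(d+3)}\le\frac{4x^2}{d},
\]
using $0\le t_4\le x^2$.

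For the overlap term, your Jensen idea does work once carried out in centered moments:
\[
(d+1)^2(\mu_3-\mu_2^2)=\frac{(d+1)x}{d}+\frac{2(d+1)}{d+2}t_3-x^2\le 3x .
\]
This uses $|t_3|\le x$ (the eigenvalues of $\Delta$ lie in $[-1,1]$) and $d\ge 2$. It is a small simplification over the paper, which bounds the exact expression for $\mu_3-\mu_{22}$. Combined with your correct bound $12d/s^2$ for the $\mu_2-\mu_{22}$ term, this yields the proposition as stated.
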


Now, for a sufficiently large absolute constant $A$, we choose
\begin{equation}
B=\left\lceil A\max\left\{1,\frac1{d\varepsilon^2}\right\}\right\rceil, \qquad 
s=\left\lceil A\min\left\{\frac{d^{3/2}}{\varepsilon},d^2\right\}\right\rceil. \label{na-upper:parameters} 
\end{equation}
Using these parameter choices and our variance bound above, we can prove \Cref{thm:na-upper-purity}.

\begin{proof}[Proof of~\Cref{thm:na-upper-purity}]
We first analyze a single instance of the estimator. Since $P_2=1/d+x$, we have $P_2^{-2}\leq d^2$ and $x^2/P_2^2\leq1$. Also, $x/P_2^2=d^2x/(1+dx)^2\leq d/4$. \Cref{na-upper:variance-bound} therefore implies
\begin{equation}
 \frac{\operatorname{Var}(\widehat P_2)}{P_2^2}
 \leq\frac1B\left(\frac{12d^3}{s^2}+\frac{3d}{s}+\frac4d\right).
 \label{na-upper:relative-variance}
\end{equation}
Choose $B$ and $s$ as in \Cref{na-upper:parameters}. If $\varepsilon\geq d^{-1/2}$, then $B\geq A$ and $s\geq A d^{3/2}/\varepsilon$. Using $\varepsilon/\sqrt d\leq\varepsilon^2$ and $1/d\leq\varepsilon^2$ in \Cref{na-upper:relative-variance} gives
\begin{equation}
    \frac{\operatorname{Var}(\widehat P_2)}{P_2^2} \leq\left(\frac{12}{A^3}+\frac3{A^2}+\frac4A\right) \varepsilon^2.
\end{equation}
 
If $\varepsilon<d^{-1/2}$, then
$B\geq A/(d\varepsilon^2)$ and $s\geq A d^2$, which yield the
same bound. Choose $A$ so that the coefficient is at most $1/4$.
Unbiasedness and Chebyshev's inequality imply
\begin{equation}
    \Pr\!\left[|\widehat P_2-P_2|>\varepsilon P_2\right]
 \leq\frac{\operatorname{Var}(\widehat P_2)}{\varepsilon^2P_2^2}\leq\frac14.
\end{equation}
Each instance uses $Bs=\bigo(d^{3/2}/\varepsilon)$ copies when
$\varepsilon\geq d^{-1/2}$, and $Bs=\bigo(d/\varepsilon^2)$ otherwise.
Equivalently,
\begin{equation}
    Bs=\bigo\!\Bigl(\max\Bigl\{\frac{d^{3/2}}{\varepsilon}, \frac{d}{\varepsilon^2}\Bigr\}\Bigr),
\end{equation}
as claimed.
\end{proof}

It remains now to show that our estimator is unbiased and to bound its variance. There are two sources of randomness: the outcomes in a fixed basis and the choice of the random basis itself. Our analysis will control the variance contributions from each source. For this purpose, it will be convenient to define $\Delta:=\rho-I/d$ and let
\begin{equation}
    x:=\tr(\Delta^2)=P_2-1/d.
\label{na-upper:centered-moments}
\end{equation}
We also write $t_j:=\tr(\Delta^j)$ for $j\in\{3,4\}$.

For a basis represented by a unitary $U$, let $q_i:=\langle i|U^\dagger\rho U|i\rangle$ denote the outcome probabilities and write $Q_j:=\sum_{i=1}^d q_i^j$, the $j$th power sum probability in this basis. When $U$ is Haar-random, define $\mu_j:=\mathbb{E}_U Q_j$ for $j\in\{2,3\}$ and $\mu_{22}:=\mathbb{E}_U Q_2^2$. We start by analyzing the conditional statistics of the collision estimator in a fixed basis.

\begin{lemma}[Collision statistics in a fixed basis]
\label{na-upper:collision-moments}
For any fixed basis $U$ and $s\geq2$ measurement outcomes, the single-block collision estimator $\widehat Q$ satisfies $\mathbb{E}[\widehat Q\mid U]=Q_2$ and
\begin{equation}
 \operatorname{Var}(\widehat Q\mid U)
 =\frac{2(Q_2-Q_2^2)}{s(s-1)}
  +\frac{4(s-2)(Q_3-Q_2^2)}{s(s-1)}.
 \label{na-upper:conditional-variance}
\end{equation}
\end{lemma}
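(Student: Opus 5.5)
Conditioned on $U$, the $s$ outcomes $X_1,\dots,X_s$ are i.i.d.\ draws from the distribution $q=(q_1,\dots,q_d)$. I would write the estimator as a U-statistic,
\begin{equation}
    \widehat Q=\frac{1}{s(s-1)}\sum_{k\neq l}\mathbf 1[X_k=X_l],
\end{equation}
using the identity $\sum_i N_{i}(N_{i}-1)=\sum_{k\neq l}\mathbf 1[X_k=X_l]$, where the sum runs over ordered pairs. Unbiasedness is then immediate, since $\Pr[X_k=X_l]=\sum_i q_i^2=Q_2$ for $k\neq l$.

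For the variance, I would set $h_{kl}=\mathbf 1[X_k=X_l]$, viewed as a function of the unordered pair $\{k,l\}$. Then $\widehat Q=\frac{2}{s(s-1)}\sum_{\{k,l\}}h_{kl}$, so
\begin{equation}
    \operatorname{Var}(\widehat Q\mid U)=\frac{4}{s^2(s-1)^2}\sum_{\{k,l\},\{k',l'\}}\operatorname{Cov}(h_{kl},h_{k'l'}).
\end{equation}
I would split the pairs of unordered pairs according to their overlap.
\begin{itemize}
    \item \emph{Disjoint pairs.} The two indicators are independent, so the covariance is $0$.
    \item \emph{Identical pairs.} There are $\binom s2$ of these, each with $\operatorname{Var}(h)=Q_2-Q_2^2$.
    \item \emph{Pairs sharing exactly one index.} The number of ordered pairs of such pairs is $\binom s2\cdot 2(s-2)$. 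Each has $\mathbb E[h_{kl}h_{km}]=\Pr[X_k=X_l=X_m]=\sum_i q_i^3=Q_3$, so the covariance is $Q_3-Q_2^2$.
\end{itemize}
Summing the three contributions gives
\begin{equation}
    \frac{4}{s^2(s-1)^2}\Bigl[\tfrac{s(s-1)}{2}(Q_2-Q_2^2)+s(s-1)(s-2)(Q_3-Q_2^2)\Bigr]
    =\frac{2(Q_2-Q_2^2)}{s(s-1)}+\frac{4(s-2)(Q_3-Q_2^2)}{s(s-1)},
\end{equation}
which is \Cref{na-upper:conditional-variance}.

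The only step needing care is the combinatorial count of pairs sharing one index. Each unordered pair $\{k,l\}$ shares one element with exactly $2(s-2)$ other unordered pairs, and the covariance sum runs over ordered pairs of pairs, so this contributes $\binom s2\cdot2(s-2)$ terms. Everything else is routine.
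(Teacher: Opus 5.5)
Your proposal is correct and matches the paper's proof. You write $\widehat Q$ as an average of pairwise collision indicators, note that disjoint pairs have zero covariance, and assign covariance $Q_3-Q_2^2$ to pairs sharing one index, exactly as the paper does; your count $\binom s2\cdot 2(s-2)$ of ordered overlapping pairs equals the paper's $2\cdot 3\binom s3$, so the formulas agree.
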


\begin{proof}
Conditional on $U$, let $X_1,\ldots,X_s$ be the independent measurement outcomes, with distribution $(q_1,\ldots,q_d)$. Then
\begin{equation}
    \widehat Q=\binom{s}{2}^{-1}
 \sum_{1\leq a<c\leq s}\mathbf{1}\{X_a=X_c\}.
\end{equation}
Each indicator has expectation $Q_2$ and variance $Q_2-Q_2^2$. Two distinct indicators have covariance zero for disjoint pairs, and $Q_3-Q_2^2$ for pairs sharing one index. There are $\binom{s}{2}$ indicators and $3\binom{s}{3}$ unordered pairs sharing an index. Summing their variances and twice their covariances, then dividing by $\binom{s}{2}^2$, proves
\Cref{na-upper:conditional-variance}.
\end{proof}

Next, we compute the first few moments with respect to the Haar-random unitary, and show that our estimator is indeed unbiased.

\begin{lemma}[Haar averages and unbiasedness]\label{na-upper:haar-averages}
For a Haar-random basis, $\mu_2=(1+P_2)/(d+1)$ and $\mu_3=(1+3P_2+2P_3)/((d+1)(d+2))$. Consequently, the purity estimator is unbiased: $\mathbb{E}\widehat P_2=P_2$.
\end{lemma}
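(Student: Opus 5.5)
We compute $\mu_j=\mathbb{E}_U Q_j$ by rewriting each power sum as a trace against Haar tensor moments. For a fixed basis $U$ with basis vectors $\ket{u_i}=U\ket{i}$, we have $q_i=\langle u_i|\rho|u_i\rangle$. Hence
\begin{equation}
    Q_j=\sum_{i=1}^d \tr\bigl(\rho^{\otimes j}(\ket{u_i}\bra{u_i})^{\otimes j}\bigr).
\end{equation}
Each $\ket{u_i}$ is marginally a Haar-random state when $U$ is Haar-random. Therefore $\mu_j=d\,\tr\bigl(\rho^{\otimes j}\,\mathbb{E}_v(\ket{v}\bra{v})^{\otimes j}\bigr)$, and \Cref{lem:haar-moments} evaluates the expectation as $\sum_{\pi\in S_j}W_\pi/(d(d+1)\cdots(d+j-1))$.

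Next we use the standard identity that $\tr(\rho^{\otimes j}W_\pi)$ equals the product, over the cycles of $\pi$, of $\tr(\rho^{\ell})$, where $\ell$ is the cycle length. Since $\tr\rho=1$, only the cycle type matters.

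For $j=2$, the identity gives $1$ and the swap gives $P_2$. Hence $\mu_2=d(1+P_2)/(d(d+1))=(1+P_2)/(d+1)$.

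For $j=3$, the identity contributes $1$. The three transpositions contribute $P_2$ each. The two $3$-cycles contribute $P_3$ each. Hence $\mu_3=(1+3P_2+2P_3)/((d+1)(d+2))$.

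For unbiasedness we use the tower property. By \Cref{na-upper:collision-moments}, $\mathbb{E}[\widehat Q_b\mid U_b]=Q_2(U_b)$. Taking the expectation over the Haar-random $U_b$ gives $\mathbb{E}\widehat Q_b=\mu_2$. Therefore
\begin{equation}
    \mathbb{E}[(d+1)\widehat Q_b-1]=(1+P_2)-1=P_2,
\end{equation}
and averaging over the $B$ batches preserves this, so $\mathbb{E}\widehat P_2=P_2$.

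The only point needing care is justifying the marginal Haar distribution of each column of a Haar unitary. This follows from left-invariance of the Haar measure. The rest is bookkeeping of cycle types, and I do not expect a real obstacle.
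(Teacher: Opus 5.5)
Your proposal is correct and follows essentially the same route as the paper: both reduce $\mu_j$ to $d$ times the $j$th moment of $\langle v|\rho|v\rangle$ for a single Haar-random column, evaluate it with \Cref{lem:haar-moments} by summing over the cycle types of $S_j$, and then get unbiasedness from \Cref{na-upper:collision-moments} and the tower property. The only difference is that the paper writes the Haar step as the equivalent product-of-traces identity instead of $\tr(\rho^{\otimes j}W_\pi)$, which is purely notational.
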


\begin{proof}
For a Haar-random unit vector $v$, we use the identity
\begin{equation}
 \mathbb{E}_v\prod_{j=1}^r\langle v|A_j|v\rangle
 =\frac{\displaystyle\sum_{\pi\in S_r}
       \prod_{c\in\operatorname{cyc}(\pi)}
       \operatorname{tr}\!\left(\prod_{j\in c}A_j\right)}
       {d(d+1)\cdots(d+r-1)},
 \label{na-upper:haar-identity}
\end{equation}
which follows from \Cref{lem:haar-moments}.

Set $a:=\langle v|\rho|v\rangle$. Applying \Cref{na-upper:haar-identity} gives
\begin{equation}
    \mathbb{E}a^2=\frac{1+P_2}{d(d+1)}
\end{equation}
and
\begin{equation}
    \mathbb{E}a^3=\frac{1+3P_2+2P_3}{d(d+1)(d+2)}.
\end{equation}
By symmetry of the Haar columns, $\mu_2=d \cdot \mathbb{E}a^2$ and $\mu_3=d \cdot \mathbb{E}a^3$, proving the stated formulas. \Cref{na-upper:collision-moments} and the law of total expectation give $\mathbb{E}\widehat Q_b=\mu_2$ for every block $b$. Hence $\mathbb{E}\widehat P_2=(d+1)\mu_2-1=P_2$.
\end{proof}

Next, we compute the variance of the collision probability in a Haar-random basis.

\begin{lemma}[Haar fluctuations]
\label{na-upper:haar-fluctuations}
\label{na-upper:moments}
For a Haar-random basis, the variance of its collision probability is
\begin{equation}
 \mu_{22}-\mu_2^2
 =\frac{2\bigl[d(d+1)t_4+(d^2+3d+3)x^2\bigr]}
 {d(d+1)^2(d+2)(d+3)}.
 \label{na-upper:basis-variance}
\end{equation}
The averaged covariance contribution from overlapping sample pairs is
\begin{equation}
 \mu_3-\mu_{22}
 =\frac{2d(d+3)t_3-2dt_4-(d^2+6d+6)x^2+(d^2+5d+6)x}
 {d(d+1)(d+2)(d+3)}.
 \label{na-upper:overlap-moment}
\end{equation}
\end{lemma}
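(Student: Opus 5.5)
We compute $\mu_{22}=\mathbb{E}_U Q_2^2$ directly from the Haar moment identity \eqref{na-upper:haar-identity}, and then subtract the already known $\mu_2^2$ and $\mu_3$ from \Cref{na-upper:haar-averages}. Since $Q_2$ is built from $U$ and $\rho$ only through the traceless part, $q_i=1/d+\langle u_i|\Delta|u_i\rangle$, where $u_i$ is the $i$th column of $U$. Writing $\delta_i:=\langle u_i|\Delta|u_i\rangle$, we have $\sum_i\delta_i=\tr\Delta=0$, so
\begin{equation}
    Q_2=\frac1d+\sum_i\delta_i^2 .
\end{equation}
Hence $\mu_{22}-\mu_2^2=\Var\bigl(\sum_i\delta_i^2\bigr)$. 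This depends on $\Delta$ only through $x,t_3,t_4$, and working in the centered variables keeps the expressions short.

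\textbf{Diagonal terms.} We expand
\begin{equation}
    \mathbb{E}\Bigl(\sum_i\delta_i^2\Bigr)^2=d\,\mathbb{E}\delta_1^4+d(d-1)\,\mathbb{E}\delta_1^2\delta_2^2 .
\end{equation}
The diagonal term comes from \eqref{na-upper:haar-identity} with $r=4$ and all $A_j=\Delta$. Dropping every permutation with a fixed point, since $\tr\Delta=0$, leaves $3x^2+6t_4$, so
\begin{equation}
    \mathbb{E}\delta_1^4=\frac{3x^2+6t_4}{d(d+1)(d+2)(d+3)} .
\end{equation}

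\textbf{Cross term.} This is the main obstacle, because two distinct orthonormal columns are not independent Haar vectors. We plan to use the exact identity $\sum_i\delta_i=0$ to eliminate it. Squaring gives $\sum_i\delta_i^2=-\sum_{i\neq j}\delta_i\delta_j$, but a fourth moment requires a finer relation. Our first approach is Weingarten calculus for $\mathbb{E}[U^{\otimes 2}\otimes\bar U^{\otimes 2}]$ on the fourth-order mixed moment $\mathbb{E}\,\delta_1^2\delta_2^2$. As a fallback, we use the real-Gaussian-free route: sample an orthonormal pair as the first two columns, write $u_2$ as a Haar vector in $u_1^\perp$, and apply \eqref{na-upper:haar-identity} within the $(d-1)$-dimensional complement to the compressed operator $\Delta'=P^\perp\Delta P^\perp$, whose traces can be written in terms of $\delta_1$, $\langle u_1|\Delta^2|u_1\rangle$ and similar quantities. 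Each resulting term is then averaged over $u_1$ with \eqref{na-upper:haar-identity} once more. After simplification the cross term is a combination of $x^2$ and $t_4$ over $(d-1)d(d+1)(d+2)(d+3)$. Subtracting
\begin{equation}
    \mu_2^2=\Bigl(\frac1d+\frac{dx}{d(d+1)}\Bigr)^2,
\end{equation}
which follows from $\mathbb{E}\delta_1^2=x/(d(d+1))$, should yield \eqref{na-upper:basis-variance}. As a sanity check, the formula must vanish at $d=1$-like degenerate cases and be nonnegative; we will also test it on the rank-one spike $\Delta$.

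\textbf{Overlap moment.} For \eqref{na-upper:overlap-moment}, we rewrite the formula for $\mu_3$ from \Cref{na-upper:haar-averages} in centered variables using
\begin{equation}
    P_2=\frac1d+x,\qquad P_3=\frac1{d^2}+\frac{3x}{d}+t_3 .
\end{equation}
Independently, $\mu_{22}$ is $\mu_2^2$ plus the variance just computed. We put $\mu_3-\mu_{22}$ over the common denominator $d(d+1)(d+2)(d+3)$. The constant terms must cancel, since both quantities equal $1/d^2$-type terms at $\Delta=0$, and what remains should be exactly the stated numerator. This step is routine algebra once $\mu_{22}$ is in hand.
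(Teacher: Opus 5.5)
Your plan is correct and is essentially the paper's argument: it uses the same split $\mathbb{E}(\sum_i\delta_i^2)^2=d\,\mathbb{E}\delta_1^4+d(d-1)\,\mathbb{E}\delta_1^2\delta_2^2$ (the paper's $\mu_{22}=d\,\mathbb{E}a^4+d(d-1)\,\mathbb{E}(a^2b^2)$), and handles the cross term exactly as in your fallback, by conditioning on $u_1$ and applying the second-moment identity in $u_1^\perp$ to the compression, so no Weingarten calculus is needed. The only difference is that you center first, which shortens the algebra: with $s_1:=\langle u_1|\Delta^2|u_1\rangle$ one gets $\mathbb{E}[\delta_2^2\mid u_1]=(x+2\delta_1^2-2s_1)/(d(d-1))$, hence $d(d-1)\,\mathbb{E}\delta_1^2\delta_2^2=\bigl((d^2+3d+6)x^2-4dt_4\bigr)/\bigl(d(d+1)(d+2)(d+3)\bigr)$, and together with your diagonal term and $\mu_3=\frac1{d^2}+\frac{3x}{d(d+1)}+\frac{2t_3}{(d+1)(d+2)}$ this gives both identities directly, whereas the paper works with $\rho$ and only substitutes $P_2,P_3,P_4$ in terms of $x,t_3,t_4$ at the end.
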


\begin{proof}
Let $v$ be the first column of $U$ and put $a:=\langle v|\rho|v\rangle$ and $r_v:=\langle v|\rho^2|v\rangle$. By symmetry, $\mathbb{E}a^2=\mu_2/d$ and $\mathbb{E}a^3=\mu_3/d$, with $\mu_2$ and $\mu_3$ given by~\Cref{na-upper:haar-averages}. Applying \Cref{na-upper:haar-identity} gives the higher moments
\begin{align}
 \mathbb{E}a^4&=\frac{1+6P_2+3P_2^2+8P_3+6P_4}
 {d(d+1)(d+2)(d+3)},\\
 \mathbb{E}(a^2r_v)&=\frac{P_2+P_2^2+2P_3+2P_4}
 {d(d+1)(d+2)}.
\end{align}
The final identity uses $(A_1,A_2,A_3)=(\rho,\rho,\rho^2)$.

To calculate $\mu_{22}$, let $w$ be a second Haar column, orthogonal to $v$, and put $b:=\langle w|\rho|w\rangle$. Conditional on $v$, the vector $w$ is Haar-uniform on $v^\perp$. For $R:=I-|v\rangle\langle v|$, the second-moment identity in dimension $d-1$ therefore gives
\begin{align}
 \mathbb{E}[b^2\mid v]
 =\frac{[\tr(R\rho R)]^2 +\tr((R\rho R)^2)}{d(d-1)}
 =\frac{(1-a)^2+P_2-2r_v+a^2}{d(d-1)}.
\end{align}
Consequently,
\begin{equation}
    \mathbb{E}(a^2b^2) =\frac{(1+P_2)\mathbb{E}a^2-2\mathbb{E}a^3 +2\mathbb{E}a^4-2\mathbb{E}(a^2r_v)}{d(d-1)}.
\end{equation}
Using $\mu_{22}=d\mathbb{E}a^4+d(d-1)\mathbb{E}(a^2b^2)$, we obtain
\begin{equation}
 \mu_{22}
 =\frac{d^2+4d+2+(2d^2+8d)P_2
 +(d^2+6d+6)P_2^2-8P_3+2dP_4}
 {d(d+1)(d+2)(d+3)}.
 \label{na-upper:haar-fourth-moment}
\end{equation}
Expanding $\rho=I/d+\Delta$ gives $P_2=1/d+x$ and
$P_3=1/d^2+3x/d+t_3$. Likewise,
$P_4=1/d^3+6x/d^2+4t_3/d+t_4$.
Substituting these identities into the formulas for $\mu_2$ and $\mu_3$
in~\Cref{na-upper:haar-averages} and into
\Cref{na-upper:haar-fourth-moment} gives
\Cref{na-upper:basis-variance, na-upper:overlap-moment}.
\end{proof}

We now combine these ingredients to prove our main variance bound.

\begin{proof}[Proof of \Cref{na-upper:variance-bound}]
The eigenvalues of $\Delta$ have absolute value at most one, giving $|t_3|\leq x$. Also, $0\leq t_4\leq x^2$ because the sum of their fourth powers is at most the square of the sum of their squares. Dropping the nonpositive $t_4$ and $x^2$ terms in the numerator of \Cref{na-upper:overlap-moment} and using $t_3\leq x$ yields
\begin{equation}
    0\leq(d+1)^2(\mu_3-\mu_{22})\leq\frac{(3d+2)(d+1)}{d(d+2)}x\leq3x.
\end{equation}
Nonnegativity also follows pointwise from
$(\sum_i q_i^2)^2\leq(\sum_iq_i^3)(\sum_iq_i)=Q_3$.
Similarly, \Cref{na-upper:basis-variance} gives
\begin{equation}
    (d+1)^2(\mu_{22}-\mu_2^2)
 \leq\frac{2(2d^2+4d+3)x^2}{d(d+2)(d+3)}
 \leq\frac{4x^2}{d}.
\end{equation}
Since $\mu_2\leq2/(d+1)$ and $s(s-1)\geq s^2/2$,
\begin{equation}
    \frac{2(d+1)^2(\mu_2-\mu_{22})}{s(s-1)}
 \leq\frac{8(d+1)}{s^2}\leq\frac{12d}{s^2}.
\end{equation}
The law of total variance, \Cref{na-upper:collision-moments}, and independence of the blocks yield
\begin{align}
 \operatorname{Var}(\widehat P_2)
 =\frac{(d+1)^2}{B}\left(
 \frac{2(\mu_2-\mu_{22})}{s(s-1)}
 +\frac{4(s-2)(\mu_3-\mu_{22})}{s(s-1)}+\mu_{22}-\mu_2^2\right).
\end{align}
Combining the preceding bounds with
$(s-2)/(s(s-1))\leq1/s$ proves the proposition.
\end{proof}

%% file: sections/4_adaptive_estimator.tex
\section{Adaptive estimator}
\label{sec:upper-adaptive}

\begin{theorem}
  \label{thm:adaptive}
  There exists an algorithm that succeeds at $(d,\eps)$-purity estimation with adaptive incoherent measurements on
  \begin{equation}
    \bigo\left(
      \frac{d^2}{\varepsilon} + \frac{\sqrt{d}}{\varepsilon^2}
    \right)
  \end{equation}
  copies of the unknown state.
\end{theorem}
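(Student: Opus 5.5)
The estimator is the two-stage one from the technical overview. In the first stage we measure $n$ copies with the uniform POVM and form $X_j=(d+1)\ket{\psi_j}\bra{\psi_j}-I$. We set $\hat\rho=\frac1n\sum_{j=1}^n X_j$. This is a Hermitian, unit-trace, unbiased estimate of $\rho$, though not necessarily PSD.

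In the second stage we diagonalize $\hat\rho=\sum_k \lambda_k\ket{e_k}\bra{e_k}$ and measure $m$ fresh copies in the basis $\{\ket{e_k}\}$. If the outcome of a copy is $k$, we record $Y=\lambda_k$. Conditioned on $\hat\rho$, this gives $\mathbb{E}[Y\mid\hat\rho]=\tr(\rho\hat\rho)$ and $\Var(Y\mid\hat\rho)\le \tr(\rho\hat\rho^2)$. Let $\overline Y$ be the average of the $m$ values, and output $\hat P_2=(2\overline Y-\tr(\hat\rho^2)+D/n)/(1+1/n)$.

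Unbiasedness comes first. By \Cref{lem:uniform-povm-moments},
\begin{equation}
\mathbb{E}\tr(X_j^2)=\tr\bigl(\swap\,\mathbb{E}X_j^{\otimes2}\bigr).
\end{equation}
We compute this trace explicitly and expect the form $D+c\,P_2$, where $D=d^2+d-1$ and $c$ is an absolute constant. It then follows that
\begin{equation}
\mathbb{E}\tr(\hat\rho^2)=(1-1/n)P_2+\mathbb{E}\tr(X_1^2)/n .
\end{equation}
Together with $\mathbb{E}\overline Y=P_2$, the normalization in $\hat P_2$ cancels the bias. Checking that the constants come out exactly as in the stated formula (with $c=1$, hence the factor $1+1/n$) is the first thing to verify.

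For the variance we write the numerator as
\begin{equation}
2(\overline Y-\tr(\rho\hat\rho))+\bigl(P_2-\|\hat\rho-\rho\|_2^2\bigr)+D/n .
\end{equation}
The first term is the conditional sampling noise. Its variance is at most $\frac{4}{m}\mathbb{E}\tr(\rho\hat\rho^2)$, and $\mathbb{E}\tr(\rho\hat\rho^2)\lesssim P_2+\mathbb{E}\tr(\rho(\hat\rho-\rho)^2)$. Expanding with the first two moments of $X_j$ gives $\mathbb{E}\tr(\rho(\hat\rho-\rho)^2)\lesssim d/n$. The first term's variance is therefore $\bigo((P_2+d/n)/m)$.

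The second term is the fluctuation of $\|\hat\rho-\rho\|_2^2=\|\frac1n\sum_j(X_j-\rho)\|_2^2$. This is a U-statistic-like quadratic form in i.i.d.\ centered matrices. Its variance has a diagonal part of order $\Var\|X_1-\rho\|_2^2/n^3$, which is essentially zero: $\|X_1\|_2^2=(d+1)^2-2(d+1)+d$ is deterministic, and $\tr(X_1\rho)$ fluctuates only at scale $d\sqrt{P_2}$. The off-diagonal part is
\begin{equation}
\frac{2}{n^4}\sum_{i\neq j}\mathbb{E}\,\tr\bigl((X_i-\rho)(X_j-\rho)\bigr)^2\approx \frac{2}{n^2}\,\tr(\Sigma^2),
\end{equation}
where $\Sigma$ is the covariance superoperator of $X_1$. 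Using the $\swap$-form of $\mathbb{E}X^{\otimes2}$, we expect $\tr(\Sigma^2)=\bigo(d^2)$. The cross term $2\tr(\rho(\hat\rho-\rho))$ that appears when expanding $P_2-\|\hat\rho-\rho\|_2^2$ must also be handled. It has variance $\lesssim \tr(\rho\,\Sigma\,\rho)/n\lesssim d\,P_3/n\le dP_2^{3/2}/n$. Controlling this term, or showing it is absorbed, is delicate. I expect bounding it within the same moment computation to be the main obstacle, together with obtaining the clean $\bigo(d^2/n^2)$ bound for the quadratic fluctuation.

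Granting these bounds, we have
\begin{equation}
\Var(\hat P_2)\lesssim \frac{P_2+d/n}{m}+\frac{d^2}{n^2}+\frac{dP_2^{3/2}}{n}.
\end{equation}
Since $P_2\ge 1/d$, choosing $n=C d^2/\eps$ makes $d^2/n^2\le \eps^2/d^2\lesssim\eps^2P_2^2$, and makes the cross term at most $\eps P_2^{3/2}/d\lesssim\eps^2P_2^2$ whenever $\eps\gtrsim 1/(d\sqrt{P_2})$. If that condition fails, we enlarge $n$ to $d/(\eps^2 P_2^{1/2})\le d^{3/2}/\eps^2$. This must be checked not to exceed the target budget; if it does, I would replace $\tr(\rho\hat\rho)$ by an estimate using a fresh split so the cross term disappears. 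Choosing $m=C\max\{1/(\eps^2P_2),\,d/(n\eps^2P_2^2)\}\le C\sqrt d/\eps^2\cdot\bigo(1+\dots)$ finishes the argument. Indeed, $1/(\eps^2P_2)\le d/\eps^2$ is too large, so we must use $P_2\ge1/d$ more carefully via $m\cdot P_2\gtrsim\eps^{-2}$. I would therefore adopt a median-of-means guess-and-check on $P_2$, in dyadic scales, to set $m\approx \sqrt d/\eps^2$ where the worst case is $P_2\approx d^{-1/2}$ balanced against the $n$-dependent term. Chebyshev then gives success probability $2/3$ with $n+m=\bigo(d^2/\eps+\sqrt d/\eps^2)$ copies.
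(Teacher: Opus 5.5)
Your estimator and your split of the numerator (conditional sampling noise plus $P_2-\|\hat\rho-\rho\|_2^2$) are exactly the paper's. However, the variance accounting contains two errors that break the claimed rate, and the repairs you sketch (enlarging $n$, a fresh split, a guess-and-check on $P_2$) would not restore it.

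First, the ``cross term'' $2\tr(\rho(\hat\rho-\rho))$ does not exist. Your own display shows that $2\tr(\rho\hat\rho)-\tr(\hat\rho^2)=P_2-\|\hat\rho-\rho\|_2^2$ exactly; cancelling that linear term is the whole point of using $2\overline Y$. Moreover, $\|\hat\rho-\rho\|_2^2=\|\frac1n\sum_j W_j\|_2^2$ with $W_j=X_j-\rho$ is purely quadratic in centered variables. Carrying the spurious $dP_2^{3/2}/n$ term is what pushes you to $n\approx d^{3/2}/\eps^2$, and that is not $\bigo(d^2/\eps+\sqrt d/\eps^2)$ once $\eps\ll d^{-1/2}$. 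Without it, you get $\Var\|\hat\rho-\rho\|_2^2=\frac{1}{n^3}\Var\|W\|_2^2+\frac{2(n-1)}{n^3}\tr(\Sigma^2)\le\frac{12P_2}{n^3}+\frac{12d^2}{n^2}$. Here $\Var\|W\|_2^2=4\Var\tr(\rho X)\le 12P_2$ because $\tr(X^2)$ is deterministic. The bound $\tr(\Sigma^2)\le 6d^2$ follows from $\tr\Sigma=D-P_2\le 2d^2$ together with $0\preceq\Sigma\preceq 3I$. The latter holds because $\mathbb{E}\tr(XA)^2=\frac{d+1}{d+2}\bigl(\tr(A^2)+2\tr(\rho A^2)\bigr)$ for traceless Hermitian $A$.

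Second, and this is the genuinely missing idea, you bound $\mathbb{E}\tr(\rho\hat\rho^2)\lesssim P_2+d/n$. The exact expansion is $P_3+\mathbb{E}\tr(\rho(\hat\rho-\rho)^2)$, since the linear term has mean zero. Replacing $P_3$ by $P_2$ forces $m\gtrsim 1/(\eps^2P_2)$, i.e.\ $m\approx d/\eps^2$ for nearly maximally mixed states. That exceeds the target whenever $\eps\ll 1/d$, which is precisely the regime where this estimator is supposed to beat the non-adaptive one. No dyadic guess-and-check on $P_2$ can fix this, because the bound is $\Theta(d/\eps^2)$ at $P_2\approx 1/d$ even when $P_2$ is known exactly.

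What you need is to keep the cubic moment. Since $\mathbb{E}X^2=(d-1)\rho+dI$, one has $\mathbb{E}\tr(\rho\hat\rho^2)=\frac{(d-1)P_2+d}{n}+\frac{n-1}{n}P_3\le\frac{2d}{n}+P_3$. Then $P_3\le P_2^{3/2}$ and $P_2\ge 1/d$ give $P_3/m\le\eps^2P_2^2$ as soon as $m\gtrsim\sqrt d/\eps^2$, uniformly over all states. So the fixed choice $n=\Theta(d^2/\eps)$, $m=\Theta(\sqrt d/\eps^2+d/\eps)$ with plain Chebyshev suffices, with no data-dependent tuning.

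A minor point: $\tr(X^2)=D$ deterministically, so your constant is $c=0$, not $c=1$. The factor $1+1/n$ comes from $2-(1-1/n)$.
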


We now present the algorithm achieving this upper bound.

\begin{algorithm}
  \caption{Adaptive two-stage purity estimator}
  \label{alg:adaptive}
  \begin{algorithmic}[1]
    \For{$i = 1$ to $n$}
      \State Apply the uniform POVM to a copy of $\rho$ and record outcome $\ket{\psi_i}$.
      \State Compute $X_i = (d+1)\ket{\psi_i}\bra{\psi_i} - I$.
    \EndFor
    \State Compute $\hat\rho \coloneqq \frac1n \sum_{i = 1}^n X_i$
    \State Measure $m$ copies of $\rho$ in the eigenbasis of $\hat{\rho}$ and estimate $\tr(\rho\hat{\rho})$; let the estimated expectation value be $\overline{Y}$.
    \State \Return $Z \coloneqq \frac{2\overline{Y} - \tr(\hat{\rho}^2) + D/n}{1 + 1/n}$. 
  \end{algorithmic}
\end{algorithm}

We show that the adaptive estimator $Z$ constructed above is unbiased and bound its variance in the following lemma.

\begin{lemma}
  \label{lem:adaptive-estimator-variance}
  The estimator $Z$ defined in $\Cref{alg:adaptive}$ satisfies
  \begin{equation}
    \label{eq:adaptive-unbiasedness}
    \mathbb{E}[Z] = \tr(\rho^2),
  \end{equation}
  and
  \begin{equation}
    \label{eq:adaptive-variance}
    \mathrm{Var}[Z] \leq \frac{24d^2}{n^2} + \frac{4 P_3}{m} + \frac{8d}{mn}.
  \end{equation}
\end{lemma}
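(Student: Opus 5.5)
We first prove unbiasedness by conditioning on $\hat\rho$, then bound the variance with the law of total variance, splitting it into the measurement noise of the second stage and the fluctuation of the conditional mean over $\hat\rho$. All first- and second-moment inputs come from \Cref{lem:uniform-povm-moments}, together with two direct computations. Writing $X=(d+1)\ket{\psi}\bra{\psi}-I$, we have $\tr(X^2)=(d+1)^2-2(d+1)+d=D$ deterministically. Also $X^2=(d^2-1)\ket{\psi}\bra{\psi}+I$, and $\mathbb{E}\ket{\psi}\bra{\psi}=(\rho+I)/(d+1)$, so
\begin{equation}
\mathbb{E}[X^2]=(d-1)(\rho+I)+I .
\end{equation}

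\emph{Unbiasedness.} Let $\hat\rho=\sum_j\lambda_j\ket{j}\bra{j}$. Each second-stage copy returns outcome $j$ with probability $p_j=\bra{j}\rho\ket{j}$, and we record $Y=\lambda_j$. Hence $\mathbb{E}[\overline{Y}\mid\hat\rho]=\tr(\rho\hat\rho)$. Since the $X_i$ are i.i.d.\ with mean $\rho$ and $\tr(X_i^2)=D$, we get
\begin{equation}
\mathbb{E}\tr(\hat\rho^2)=P_2+\frac{D-P_2}{n}.
\end{equation}
Therefore $\mathbb{E}[2\overline{Y}-\tr(\hat\rho^2)+D/n]=2P_2-P_2(1-1/n)=P_2(1+1/n)$, which is \eqref{eq:adaptive-unbiasedness}.

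\emph{Measurement noise.} We have
\begin{equation}
\mathrm{Var}(\overline{Y}\mid\hat\rho)\le\frac1m\sum_j p_j\lambda_j^2=\frac1m\tr(\rho\hat\rho^2),
\end{equation}
because $\hat\rho^2$ is diagonal in the measured basis. Taking expectations and using $\mathbb{E}\hat\rho^2=\rho^2+(\mathbb{E}X^2-\rho^2)/n$ gives
\begin{equation}
\mathbb{E}\tr(\rho\hat\rho^2)\le P_3+\frac{(d-1)(P_2+1)+1}{n}\le P_3+\frac{2d}{n}.
\end{equation}
With the factor $4$ from $2\overline{Y}$ and $(1+1/n)^{-2}\le1$, this yields the terms $4P_3/m+8d/(mn)$.

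\emph{Fluctuation of the conditional mean.} Completing the square,
\begin{equation}
(1+1/n)\,\mathbb{E}[Z\mid\hat\rho]=P_2-\|\hat\rho-\rho\|_2^2+D/n,
\end{equation}
so it suffices to bound $\mathrm{Var}(\|\hat\rho-\rho\|_2^2)$. Set $W_i=X_i-\rho$ and expand
\begin{equation}
n^2\|\hat\rho-\rho\|_2^2=\sum_i\|W_i\|_2^2+\sum_{i\ne k}\tr(W_iW_k).
\end{equation}
The diagonal sum and the off-diagonal U-statistic are uncorrelated: the cross covariances vanish because $\mathbb{E}W_k=0$.

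For the diagonal part, $\|W\|_2^2=D-2\tr(X\rho)+P_2$, so its variance is $4\,\mathrm{Var}\tr(X\rho)\le 4\tr\bigl((\rho\otimes\rho)\,\mathbb{E}X^{\otimes2}\bigr)=O(1)$. It therefore contributes $O(1/n^3)$.

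For the off-diagonal part, only matching pairs contribute, giving variance $2n(n-1)\,\mathbb{E}\tr(W_1W_2)^2/n^4$. Using $\tr(AB)^2=\tr\bigl((A\otimes A)(B\otimes B)\bigr)$ and independence, $\mathbb{E}\tr(W_1W_2)^2=\|M\|_2^2$ with $M=\mathbb{E}W^{\otimes2}=\mathbb{E}X^{\otimes2}-\rho\otimes\rho$. By \Cref{lem:uniform-povm-moments}, $\mathbb{E}X^{\otimes2}$ is $(d+2)^{-1}$ times an operator of norm at most $3$ multiplying $(d+1)\swap-I$. The latter has Frobenius norm squared $(d+1)^2d^2-2(d+1)d+d^2\le (d+1)^2d^2$. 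Hence $\|M\|_2\le 3d+1$, and so $\|M\|_2^2\le 10d^2$ for large $d$ (or with a slightly looser constant in general). This gives $\mathrm{Var}(\|\hat\rho-\rho\|_2^2)\le 20d^2/n^2+O(1/n^3)\le 24d^2/n^2$.

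The main obstacle is this last step: correctly identifying $\mathbb{E}\tr(W_1W_2)^2$ with $\|M\|_2^2$, and tracking constants so that the bound lands at $24d^2/n^2$. Along the way one must absorb the $O(1/n^3)$ diagonal term and the $\rho\otimes\rho$ correction, possibly splitting into the cases $n\ge d$ and $n<d$ to keep the constants clean.
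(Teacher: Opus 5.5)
Your proof is correct and follows the paper's argument essentially step for step: the same unbiasedness computation via $\mathbb{E}\tr(\hat\rho^2)=P_2+(D-P_2)/n$, the same law-of-total-variance split, the same bound $\mathbb{E}\tr(\rho\hat\rho^2)\le P_3+2d/n$, the same completing-the-square identity, and the same U-statistic decomposition. The diagonal term is controlled in both by $4\,\mathbb{E}\tr(\rho X)^2\le 12P_2$.

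The only divergence is how you bound $\mathbb{E}\tr(W_1W_2)^2$. The paper writes it as $\tr(\mathcal{C}^2)$ for the covariance operator on traceless Hermitian matrices and uses $\tr(\mathcal{C}^2)\le\|\mathcal{C}\|_\infty\tr(\mathcal{C})\le 3\cdot 2d^2$. You instead write the same number as $\|M\|_2^2$ with $M=\mathbb{E}W^{\otimes 2}$ on $\mathbb{C}^d\otimes\mathbb{C}^d$, and apply H\"older to the explicit formula for $\mathbb{E}X^{\otimes 2}$. Both routes work.

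To land exactly on the stated constant, though, avoid the triangle-inequality bound $\|M\|_2\le 3d+1$. At $d=2$ it gives $2(3d+1)^2=98>96=24d^2$, so your estimate does not yield $24d^2/n^2$ once $n$ is large. Instead, since $\mathbb{E}\tr(\rho X)^2\ge P_2^2$,
\begin{equation*}
\tr(M^2)=\|\mathbb{E}X^{\otimes2}\|_2^2-2\,\mathbb{E}\tr(\rho X)^2+P_2^2\le\|\mathbb{E}X^{\otimes2}\|_2^2\le\frac{9d^2(d+1)^2}{(d+2)^2}\le 9d^2 .
\end{equation*}
This gives $\mathrm{Var}\,\|\hat\rho-\rho\|_2^2\le 18d^2/n^2+12/n^3\le 24d^2/n^2$ for all $d\ge 2$ and $n\ge 1$. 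No case split on $n$ versus $d$ is needed.
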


Before proving the above lemma, we use it to prove our main upper bound.

\begin{proof}[Proof of \Cref{thm:adaptive}]
  Let $C > 0$ be a suitably large constant, and take $n = \left\lceil C \frac{d^2}{\varepsilon}\right\rceil$ and $m = \left\lceil C \left(\frac{\sqrt{d}}{\varepsilon^2} + \frac{d}{\varepsilon}\right)\right\rceil$. \Cref{alg:adaptive} consumes $n + m$ copies, which has the stated copy complexity. It remains to show correctness, for which it suffices to bound the variance.
  By \Cref{lem:adaptive-estimator-variance}, we have
  \begin{align}
    \mathrm{Var}[Z] &\leq \bigo\left(
      \frac{d^2}{n^2} + \frac{P_3}{m} + \frac{d}{mn}
    \right)
    \\&\leq \bigo\left(\frac{\varepsilon^2}{d^2} + \frac{P_2^{3/2} \varepsilon^2}{\sqrt{d}} + \frac{\varepsilon^2}{d^2}\right)
    \\&\leq \bigo\left(\varepsilon^2 P_2^2\right). \label{eq:adaptive-theorem-variance-bound}
  \end{align}
  In the second inequality, we use $n \geq Cd^2/\eps$, $m \geq C\sqrt{d}/\eps^2$ and $P_3 \leq P_2^{3/2}$, and then $n \geq Cd^2/\eps$ and $m \geq Cd/\eps$. In the final step, we use the fact that all $d$-dimensional states have purity at least $\frac1d$.

  Now, by \Cref{lem:adaptive-estimator-variance}, $Z$ is an unbiased estimator for the purity. Taking $C$ sufficiently large and applying Chebyshev's inequality, \Cref{eq:adaptive-theorem-variance-bound} implies
  \begin{equation}
    \mathbf{Pr}[|Z - \tr(\rho^2)| \geq \eps \cdot \tr(\rho^2)] \leq \frac13,
  \end{equation}
  proving correctness.
\end{proof}

It only remains to prove \Cref{lem:adaptive-estimator-variance}, which we do next.
\begin{proof}[Proof of \Cref{lem:adaptive-estimator-variance}]
  We will first show that our estimator is unbiased. By \Cref{lem:uniform-povm-moments}, we have $\mathbb{E}[\hat{\rho}] = \mathbb{E}[X] = \rho$. By direct calculation, $\tr(X^2) = d^2 + d - 1$.
  Consequently,
  \begin{align}
    \mathbb{E}[\tr(\hat{\rho}^2)] &= \frac{1}{n^2} \sum_{i = 1} \mathbb{E}[\tr(X_i^2)] + \sum_{i \neq j} \frac{1}{n^2} \mathbb{E}[\tr(X_i X_j)]
    \\&= \frac{D}{n} + \frac{n-1}{n} \tr(\mathbb{E}[X]^2)
    \\&= \frac{D}{n} + \frac{n-1}{n} P_2,
  \end{align} 
  where the second line follows by independence of $X_i$ and $X_j$. Moreover, as $\mathbb{E}[\hat{\rho}] = \rho$, we have $\mathbb{E}[\overline{Y}] = P_2$. 
  Substituting these into the expectation of our estimator $Z$, we get
  \begin{align}
    \mathbb{E}[Z] &= \frac{2 \mathbb{E}[\overline{Y}] - \mathbb{E}[\tr(\hat{\rho}^2)] + D/n}{1 + 1/n}
    \\&= \frac{P_2 \cdot \left(2 - \frac{n-1}{n}\right) - \frac{D}{n} + \frac{D}{n}}{1 + 1/n} \\&= P_2,
  \end{align}
  as desired.
  
  We will now bound the variance of our estimator. By the law of total variance, we have
  \begin{equation}
  \label{eq:adaptive-variance-split}
      \mathrm{Var}[Z] = \mbb{E}[\mathrm{Var} [Z | \hat{\rho}]] + \mathrm{Var}[\mbb{E}[Z | \hat{\rho}]].
  \end{equation}
  We start by bounding the former term. By definition, we have
  \begin{equation}
      \mathrm{Var} [Z | \hat{\rho}] = \frac{4}{(1+1/n)^2} \mathrm{Var} [\overline{Y} | \hat{\rho}].
  \end{equation}
  When estimating an observable $O$ on a single copy of state $\rho$, the variance is given by $\tr(\rho O^2) - \tr(\rho O)^2$. Averaging over the $m$ copies, we get
  \begin{equation}
      \mathrm{Var} [\overline{Y} | \hat{\rho}] = \frac{\tr(\rho \hat{\rho}^2) - \tr(\rho \hat{\rho})^2}{m} \leq \frac{\tr(\rho \hat{\rho}^2)}{m}. 
  \end{equation}
  Consequently, 
  \begin{equation}
      \mbb{E}[\mathrm{Var} [Z | \hat{\rho}]] \leq \frac{4}{m} \mathbb{E}[\tr(\rho \hat{\rho}^2)].
  \end{equation}
  By independence of the uniform POVM outcomes, we have
  \begin{align}
      \mathbb{E}[\hat{\rho}^2] = \frac1n \mathbb{E}[X^2] + \frac{n-1}{n} \rho^2.
  \end{align}
  Note that $X^2 = (d^2-1) \ket{\psi}\bra{\psi} + I$, and so by rearranging terms in \Cref{lem:uniform-povm-moments}, we get
  \begin{equation}
      \mathbb{E}[X^2] = (d-1) \rho + d \cdot I.
  \end{equation}
  Thus,
  \begin{equation}
      \mathbb{E}[\tr(\rho \hat{\rho}^2)] = \frac{(d-1) P_2 + d}{n} + \frac{n-1}{n} P_3 \leq \frac{2d}{n} + P_3 \implies \mbb{E}[\mathrm{Var} [Z | \hat{\rho}]] \leq \frac{8d}{nm} + \frac{4P_3}{m}.
      \label{eq:adaptive-variance-pt1}
  \end{equation}

  We will now bound the second term, i.e., $\mathrm{Var}[\mbb{E}[Z | \hat{\rho}]]$. The inner conditional expectation is given by
  \begin{equation}
      \mathbb{E}[Z | \hat{\rho}] = \frac{2\tr(\rho \hat{\rho}) - \tr(\hat{\rho}^2) + D/n}{1 + 1/n} = \frac{\tr(\rho^2) - \|\rho - \hat{\rho}\|_2^2 + D/n}{1+1/n}.
  \end{equation}
  Thus,
  \begin{equation}
      \mathrm{Var}[\mbb{E}[Z | \hat{\rho}]] \leq \mathrm{Var}[\|\rho - \hat{\rho}\|_2^2].
  \end{equation}

  To bound this variance, it will be convenient to define $W = X - \rho$, with $W_1, \dots, W_n$ independent copies of this random variable associated with the uniform POVM outcomes. Note that $W$ is traceless and has mean zero. We will view $W$ as a vector in the Hilbert space of traceless Hermitian operators. On this space, we define the covariance operator 
  \begin{equation}
      \mathcal{C}(A) \coloneqq \mathbb{E} [ W \tr(WA)], \quad \forall A \in \mathbb{C}^{d \times d}, \tr(A) = 0, A = A^\dag.
  \end{equation}
  Returning to the variance, we have
  \begin{align}
      \mathrm{Var}[\|\rho - \hat{\rho}\|_2^2] &= \mathrm{Var}\left[
        \frac{1}{n^2} \sum_i \|W_i\|_2^2 + \frac{2}{n^2} \sum_{i < j} \tr(W_i W_j)
      \right]
      \\&= \frac{1}{n^3} \mathrm{Var}[\|W\|_2^2] + \frac{4 \binom{n}{2}}{n^4} \mathrm{\Var}[\tr(W_1 W_2)]
      \\&= \frac{1}{n^3} \mathrm{Var}[\|W\|_2^2] + \frac{2(n-1)}{n^3} \mathrm{\Var}[\tr(W_1 W_2)],
      \label{eq:adaptive-variance-split-again}
  \end{align}
  where the penultimate equality uses the fact that the $W_i$s being centered implies that each pair of terms has zero correlation.

  Let us first bound the first term, i.e., $\mathrm{Var}[\|W\|_2^2]$. We rewrite $\|W\|_2^2$ as follows.
  \begin{equation}
      \|W\|_2^2 = \tr(W^2) = \tr(X^2) + \tr(\rho^2) - 2\tr(\rho X) = D + P_2 - 2\tr(\rho X),
  \end{equation}
  where recall that $\tr(X^2)$ is deterministic and equals $D$.
  Consequently,
  \begin{equation}
      \mathrm{Var}[\|W\|_2^2] = 4 \mathrm{Var}[\tr(\rho X)] \leq 4\mathbb{E}[\tr(\rho X)^2].
  \end{equation}
  To bound the latter term, we use \Cref{lem:uniform-povm-moments} to obtain
  \begin{align}
      \mathbb{E}[\tr(\rho X)^2] &= \tr(\rho^{\otimes 2} \mathbb{E}[X^{\otimes 2}])
      \\& \leq \frac{d+1}{d+2} \tr((\rho^{\otimes 2} + \rho^2 \otimes \rho + \rho \otimes \rho^2) \cdot \swap)
      \\&\leq 3\tr(\rho^2).
  \end{align}
  Thus,
  \begin{equation}
      \mathrm{Var}[\|W\|_2^2] \leq 12 \tr(\rho^2).
      \label{eq:adaptive-variance-3}
  \end{equation}
  We will now bound the latter term in \Cref{eq:adaptive-variance-split-again}. Let $\{E_\alpha\}$ be an orthonormal basis for traceless Hermitian matrices, and let $ W = \sum_\alpha w_\alpha E_\alpha$ be the decomposition of $W$ in this basis. Then, returning to the covariance operator, we can write
  \begin{equation}
      \mathcal{C}_{\alpha,\beta} = \mathbb{E}[\tr(W E_\alpha) \tr(W E_\beta)] = \mathbb{E}[w_\alpha w_\beta].
  \end{equation}
  Now, 
  \begin{equation}
      \mathbb{E}[\tr(W_1 W_2)^2] = \mathbb{E}\left[\left(\sum_\alpha w_{1,\alpha} w_{2,\alpha}\right)^2\right] = \sum_{\alpha,\beta} \mathbb{E}[w_{1,\alpha} w_{1,\beta}] \mathbb{E}[w_{2,\alpha} w_{2,\beta}] = \sum_{\alpha,\beta} \mathcal{C}_{\alpha, \beta}^2 = \tr (\mathcal{C}^2).
      \label{eq:adaptive-variance-5}
  \end{equation}
  We will now analyze the eigenvalues of the covariance operator $\mathcal{C}$. First, let us bound its trace.
  \begin{equation}
      \tr(\mathcal{C}) = \sum_\alpha \tr(E_\alpha C(E_\alpha)) = \sum_\alpha \mathbb{E} [\tr(W E_\alpha)^2] = \mathbb{E} [\|W\|_2^2].
  \end{equation}
  Recall that $\|W\|_2^2 = D + P_2 - 2\tr(\rho X)$, and $\mathbb{E}[X] = \rho$, implying 
  \begin{equation}
      \tr(\mathcal{C}) = D - P_2 = d^2 + d - 1 - P_2 \leq 2d^2.
  \end{equation}
  Note that $\mathcal{C} \succeq 0$, as $\langle A, \mathcal{C}(A) \rangle = \mathbb{E} [\tr(WA)^2] \geq 0$ for all traceless Hermitian matrices $A$. We will now upper bound the operator norm of $\mathcal{C}$. For any valid $A$,
  \begin{align}
      \langle A, \mathcal{C}(A) \rangle &= \mathbb{E} [\tr(WA)^2]
      \\&= \mathrm{Var} [\tr(XA)] \leq \mathbb{E}[\tr(XA)^2]
      \\&= \tr(A^{\otimes 2} \mathbb{E}[X^\otimes 2])
      \\&= \frac{d+1}{d+2} \tr( A^{\otimes 2} (I^{\otimes 2} + \rho \otimes I + I \otimes \rho)\cdot \swap)
      \\&= \frac{d+1}{d+2} (2\tr(\rho A^2) + \tr(A^2)) \leq 3 \tr(A^2).
  \end{align}
  The second equality uses $W = X - \mathbb{E}[X]$, the third line uses \Cref{lem:uniform-povm-moments} and the fact that $A$ is traceless, and the last line uses $\rho \preceq I$ and that $A^2$ is psd. We have thus shown that $\tr(\mathcal{C}) \leq 2d^2$ and $0 \leq \mathcal{C} \leq 3 I$; together, these imply $\tr(\mathcal{C}^2) \leq 6d^2$. Along with \Cref{eq:adaptive-variance-5} and \Cref{eq:adaptive-variance-split-again}, we get
  \begin{equation}
      \mathrm{Var}[\tr(W_1 W_2)] = \mathbb{E}[\tr(W_1 W_2)^2] = \tr(\mathcal{C}^2) \leq 6d^2.
      \label{eq:adaptive-variance-4}
  \end{equation}
  Now, substituting \Cref{eq:adaptive-variance-3,eq:adaptive-variance-4} into \Cref{eq:adaptive-variance-split-again}, we get
  \begin{equation}
      \label{eq:adaptive-variance-pt2}
      \mathrm{Var}[\|\rho - \hat{\rho}\|_2^2 ] \leq \frac{12 P_2}{n^3} + \frac{12d^2}{n^2} \leq \frac{24d^2}{n^2} \implies \mathrm{Var}[\mathbb{E}[Z | \hat{\rho}]] \leq \frac{24d^2}{n^2}.
  \end{equation}
  Finally, \Cref{eq:adaptive-variance-pt1,eq:adaptive-variance-pt2} and the law of total variance (\Cref{eq:adaptive-variance-split}) imply
  \begin{equation}
      \mathrm{Var}[Z] \leq \frac{24d^2}{n^2} + \frac{4P_3}{m} + \frac{8d}{nm},
  \end{equation}
  concluding the proof of the lemma.
\end{proof}

%% file: sections/5_nonadaptive_lower_bounds.tex
\section{Non-adaptive lower bounds}
\label{sec:lower-non-adaptive}

We now prove a matching lower bound for $(d,\eps)$-purity estimation with incoherent non-adaptive measurements.

\begin{theorem}[Non-adaptive lower bound]\label{thm:na-lower-main}
There exists a universal constant $\varepsilon_0>0$ such that, for all sufficiently large $d$ and every $0<\varepsilon\leq\varepsilon_0$, the copy complexity of $(d,\eps)$-purity estimation with incoherent non-adaptive measurements is at least
\begin{equation}
    \Omega\Bigl(\max\Bigl\{\frac{d^{3/2}}{\varepsilon},\frac{d}{\varepsilon^2}\Bigr\}\Bigr).
     \label{eq:na-lower-main}
\end{equation}
\end{theorem}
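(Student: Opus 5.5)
The lower bound splits into two terms, and I will prove each by a separate reduction from a distinguishing problem.

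\textbf{The $d^{3/2}/\varepsilon$ term.} This comes from mixedness testing and holds even for adaptive protocols.
\begin{itemize}
\item Let $\rho$ be a state with $\|\rho - I/d\|_1 \ge \alpha$. By Cauchy--Schwarz, $\|\rho-I/d\|_2^2\ge \alpha^2/d$, so $P_2 \ge (1+\alpha^2)/d$.
\item Take $\alpha = c\sqrt{\varepsilon}$ for a suitable constant $c$. Then the relative-error intervals around $1/d$ and around $(1+\alpha^2)/d$ are disjoint.
\item Consequently a $(d,\varepsilon)$-purity estimator solves mixedness testing at trace distance $\Theta(\sqrt{\varepsilon})$.
\item Invoking the $\Omega(d^{3/2}/\alpha^2)$ lower bound of \cite{chen2022tight} gives $\Omega(d^{3/2}/\varepsilon)$.
\end{itemize}

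\textbf{The $d/\varepsilon^2$ term: setup.} I use the spiked instance $\rho_{\theta,v}=(1-\theta)I/d+\theta\ket{v}\bra{v}$, with $v$ Haar-random and shared across all copies.
\begin{itemize}
\item Set $\theta_0=d^{-1/2}$ and $\theta_1=(1+8\varepsilon)d^{-1/2}$.
\item The purities are $1/d+(1-1/d)\theta_i^2$, i.e. roughly $2/d$ versus $(2+16\varepsilon)/d$. For small $\varepsilon$ the $\varepsilon$-relative intervals are disjoint, so an estimator distinguishes the two mixtures with advantage $1/3$.
\item A non-adaptive protocol fixes its POVMs $M^{(1)},\dots,M^{(n)}$ given shared randomness, independently of $v$. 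Conditioning on that randomness, it suffices to bound $\dtv$ between the transcript laws $\mathbb{E}_v\bigotimes_t p^{(t)}_{\theta_1,v}$ and $\mathbb{E}_v\bigotimes_t p^{(t)}_{\theta_0,v}$.
\item By joint convexity of KL, this divergence is at most $\mathbb{E}_v\sum_t \dkl(p^{(t)}_{\theta_1,v}\|p^{(t)}_{\theta_0,v})$. This is the step where non-adaptivity is used: the product structure given $v$ lets the KL tensorize before averaging over $v$.
\end{itemize}

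\textbf{Per-copy bound.} It remains to show that for any fixed POVM $\{M_x\}$,
\[
\mathbb{E}_v\, \dkl(p_{\theta_1,v}\|p_{\theta_0,v}) = \bigo(\varepsilon^2/d).
\]
\begin{itemize}
\item Bound KL by $\chi^2$. Write $p_{\theta,v}(x)=(1-\theta)m_x+\theta\, u_x$, where $m_x=\tr(M_x)/d$ and $u_x=\langle v|M_x|v\rangle$.
\item Then $p_{\theta_1}-p_{\theta_0}=\delta(u_x-m_x)$ with $\delta=8\varepsilon d^{-1/2}$.
\item Since $\theta_0\le 1/2$, we have $p_{\theta_0}(x)\ge m_x/2$. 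Hence
\[
\chi^2 \le 2\delta^2\sum_x (u_x-m_x)^2/m_x .
\]
\item By \Cref{lem:haar-moments}, $\mathbb{E}_v u_x=m_x$ and $\mathbb{E}_v u_x^2=(\tr(M_x)^2+\tr(M_x^2))/(d(d+1))$.
\item Therefore $\mathrm{Var}(u_x)\le \tr(M_x^2)/(d(d+1)) \le \tr(M_x)\|M_x\|_\infty/d^2 \le m_x/d$.
\item Summing over outcomes gives $\mathbb{E}_v\chi^2\le 2\delta^2=\bigo(\varepsilon^2/d)$.
\end{itemize}

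\textbf{Conclusion and main obstacle.} With $n$ copies the total KL is $\bigo(n\varepsilon^2/d)$. By Pinsker, constant advantage requires $n=\Omega(d/\varepsilon^2)$, and combining with the first term gives \eqref{eq:na-lower-main}.

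I expect the main obstacle to be the per-copy divergence bound. The $\chi^2$ step has to be done carefully using the floor $p_{\theta_0}\ge m_x/2$, and for continuous POVMs the sum must be handled as an integral against the POVM measure. Shared randomness requires a small extra check: I average the advantage over it, and the per-copy bound holds uniformly for every fixed schedule, so this goes through.
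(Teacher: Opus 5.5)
Your overall route is the same as the paper's. The $d^{3/2}/\varepsilon$ term comes from the mixedness-testing bound of \cite{chen2022tight}; the paper takes $\eta^2=4\varepsilon$, i.e.\ your $c=2$. The $d/\varepsilon^2$ term uses the same Haar-spiked instance with $\theta_0=d^{-1/2}$ and $\theta_1=(1+8\varepsilon)d^{-1/2}$, joint convexity of KL to tensorize given $v$, a per-copy $\chi^2$ bound from Haar second moments, and Pinsker.

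However, the last step of your per-copy bound does not go through as written. From $\mathrm{Var}(u_x)\le m_x/d$ you get $\mathrm{Var}(u_x)/m_x\le 1/d$ for each outcome. Summing therefore gives $\sum_x \mathbb{E}_v(u_x-m_x)^2/m_x\le K/d$, where $K$ is the number of outcomes, and this is not $\bigo(1)$. Since the POVM is arbitrary, $K$ can be far larger than $d$. Examples are a rank-one POVM $M_x=(d/K)\ket{\phi_x}\bra{\phi_x}$ with $K\gg d$, or the continuous uniform POVM; for these your bound is useless. Even after first reducing to extremal POVMs with at most $d^2$ outcomes, your chain only gives $\bigo(\varepsilon^2)$ per copy, hence only an $\Omega(1/\varepsilon^2)$ bound. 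The loss comes from replacing $\|M_x\|_\infty$ by $1$.

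The fix, which is what the paper does, is to use positivity more sharply: $\|M_x\|_\infty\le\tr(M_x)$, i.e.\ $\tr(M_x^2)\le\tr(M_x)^2$. Then
\[
\frac{\mathrm{Var}(u_x)}{m_x}\le\frac{\tr(M_x^2)}{(d+1)\tr(M_x)}\le\frac{\tr(M_x)}{d+1}.
\]
Completeness $\sum_x M_x=I$ gives $\sum_x\tr(M_x)=d$, so $\sum_x\mathbb{E}_v(u_x-m_x)^2/m_x\le d/(d+1)$ uniformly over all POVMs. For continuous POVMs, replace sums by integrals against a dominating measure. With this correction you get $\mathbb{E}_v\chi^2\le 2\delta^2=\bigo(\varepsilon^2/d)$. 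The rest of your argument, including the handling of shared randomness, is then fine.
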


The two terms arise from different testing problems. The former term follows immediately from a prior mixedness testing lower bound. To obtain the $d/\varepsilon^2$ term, we compare two states with a weak rank-one spike in an unknown Haar-random direction. Non-adaptivity is essential to this second argument: every measurement must be selected before acquiring any information about the spike direction. Throughout the proof, $\varepsilon_0$ denotes a universal constant chosen small enough for both reductions below. 

We start by proving the bound implied by mixedness testing.

\begin{proposition}[Mixedness contribution]
\label{prop:na-lower-mixedness}
For all sufficiently large $d$ and $0<\varepsilon\leq\varepsilon_0$,
the copy complexity of $(d,\eps)$-purity estimation with single-copy measurements is at least
$\Omega(d^{3/2}/\varepsilon)$. This bound also holds when the measurements are adaptive.
\end{proposition}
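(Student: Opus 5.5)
We reduce mixedness testing to purity estimation and then invoke the adaptive single-copy lower bound of~\cite{chen2022tight}. That result states that any incoherent, possibly adaptive, algorithm that distinguishes $\rho=I/d$ from an arbitrary state $\rho$ with $\|\rho-I/d\|_1\geq\alpha$ with probability at least $2/3$ needs $\Omega(d^{3/2}/\alpha^2)$ copies. We apply it with $\alpha=c\sqrt{\varepsilon}$ for a suitable constant $c$; this gives $\Omega(d^{3/2}/\varepsilon)$, as claimed.

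The reduction runs as follows. Write $\Delta=\rho-I/d$. Since $\Delta$ is traceless, $P_2=\tr(\rho^2)=1/d+\|\Delta\|_2^2$. Cauchy--Schwarz gives $\|\Delta\|_1\leq\sqrt d\,\|\Delta\|_2$. Hence, whenever $\|\Delta\|_1\geq\alpha$, we have
\begin{equation}
    P_2\geq \frac1d+\frac{\alpha^2}{d}=\frac{1+c^2\varepsilon}{d}.
\end{equation}
The tester runs the purity estimator, which is unchanged and so may be adaptive, and obtains $\hat P$. It declares ``mixed'' if $\hat P\leq (1+\varepsilon)/d$ and ``far'' otherwise.

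We now check correctness. If $\rho=I/d$, then with probability at least $2/3$ we have $\hat P\leq(1+\varepsilon)/d$, so the tester answers correctly. In the far case, with probability at least $2/3$,
\begin{equation}
    \hat P\geq(1-\varepsilon)P_2\geq\frac{(1-\varepsilon)(1+c^2\varepsilon)}{d}.
\end{equation}
Take $c^2=4$ and $\varepsilon\leq\varepsilon_0\leq1/8$. Then $(1-\varepsilon)(1+4\varepsilon)=1+3\varepsilon-4\varepsilon^2>1+\varepsilon$, so the tester again answers correctly. The tester uses exactly as many copies as the estimator, and the same single-copy (adaptive or non-adaptive) measurements. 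Therefore the lower bound of~\cite{chen2022tight} transfers directly, giving $\Omega(d^{3/2}/(4\varepsilon))=\Omega(d^{3/2}/\varepsilon)$.

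One point needs care. The cited theorem must apply in the parameter regime $\alpha=2\sqrt{\varepsilon}$, which requires $\alpha$ to be at most a small constant. This is why we need $\varepsilon\leq\varepsilon_0$ and $d$ sufficiently large. Beyond that, the argument is routine. Since the mixedness-testing bound already holds against adaptive protocols, the adaptive clause of the proposition follows with no extra work.
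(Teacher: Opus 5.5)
Your proposal is correct and follows the paper's proof essentially line for line. It uses the same reduction from mixedness testing via $P_2 = 1/d + \|\rho - I/d\|_2^2$ and $\|A\|_1 \le \sqrt{d}\,\|A\|_2$, the same choice $\alpha^2 = 4\varepsilon$, and the same threshold at $(1+\varepsilon)/d$. The one detail the paper makes explicit that you leave implicit is the constant: the mixedness-testing bound needs trace-distance parameter at most $1/12$, which forces $\varepsilon_0 \le 1/576$.
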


\begin{proof}
Chen, Huang, Li, and Liu~\cite{chen2022tight} show that distinguishing $\rho=I/d$ from $\|\rho-I/d\|_1>\eta$ with success probability at least $2/3$ requires
$\Omega(d^{3/2}/\eta^2)$ copies, even with adaptive single-copy POVMs, provided $0<\eta\leq1/12$ and $d$ is sufficiently large.

Set $\eta^2=4\varepsilon$. Choosing $\varepsilon_0\leq1/576$ ensures that $\eta\leq1/12$. Under the null hypothesis, $P_2(\rho)=1/d$. Under the alternative, the Schatten norm inequality
$\|A\|_1\leq\sqrt d\,\|A\|_2$ gives
\begin{equation}
 P_2(\rho)
 =\frac1d+\|\rho-I/d\|_2^2
 \geq\frac1d+\frac{\|\rho-I/d\|_1^2}{d}
 >\frac{1+4\varepsilon}{d}.
 \label{eq:na-lower-mixedness-gap}
\end{equation}
Consequently, a successful multiplicative estimate is at most $(1+\varepsilon)/d$ under the null and strictly larger than $(1-\varepsilon)(1+4\varepsilon)/d$ under the alternative. These ranges are disjoint because
\begin{equation}
    (1-\varepsilon)(1+4\varepsilon)-(1+\varepsilon)
 =2\varepsilon(1-2\varepsilon)>0.
\end{equation}
Thresholding $\widehat P_2$ anywhere between the two ranges solves the mixedness testing problem with the same success probability. The claimed bound follows from $\eta^2=4\varepsilon$.
\end{proof}

We now present our main lower bound for this section, i.e., the $\Omega(d/\eps^2)$ lower bound for non-adaptive algorithms. For a unit vector $v\in\mathbb{C}^d$ and a parameter $0\leq\theta<1$, consider
\begin{equation}
 \rho_{\theta,v}=(1-\theta)\frac Id+\theta|v\rangle\langle v|.
 \label{eq:na-lower-spike-state}
\end{equation}
Its purity is independent of the direction $v$:
\begin{equation}
 P_2(\rho_{\theta,v})
 =\frac1d+\left(1-\frac1d\right)\theta^2.
 \label{eq:na-lower-spike-purity}
\end{equation}
The following lemma bounds the information acquired about $\theta$ by
one measurement chosen independently of $v$.

\begin{lemma}[Information in one measurement of a Haar-random spike]
\label{lem:na-lower-haar-information}
Let $v$ be Haar distributed on the unit sphere of $\mathbb{C}^d$, and let $\{M_y\}_y$ be a fixed POVM. Write $p_{\theta,v}$ for its outcome distribution on $\rho_{\theta,v}$. For all $s,t\in[0,1)$,
\begin{equation}
 \mathbb{E}_v\operatorname{KL}(p_{t,v}\|p_{s,v})
 \leq\frac{(t-s)^2}{1-s}\frac{d-1}{d+1}
 \leq\frac{(t-s)^2}{1-s}.
 \label{eq:na-lower-haar-information}
\end{equation}
\end{lemma}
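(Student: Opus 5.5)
Throughout, write $M_y$ for the POVM elements and $m_y:=\tr(M_y)/d$. The plan is to bound the KL divergence by the $\chi^2$ divergence and then average over $v$ using the second Haar moment from \Cref{lem:haar-moments}.

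\textbf{Step 1: outcome probabilities.} For a fixed POVM, the outcome probability is
\begin{equation}
 p_{\theta,v}(y)=(1-\theta)m_y+\theta a_y(v), \qquad a_y(v):=\langle v|M_y|v\rangle .
\end{equation}
Hence $p_{t,v}(y)-p_{s,v}(y)=(t-s)(a_y-m_y)$ and $p_{s,v}(y)\geq(1-s)m_y$.

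\textbf{Step 2: pass to $\chi^2$.} Using $\operatorname{KL}\leq\chi^2$, we get pointwise in $v$
\begin{equation}
 \operatorname{KL}(p_{t,v}\|p_{s,v})\leq\sum_y\frac{(t-s)^2(a_y(v)-m_y)^2}{p_{s,v}(y)}\leq\frac{(t-s)^2}{1-s}\sum_y\frac{(a_y(v)-m_y)^2}{m_y}.
\end{equation}
Outcomes with $m_y=0$ have $M_y=0$, since $M_y\succeq0$ and $\tr M_y=0$. Such outcomes contribute nothing and can be discarded.

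\textbf{Step 3: average over $v$.} By \Cref{lem:haar-moments} with $r=1$, $\mathbb{E}_v a_y=m_y$. With $r=2$,
\begin{equation}
 \mathbb{E}_v a_y^2=\frac{\tr(M_y)^2+\tr(M_y^2)}{d(d+1)}.
\end{equation}
Therefore
\begin{equation}
 \operatorname{Var}(a_y)=\frac{d\tr(M_y^2)-\tr(M_y)^2}{d^2(d+1)}.
\end{equation}
Dividing by $m_y$ and summing over $y$ gives
\begin{equation}
 \sum_y\frac{\operatorname{Var}(a_y)}{m_y}=\frac{1}{d+1}\sum_y\Bigl(\frac{d\tr(M_y^2)}{\tr M_y}-\tr M_y\Bigr)\leq\frac{1}{d+1}\Bigl(d\sum_y\tr M_y-d\Bigr)=\frac{d(d-1)}{d+1}.
\end{equation}
This uses $\tr(M_y^2)\leq\tr(M_y)$, which holds because $0\preceq M_y\preceq I$, together with $\sum_y\tr M_y=\tr I=d$.

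\textbf{Where the argument falls short.} Combining Steps 2 and 3 yields
\begin{equation}
 \mathbb{E}_v\operatorname{KL}(p_{t,v}\|p_{s,v})\leq\frac{(t-s)^2}{1-s}\cdot\frac{d(d-1)}{d+1}.
\end{equation}
This is off from the claimed bound by exactly a factor $d$. The step I expect to be the main obstacle is recovering that $1/d$, and I have not resolved it.

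\textbf{What I would check first.} Before attempting a repair, I would test the claimed inequality on a simple case: $M_y=|y\rangle\langle y|$ in a fixed orthonormal basis, $s=0$, and $t$ small with $t\sqrt d\ll1$. Then $p_{0,v}(y)=1/d$ and $p_{t,v}(y)-1/d=t(|v_y|^2-1/d)$, so
\begin{equation}
 \chi^2(p_{t,v}\|p_{0,v})=d\,t^2\sum_y\bigl(|v_y|^2-1/d\bigr)^2.
\end{equation}
Its Haar average equals $t^2(d-1)/(d+1)$. Since $\operatorname{KL}\approx\chi^2/2$ in this regime, the expected KL is of order $t^2$. This is consistent with the lemma's form $(t-s)^2(d-1)/(d+1)$, not with my bound.

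\textbf{Locating the slack.} The discrepancy must therefore come from my normalization in Step 1. The entry $a_y=|v_y|^2$ has variance about $1/d^2$, not $1/d$. I would recompute $\operatorname{Var}(a_y)/m_y$ carefully. For a rank-one $M_y$ with $\tr M_y=1$, the formula in Step 3 gives $\operatorname{Var}(a_y)/m_y=(d-1)/(d(d+1))$. Summing over the $d$ outcomes gives $(d-1)/(d+1)$, which is correct. The loss comes only from the final inequality in Step 3, where I replaced $\tr(M_y^2)/\tr M_y$ by $1$. In that last display, $\sum_y\tr(M_y^2)/\tr M_y$ should be multiplied by $d/(d(d+1))$, not by $d/(d+1)$; this suggests a bookkeeping slip in how I divided by $m_y=\tr(M_y)/d$.

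\textbf{Planned repair.} I would redo the division by $m_y$ term by term. The aim is to obtain
\begin{equation}
 \sum_y\frac{\operatorname{Var}(a_y)}{m_y}=\frac{1}{d+1}\sum_y\Bigl(\frac{\tr(M_y^2)}{\tr M_y}-\frac{\tr M_y}{d}\Bigr).
\end{equation}
With $\tr(M_y^2)\leq\tr M_y$, the sum is at most $(1/(d+1))\bigl(\sum_y 1-1\bigr)$. For this to match the lemma, I would also need to count outcomes correctly after fine-graining. Splitting each $M_y$ into rank-one pieces only increases the KL divergence, by data processing. For rank-one pieces the ratio $\tr(M_y^2)/\tr M_y$ equals $\tr M_y$, so the sum becomes $\sum_y\tr M_y-1=d-1$, giving the factor $(d-1)/(d+1)$. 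I would verify this rank-one reduction and the corrected normalization in detail to finish the argument.
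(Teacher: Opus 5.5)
Your strategy is the paper's: bound $\operatorname{KL}$ by $\chi^2$, use the pointwise bound $p_{s,v}(y)\ge(1-s)\tr(M_y)/d$, and average with the first two Haar moments. No idea is missing. The factor-$d$ discrepancy is an arithmetic slip in Step 3.

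Dividing $\operatorname{Var}(a_y)=\frac{d\tr(M_y^2)-(\tr M_y)^2}{d^2(d+1)}$ by $m_y=\tr(M_y)/d$ multiplies it by $d/\tr M_y$. This gives
\[
\sum_y\frac{\operatorname{Var}(a_y)}{m_y}=\frac{1}{d(d+1)}\sum_y\Bigl(\frac{d\tr(M_y^2)}{\tr M_y}-\tr M_y\Bigr).
\]
The prefactor is $1/(d(d+1))$, not $1/(d+1)$, and this is the same expression you reach in your planned repair. Your Step 3 display bounds the bracket by $d\sum_y\tr M_y-d=d(d-1)$. With the correct prefactor, that yields exactly $\frac{d-1}{d+1}$, so Step 3 already finishes the proof once the division is fixed.

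That display, however, does not use the inequality $\tr(M_y^2)\le\tr M_y$ that you cite for it. It uses $\tr(M_y^2)\le(\tr M_y)^2$, which holds for every $M_y\succeq0$ because $\sum_i\lambda_i^2\le(\sum_i\lambda_i)^2$ for nonnegative eigenvalues. This is precisely the step the paper takes.

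In the planned repair you switch to $\tr(M_y^2)\le\tr M_y$. That only bounds the sum by $(k-1)/(d+1)$ for a POVM with $k$ outcomes. This exceeds the target whenever $k>d$ and is meaningless for continuous POVMs, which is why you were forced to fine-grain.

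The fine-graining detour is valid:
\begin{itemize}
\item Coarse-graining outcomes is a post-processing, so by data processing the KL for $\{M_y\}$ is at most the KL for a rank-one refinement.
\item Rank-one effects satisfy $\tr(N^2)=(\tr N)^2$ with equality.
\end{itemize}
It is unnecessary, though. Applying $\tr(M_y^2)\le(\tr M_y)^2$ directly to the original effects gives the lemma in one line. With either fix, your argument is complete and coincides with the paper's.
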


\begin{proof}
Omit zero POVM effects and set $a_y={\tr(M_y)}/d$, and $b_y(v)=\langle v|M_y|v\rangle$. Then $p_{\theta,v}(y)=a_y+\theta(b_y(v)-a_y)$ and
$p_{s,v}(y)\geq(1-s)a_y>0$. Bounding KL-divergence by $\chi^2$, we get
\begin{align}
\dkl(p_{t,v}\|p_{s,v})
 \leq\sum_y\frac{(p_{t,v}(y)-p_{s,v}(y))^2}{p_{s,v}(y)}\leq\frac{(t-s)^2}{1-s}\sum_y\frac{(b_y(v)-a_y)^2}{a_y}.
 \label{eq:na-lower-one-copy-kl}
\end{align}
The first two Haar moments give $\mathbb{E}_v b_y(v)=\tr(M_y)/d$, and
\begin{equation}
    \mathbb{E}_v b_y(v)^2
 =\frac{\operatorname{tr}(M_y)^2+\operatorname{tr}(M_y^2)}{d(d+1)}.
\end{equation}
It follows that
\begin{align}
 \sum_y\frac{\mathbb{E}_v(b_y(v)-a_y)^2}{a_y}
 &=\frac1{d(d+1)}\sum_y
   \left(\frac{d\operatorname{tr}(M_y^2)}{\operatorname{tr}(M_y)}
               -\operatorname{tr}(M_y)\right)\\
 &\leq\frac1{d(d+1)}\sum_y(d-1)\operatorname{tr}(M_y)
 =\frac{d-1}{d+1}.
\end{align}
The inequality uses positivity of $M_y$, which implies $\tr(M_y^2)\leq\operatorname{tr}(M_y)^2$, and the final equality uses $\sum_y M_y=I$. Taking expectations in~\Cref{eq:na-lower-one-copy-kl} proves the claim.

For a POVM with a general outcome space, take a scalar measure dominating the POVM and replace its effects by their matrix-valued densities. The same argument applies with sums replaced by integrals.
\end{proof}

\begin{proposition}[Non-adaptive precision contribution]\label{prop:na-lower-precision}
For $d\geq16$ and $0<\varepsilon\leq\varepsilon_0$, the copy complexity of $(d,\eps)$-purity estimation with non-adaptive measurements is at least $\Omega(d/\varepsilon^2)$.
\end{proposition}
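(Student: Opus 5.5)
We reduce purity estimation to a two-hypothesis testing problem and bound the information collected by any non-adaptive schedule. Set $\theta_0=d^{-1/2}$ and $\theta_1=(1+8\varepsilon)d^{-1/2}$, and consider the two mixtures in which $v$ is Haar random, drawn once, and $n$ copies of $\rho_{\theta_i,v}$ are supplied. By \Cref{eq:na-lower-spike-purity}, the purities are $P^{(i)}=\frac1d+(1-\frac1d)\theta_i^2$, so $P^{(0)}=\frac1d(2-\frac1d)$ and $P^{(1)}=\frac1d(1+(1-\frac1d)(1+8\varepsilon)^2)$. Their ratio is $1+\Omega(\varepsilon)$, with the constant near $8$ for large $d$. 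Hence, for $\varepsilon\le\varepsilon_0$ and $d\ge16$, the intervals $[(1-\varepsilon)P^{(i)},(1+\varepsilon)P^{(i)}]$ are disjoint. I will check this by the same kind of algebra as in \Cref{prop:na-lower-mixedness}. A successful estimator then tests the two mixtures with success probability $2/3$, so the total variation distance between the two transcript distributions must be at least $1/3$.

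Next I bound that TV distance. Condition on the shared randomness $R$, which fixes a POVM schedule $M^{(1)},\dots,M^{(n)}$ independently of $v$. The transcript law under hypothesis $i$ is the mixture $\mathbb{E}_v\bigotimes_j p^{(j)}_{\theta_i,v}$. By joint convexity of KL divergence (equivalently, the data processing inequality applied to the joint law of $(v,\text{transcript})$),
\begin{equation}
 \operatorname{KL}\Bigl(\mathbb{E}_v\textstyle\bigotimes_j p^{(j)}_{\theta_1,v}\,\Big\|\,\mathbb{E}_v\bigotimes_j p^{(j)}_{\theta_0,v}\Bigr)\le \mathbb{E}_v\sum_{j=1}^n \operatorname{KL}\bigl(p^{(j)}_{\theta_1,v}\|p^{(j)}_{\theta_0,v}\bigr).
\end{equation}
Here non-adaptivity is essential: each $M^{(j)}$ is fixed given $R$ and does not depend on $v$ or on earlier outcomes. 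So \Cref{lem:na-lower-haar-information} applies term by term and gives a bound of $n(\theta_1-\theta_0)^2/(1-\theta_0)=n\cdot 64\varepsilon^2/(d(1-d^{-1/2}))\le 128\,n\varepsilon^2/d$ for $d\ge4$. Averaging over $R$, using convexity of TV, and applying Pinsker's inequality yields $\dtv\le\sqrt{64 n\varepsilon^2/d}$. Requiring this to be at least $1/3$ forces $n\ge d/(576\varepsilon^2)$.

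The main delicacy is the coupling step. The naive route, chaining per-copy KL bounds through mixtures over $v$, fails because the outcomes are correlated through $v$. The fix is to bound by the divergence between the joint laws of $(v,y_{1:n})$, which are product measures given $v$ with the same marginal on $v$. The chain rule then gives exactly the sum of conditional KLs; this is valid precisely because, given $v$ and $R$, the copies are independent and their measurements are predetermined. I also need to handle continuous-outcome POVMs, which the remark after \Cref{lem:na-lower-haar-information} already covers, and to confirm that the validity constraint $\theta_1<1$ holds, which is immediate for $d\ge16$ and small $\varepsilon$.
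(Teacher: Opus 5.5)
Your proposal is correct and follows essentially the same route as the paper. The paper also uses the spiked hypotheses $\theta_0=d^{-1/2}$ and $\theta_1=(1+8\varepsilon)d^{-1/2}$ with a single shared Haar-random $v$, bounds the KL divergence of the mixtures by the averaged sum of per-copy divergences (log-sum inequality, then additivity across copies), applies \Cref{lem:na-lower-haar-information} term by term, and finishes with Pinsker to get the identical $n\geq d/(576\varepsilon^2)$. The only cosmetic difference is that you apply Pinsker for each seed and then average the TV distance over $R$, whereas the paper includes $R$ in the transcript and applies Pinsker once to the averaged KL divergence.
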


\begin{proof}
We compare two hypotheses $H_0$ and $H_1$, under which the unknown state is $\rho_{\theta_0,v}$ and $\rho_{\theta_1,v}$, respectively. Here $\theta_0=1/\sqrt d$ and $\theta_1=(1+8\varepsilon)/\sqrt d$. Under either hypothesis, $v$ is drawn once from the Haar measure and held fixed across all copies. Our parameter range ensures $0<\theta_0<\theta_1<1/2$. Let $P_2^{(b)}=P_2(\rho_{\theta_b,v})$ for $b\in\{0,1\}$; these are deterministic quantities, independent of $v$. By
\Cref{eq:na-lower-spike-purity},
\begin{align}
    P_2^{(0)}&=\frac{2-1/d}{d},\\
    P_2^{(1)}-P_2^{(0)}&=\frac{1-1/d}{d}(16\varepsilon+64\varepsilon^2)\geq4\varepsilon P_2^{(0)}.
\end{align}
Thus $(1-\varepsilon)P_2^{(1)}>(1+\varepsilon)P_2^{(0)}$, so the two
multiplicative-error intervals are disjoint. Thresholding a successful
purity estimator therefore distinguishes $H_0$ from $H_1$ with success
probability at least $2/3$ under each hypothesis, including after
averaging over $v$.

Let $R$ contain all classical randomness used to choose the measurements.
Conditioned on $R=r$, the protocol fixes a POVM on each of the $N$ copies;
write $p^{r,i}_{\theta,v}$ for the outcome distribution of its $i$th POVM.
Conditioned on both $r$ and $v$, the outcomes are independent. The
conditional transcript distribution under $H_b$ is consequently
\begin{equation}
    \mathsf{T}_b^r=\mathbb{E}_v\left[\bigotimes_{i=1}^N p^{r,i}_{\theta_b,v}\right].
\end{equation}
Let $\mathsf{T}_b$ denote the joint distribution of $R$ and all measurement
outcomes under $H_b$. The seed has the same distribution under both
hypotheses. Applying the log-sum inequality to the mixtures over $v$,
followed by additivity of KL-divergence for product distributions,
gives
\begin{align}
 \dkl(\mathsf{T}_1\|\mathsf{T}_0)
 &=\mathbb{E}_R\dkl(\mathsf{T}_1^R\|\mathsf{T}_0^R)\\
 &\leq\mathbb{E}_{R,v}\sum_{i=1}^N
       \dkl\left(p^{R,i}_{\theta_1,v}\|p^{R,i}_{\theta_0,v}\right)\\
 &\leq\frac{N(\theta_1-\theta_0)^2}{1-\theta_0}
 \leq\frac{128N\varepsilon^2}{d}.
 \label{eq:na-lower-transcript-kl}
\end{align}
The penultimate inequality is~\Cref{lem:na-lower-haar-information},
applied separately to each fixed POVM, and the last inequality uses
$\theta_0\leq1/2$ and $\theta_1-\theta_0=8\varepsilon/\sqrt d$.

A test succeeding with probability at least $2/3$ under each hypothesis
requires $\dtv(\mathsf{T}_1,\mathsf{T}_0)\geq1/3$. Pinsker's inequality and
\Cref{eq:na-lower-transcript-kl} therefore imply
\begin{equation}
    \frac19\leq\operatorname{TV}(\mathsf{T}_1,\mathsf{T}_0)^2\leq\frac12\operatorname{KL}(\mathsf{T}_1\|\mathsf{T}_0)\leq\frac{64N\varepsilon^2}{d}.
\end{equation}
In particular, $N\geq d/(576\varepsilon^2)$.
\end{proof}

Together, \Cref{prop:na-lower-mixedness,prop:na-lower-precision} immediately imply \Cref{thm:na-lower-main} for a universal $\varepsilon_0 > 0$ small enough for both propositions.

%% file: sections/6_adaptive_lower.tex
\section{Adaptive lower bounds}
\label{sec:lower-adaptive}

We will prove two separate lower bounds for purity estimation with adaptive measurements, the maximum of which yields the lower bound in \Cref{thm:intro-adaptive}. We first state the easier of the two, which follows immediately from prior work.

\begin{proposition}
\label{prop:adaptive-lower-easy}
    There exist absolute constants $\varepsilon_0 > 0$, $d_0 > 1$, such that for $d \geq d_0$ and $0 < \eps \leq \varepsilon_0$, the copy complexity of $(d,\varepsilon)$-purity estimation using incoherent measurements is at least
    \begin{equation}
        \Omega\left(
            \max\left\{
            \frac{d^{3/2}}{\varepsilon}, \frac{\sqrt{d}}{\varepsilon^2}
            \right\}
        \right).
    \end{equation}
\end{proposition}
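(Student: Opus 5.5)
The plan is to obtain each of the two terms separately from results already available and then take the maximum, choosing $\varepsilon_0$ and $d_0$ small and large enough for both reductions to apply at once.

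\emph{The $d^{3/2}/\varepsilon$ term.} This is exactly \Cref{prop:na-lower-mixedness}, whose proof already covers adaptive single-copy measurements. There, purity estimation to relative error $\varepsilon$ solves mixedness testing at trace-distance separation $\eta = 2\sqrt{\varepsilon}$, and the adaptive lower bound of~\cite{chen2022tight} is invoked. We only need the constant $\varepsilon_0 \le 1/576$ from that proposition and the threshold on $d$ required by~\cite{chen2022tight}.

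\emph{The $\sqrt d/\varepsilon^2$ term.} Here we invoke the lower bound of~\cite{AISW20}. It states that relative-error estimation of $\tr(\rho^2)$ needs $\Omega(\max\{\sqrt d/\varepsilon^2, d/\varepsilon\})$ copies even with arbitrary collective measurements. Any adaptive incoherent protocol on $N$ copies is a special case of a single joint POVM on $\rho^{\otimes N}$. To see this, view the adaptive decision tree, together with its shared randomness, as one measurement on $\rho^{\otimes N}$ whose effects are sums of tensor products of the adaptively chosen effects along each branch. Hence the collective lower bound transfers directly.

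The only care needed is to match the constant-probability, multiplicative-error formulation of \Cref{def:purity-estimation} with the formulation used in~\cite{AISW20}. For $\varepsilon \le 1/2$, relative error $\varepsilon$ in $P_2$ is equivalent up to constants to additive error $\Theta(\varepsilon)$ in $S_2 = -\log P_2$. We therefore rescale $\varepsilon$ by a constant and shrink $\varepsilon_0$ accordingly.

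If one prefers a self-contained argument for the $\sqrt d/\varepsilon^2$ term, a classical reduction also works. Restrict to states that are diagonal in a fixed basis, so that each copy is a sample from a distribution $p$ and $P_2 = \sum_i p_i^2$. Any incoherent measurement on a diagonal state can be simulated from a single sample of $p$, by first drawing $i \sim p$ and then measuring $\ket{i}\bra{i}$. The classical collision-probability lower bound $\Omega(\sqrt d/\varepsilon^2)$ of~\cite{AOST17} then applies.

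The main obstacle is only bookkeeping. We must verify that the cited bounds are stated for the same success probability and error regime. Success probability can be boosted with $\bigo(1)$ repetitions, so this affects only constants, and $\varepsilon_0$ and $d_0$ are chosen as the minimum and maximum, respectively, of the thresholds from the two reductions. Taking the maximum of the two bounds then gives the proposition.
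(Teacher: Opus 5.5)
Your proposal is correct and matches the paper's proof. The paper obtains the $d^{3/2}/\varepsilon$ term from \Cref{prop:na-lower-mixedness}, which already holds for adaptive measurements via~\cite{chen2022tight}, and the $\sqrt d/\varepsilon^2$ term from the collective-measurement lower bound of~\cite{AISW20}, since incoherent protocols are a special case; your optional diagonal-state reduction to~\cite{AOST17} is also valid but not needed.
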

\begin{proof}
    Recall \Cref{prop:na-lower-mixedness}, which showed that the $\Omega(d^{3/2}/\varepsilon)$ lower bound follows from the mixedness testing lower bound of \cite{chen2022tight}, even for adaptive measurements. For the latter term in the statement, we note that \cite{AISW20} show this $\Omega(\sqrt{d}/\varepsilon^2)$ lower bound even for purity estimation with \emph{fully entangled} measurements, implying the same lower bound for the weaker model of incoherent measurements. 
\end{proof}

We now state our main lower bound.

\begin{theorem}
\label{thm:adaptive-lower-main}
    There exist absolute constants $\varepsilon_0 > 0$, $d_0 > 1$, such that for $d \geq d_0$ and $0 < \eps \leq \varepsilon_0$, the copy complexity of $(d,\varepsilon)$-purity estimation using incoherent measurements is at least
    \begin{equation}
        \Omega\left(
            \min \left\{
            \frac{d}{\varepsilon^2}, \frac{d^2}{\varepsilon}
            \right\}
        \right).
    \end{equation}
\end{theorem}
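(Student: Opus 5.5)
The plan follows the outline in the technical overview: a Gaussian-encoded mixture-versus-mixture hard instance, a simulation reduction to adaptive noisy linear queries, and a Fisher-information lower bound for the resulting classical scale-testing problem.

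\textbf{Hard instance.} Fix an orthonormal basis $V_1,\dots,V_{d^2-1}$ of traceless Hermitian matrices. Draw $\bftheta\sim\mc N(0,v_iI_{d^2-1})$ and set $\rho_{\bftheta}=I/d+\Delta_{\bftheta}$. Choose
\begin{equation}
v_0=\frac{c}{d^3},\qquad v_1=(1+C\eps)\,v_0 ,
\end{equation}
for a small constant $c$ and a large constant $C$. With this choice:
\begin{itemize}
\item Since $\|\Delta_{\bftheta}\|_{\mathrm{op}}\lesssim\sqrt{v_0d}$, which is $O(\sqrt c/d)$, the operator $\rho_{\bftheta}$ is a valid state with high probability. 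This is \Cref{lem:instance-valid-state}.
\item The purity $\frac1d+\|\bftheta\|_2^2$ concentrates around $\frac1d+v_i(d^2-1)$ with relative fluctuation $O(1/d)\cdot(v_id^2)$.
\item The gap in expected purity is $C\eps v_0d^2$, which is $\Theta(\eps/d)$.
\item The fluctuation is $\sqrt{v_0^2 d^2}$, of order $c/d^2$, which is negligible against the gap once $\eps\gtrsim 1/d$.
\end{itemize}
If instead $\eps\ll1/d$, I increase $v_0$ in the construction and rely on concentration, which is \Cref{lem:instance-purity-separation}. Either way, a successful purity estimator, thresholded, distinguishes $v_0$ from $v_1$ with probability at least $2/3-o(1)$.

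\textbf{Reduction.} Invoke \Cref{lem:gaussian-simulation}. Each single-copy POVM on $\rho_\theta$ can be simulated exactly by an expected constant number of queries $Y=a^\top\bftheta+\bfZ$ with $\|a\|\le A$. Doing this for all $N$ measurements of the adaptive estimator gives an adaptive classical tester using $O(N)$ expected queries, which Markov's inequality lets me truncate to $O(N)$ deterministic queries. On the low-probability event that $\rho_{\bftheta}$ is not a state, the simulator may behave arbitrarily; this event only reduces the success probability by $o(1)$.

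\textbf{Gaussian testing lower bound (main step).} Parametrize the prior variance by $s\in[v_0,v_1]$ and let $p_s$ be the law of the length-$T$ transcript. By \Cref{fact:tv-and-fisher-info}, it suffices to show
\begin{equation}
I_s\lesssim T\cdot\min\Bigl\{\frac{A^4}{1},\ \frac{1}{s^2}\cdot\frac{1}{d^2}\Bigr\}\cdot(\text{normalization}),
\end{equation}
so that $\int\sqrt{I_s}\,ds\ll1$ unless $T\gtrsim\min\{d/\eps^2,d^2/\eps\}$.

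I would bound $I_s$ via the chain rule for Fisher information, $I_s=\sum_t\mathbb E[I_s(Y_t\mid Y_{<t})]$. Conditioned on the past, the posterior of $\bftheta$ is Gaussian with covariance $\Sigma_t(s)$ and mean $\mu_t(s)$, and these update explicitly by rank-one Kalman steps (\Cref{lem:posterior-updates}). Hence $Y_t$ given the past is $\mc N(a_t^\top\mu_t,\;1+a_t^\top\Sigma_ta_t)$, and the per-step Fisher information is
\begin{equation}
\frac{(\partial_s a^\top\mu_t)^2}{\sigma_t^2}+\frac12\Bigl(\frac{\partial_s(a^\top\Sigma_t a)}{\sigma_t^2}\Bigr)^2 .
\end{equation}

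The obstacle is controlling $\partial_s\Sigma_t$ and $\partial_s\mu_t$ uniformly over adaptive choices of $a_t$. The key facts to use are:
\begin{itemize}
\item $\Sigma_t=(s^{-1}I+\sum a_ja_j^\top)^{-1}$.
\item $\partial_s\Sigma_t=s^{-2}\Sigma_t^2$, so $a^\top\partial_s\Sigma_ta\le s^{-2}\|\Sigma_ta\|^2$.
\end{itemize}
These yield two regimes for the variance term. A direction that is still poorly explored gives $\lesssim (a^\top\Sigma a)^2/s^2$, which with $\|a\|^2 s\approx v d$ small yields the $d/\eps^2$ rate. A well-explored direction saturates at $O(1/s^2)$ per direction, but only $d^2$ directions exist, which yields the $d^2/\eps$ term.

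The mean-derivative term $\partial_s\mu_t$ needs a similar argument, bounding it in expectation by its contribution to the variance. Summing gives
\begin{equation}
\int_{v_0}^{v_1}\sqrt{I_s}\,ds\lesssim\eps\sqrt{T/d}+\sqrt{\eps\,T/d^2}\cdot O(1),
\end{equation}
which is $o(1)$ unless $T\gtrsim\min\{d/\eps^2,d^2/\eps\}$. I expect this per-query Fisher-increment bound, uniform over adaptive queries, to be the hardest step.
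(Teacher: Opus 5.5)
\textbf{The hard instance fails for $\eps\ll 1/d$, and the proposed fix goes the wrong way.} For $\eps\gtrsim 1/d$ your instance is fine; it is the paper's with $\beta=1$. With $v_1=(1+C\eps)v_0$, however, the purity gap is $C\eps Dv_0$ while the within-hypothesis fluctuation of $\|\bftheta\|_2^2$ is $\sqrt{2D}\,v_0$. Their ratio is about $C\eps d$ whatever $v_0$ is, so increasing $v_0$ cannot restore separation. In fact $\dtv\bigl(\mc{N}(0,v_0I_D),\mc{N}(0,(1+C\eps)v_0I_D)\bigr)=\bigo(\eps d)=o(1)$, so the priors are indistinguishable even given $\bftheta$ itself, and a purity estimator cannot separate them. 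Increasing $v_0$ is also forbidden by validity: $\|\Delta_{\bftheta}\|_\infty\approx 2\sqrt{v_0d}$ must stay below $1/(2d)$, which forces $v_0\lesssim d^{-3}$. The paper instead \emph{decreases} the variance to $v_0=c_0\min\{1,d\eps\}d^{-3}\asymp\eps/d^2$. It fixes the absolute purity gap at $10\eps/d$ via $\alpha=10\eps/(Ddv_0)$, which is $\Theta(1/d)$ rather than $\Theta(\eps)$ in this regime. The same bound $n\gtrsim 1/(\alpha^2v_0d^2)$ then evaluates to $d^2/\eps$. So the $d^2/\eps$ term comes from a second instance, not from saturation of Fisher information. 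A $T$-independent cap like $I\lesssim D/s^2$ can only show that a test is impossible; it can never give a lower bound on $T$ that grows with $1/\eps$. Your final display also adds contributions that belong to two different instances.

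\textbf{The main step is missing, and your heuristic picks the wrong dominant term.} Your per-step formula is correct; it is the paper's $\bigl(4(m^\top b)^2+2\|b\|_2^4\bigr)/\tau^6$ after reparametrizing by $\tau=\sqrt{v}$. But the variance part is harmless. Since $\Sigma_t\preceq sI$, it is at most $\frac12\min\{L^4,1/s^2\}$ per query. If that were the whole story, your first regime would give $T\gtrsim d^2/\eps^2$, which is impossible: the reduction applied to the $\bigo(d/\eps^2+d^{3/2}/\eps)$ non-adaptive estimator solves that test with fewer queries. The binding contribution is the mean term $(a^\top\Sigma_t\mu_t)^2/(s^4\sigma_t^2)$. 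Querying one coordinate about $1/(sL^2)$ times collects $\Theta(1/s^2)$ information, i.e.\ about $L^2/s\gg L^4$ per query amortized. Moreover, a single query aimed along $\Sigma_t\mu_t$, after many coordinates have been partially learned, contributes $\Theta(L^2D/s)$. So no per-query bound avoids losing a factor $D=d^2$, and ``bounding it by its contribution to the variance'' cannot work. The missing idea is the paper's amortization via the potential $K_j=m_j^\top C_jm_j$, using the parametrization $\bftheta\sim\mc{N}(0,\tau^2I_D)$. The Kalman update gives $(m^\top b)^2+\|b\|_2^4=K_j-\mathbb{E}[K_{j+1}\mid T_j]+b^\top Cb$, with $b^\top Cb\le a^\top C^3a\le\tau^6L^2$. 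Summing, the squared score increments telescope (using $K_0=0$ and $K_T\ge0$) to $I_\tau\le 4TL^2$ for every adaptive strategy. This gives $\dtv\le L\sqrt{T}\,|\sqrt{v_1}-\sqrt{v_0}|$. Your reduction step is fine and matches the paper.
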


Together, \Cref{prop:adaptive-lower-easy,thm:adaptive-lower-main} imply that the copy complexity of $(d,\eps)$-purity estimation is at least
\begin{equation}
    \Omega\left(
        \max \left\{
            \frac{d^{3/2}}{\varepsilon}, \frac{\sqrt{d}}{\varepsilon^2}, \min \left\{
                \frac{d}{\varepsilon^2}, \frac{d^2}{\varepsilon}
            \right\}
        \right\}
    \right).
\end{equation}
One can verify that the above lower bound precisely matches that of \Cref{thm:intro-adaptive} in every parameter regime by looking at the regimes established in \Cref{fig:main}.

The rest of this section is devoted to proving \Cref{thm:adaptive-lower-main}. We will prove this lower bound by reducing to a classical testing problem for multivariate Gaussian distributions from noisy linear queries, and then directly analyzing this classical task. Our reduction proceeds in two steps: first, we encode these Gaussian distributions into quantum states such that estimating their purity allows one to solve the testing problem; then, we show that any single-copy POVM on such states can be simulated by a constant expected number of noisy linear queries to the encoded Gaussians. The classical testing lower bound thus immediately implies the same bound (up to constant factors) for purity estimation. 

We present this reduction formally in \Cref{sec:gaussian-reduction}, and prove the hardness of the classical problem in \Cref{sec:gaussian-lower}. We then combine these ingredients to prove \Cref{thm:adaptive-lower-main} in \Cref{sec:adaptive-lower-final}.

\subsection{Gaussian testing reduction}
\label{sec:gaussian-reduction}

We will consider distinguishing between two mixtures of states. To define these mixtures, first pick any orthonormal basis of Hermitian traceless matrices; let these be $V_1, \dots, V_{d^2-1}$. For ease of notation, we let $D \coloneqq d^2-1$. Given any vector $\theta \in \mathbb{R}^D$, we define the operator
\begin{equation}
    \Delta_\theta = \sum_{r = 1}^D \theta_r V_r, \quad \rho_\theta \coloneqq \frac{I}{d} + \Delta_\theta.
\end{equation}

As the $V_r$s are traceless, so is $\Delta_\theta$. The operator $\rho_\theta$ is thus a valid density matrix whenever $\Delta_\theta \succeq -I/d$. Our mixtures will be obtained by sampling random parameter vectors $\bftheta$ from two different Gaussian distributions, and we will show that the resulting operators are valid quantum states with very high probability. 

To define our Gaussian priors, we choose the following parameters:
\begin{equation}
    \label{eq:gaussian-parameters}
    \beta = \min\{1,d\eps\}, \quad v_0 = c_0 \beta d^{-3}, \quad \alpha = \frac{10\eps}{D d v_0}, \quad v_1 = (1+\alpha)v_0,
\end{equation}
where $c_0 < 1$ is a sufficiently small positive constant. Throughout, we take $d \geq d_0$ and $\eps \leq \eps_0$, where $d_0,\eps_0$ are absolute constants chosen such that $\alpha < \frac14$ across all parameter regimes.

Then, we will sample vectors $\bftheta \sim \mc{N}(0, v_i I_D)$ for $i = 0$ or $1$, and consider the task of distinguishing between the two priors. The following lemma implies that the operators $\rho_{\bftheta}$ are likely to be valid quantum states in both settings.

\begin{lemma}
\label{lem:instance-valid-state}
    Let $\bftheta \sim \mc{N}(0, v_i I_D)$, for $i = 0$ or $1$. For an absolute constant $d_0 > 1$ with $d \geq d_0$,
    \begin{equation}
        \mathbf{Pr}\left[\|\Delta_{\bftheta}\|_\infty > \frac{1}{2d}\right] \leq .01.
    \end{equation}
\end{lemma}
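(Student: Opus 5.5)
The key observation is that $\Delta_{\bftheta}$ is exactly a scaled GUE-type matrix restricted to the traceless subspace. Since $V_1,\dots,V_D$ is an orthonormal basis (in Hilbert--Schmidt inner product) of traceless Hermitian matrices and $\bftheta\sim\mc{N}(0,v_iI_D)$ is isotropic, the distribution of $\Delta_{\bftheta}$ does not depend on the choice of basis. We may therefore realize it concretely as follows. Let $G$ be a $d\times d$ Hermitian matrix whose Hilbert--Schmidt coordinates in an orthonormal basis of \emph{all} Hermitian matrices, namely the basis $\{I/\sqrt d\}\cup\{V_r\}$, are i.i.d.\ $\mc{N}(0,v_i)$. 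Then
\begin{equation}
    \Delta_{\bftheta}\overset{d}{=}G-\frac{\tr(G)}{d}I,
\end{equation}
the orthogonal projection of $G$ onto the traceless subspace.

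The proof then has two steps.

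\textbf{Step 1: norm bound for $G$.} The matrix $G$ is a GUE matrix with entry variance of order $v_i$: the diagonal entries are $\mc{N}(0,v_i)$, and the off-diagonal real and imaginary parts are $\mc{N}(0,v_i/2)$. A standard operator-norm bound for GUE gives
\begin{equation}
    \mbb{E}\|G\|_\infty\le 2\sqrt{d v_i}.
\end{equation}
Alternatively, a Gaussian $\varepsilon$-net argument over the unit sphere gives $C\sqrt{dv_i}$ for an absolute constant $C$, which suffices here. The map $\bftheta\mapsto\|G\|_\infty$ is $1$-Lipschitz with respect to the Hilbert--Schmidt (i.e.\ Euclidean) norm, because $\|\cdot\|_\infty\le\|\cdot\|_2$. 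Gaussian concentration therefore yields
\begin{equation}
    \Pr\bigl[\|G\|_\infty>C\sqrt{dv_i}+t\sqrt{v_i}\bigr]\le e^{-t^2/2}.
\end{equation}
The trace correction is harmless: $\tr(G)/d$ has variance $v_i/d$, so $\|\tr(G)I/d\|_\infty=O(\sqrt{v_i})$ with high probability. In fact, projecting onto the traceless subspace is itself a contraction, so one could apply Lipschitz concentration directly to $\|\Delta_{\bftheta}\|_\infty$, and only the bound on its expectation would need the triangle inequality.

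\textbf{Step 2: plug in the parameters.} We have $v_i\le(1+\alpha)v_0\le\tfrac54 c_0\beta d^{-3}\le\tfrac54 c_0 d^{-3}$. Hence
\begin{equation}
    \sqrt{dv_i}\le\sqrt{2c_0}\,d^{-1}.
\end{equation}
Taking $t=4$, so that the tail probability $e^{-8}$ is below $.01$, we obtain with probability at least $0.99$ that
\begin{equation}
    \|\Delta_{\bftheta}\|_\infty\le (C+O(d^{-1/2}))\sqrt{2c_0}\,d^{-1}.
\end{equation}
This is at most $1/(2d)$ once $c_0$ is a small enough absolute constant and $d\ge d_0$.

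\textbf{Main obstacle.} The only delicate point is the justification of the basis-independence and GUE identification in Step 1. I would handle it by noting that an orthogonal change of orthonormal basis of $\mathbb{R}^D$ preserves the law of an isotropic Gaussian vector. Consequently, $\Delta_{\bftheta}$ equals in law the standard isotropic Gaussian on the traceless Hermitian space, with Hilbert--Schmidt variance $v_i$ per coordinate. This is exactly the traceless projection of $G$ written above, and the rest is a citation of the GUE norm bound together with Gaussian concentration.
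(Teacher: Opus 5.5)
Your proof is correct, but it is organized differently from the paper's.

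\textbf{The paper's route.} It never identifies the law of $\Delta_{\bftheta}$ with a traceless GUE, and it never uses concentration of the norm. It fixes a $1/4$-net $\mathcal{C}$ of the unit sphere with $|\mathcal{C}|\le 9^{2d}$. Each quadratic form $\langle u|\Delta_{\bftheta}|u\rangle=\tr(P_u\Delta_{\bftheta})$, with $P_u=|u\rangle\langle u|-I/d$, is a centered Gaussian with variance $v_i\|P_u\|_2^2\le v_i\le 2c_0d^{-3}$, directly from orthonormality of the $V_r$. The paper then applies the scalar Gaussian tail at level $1/(4d)$ and union-bounds over the net. It concludes via $\|\Delta\|_\infty\le 2\max_{u\in\mathcal{C}}|\langle u|\Delta|u\rangle|$, with failure probability about $\exp(2d\log 9-d/(64c_0))$.

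\textbf{Your route.} You split the task into a bound on $\mathbb{E}\|\Delta_{\bftheta}\|_\infty$ plus Gaussian Lipschitz concentration.
\begin{itemize}
\item What it buys: the sharp order $\approx 2\sqrt{dv_i}$ and a clear picture of a typical size plus an $O(\sqrt{v_i})$ fluctuation.
\item What it costs: you must cite a GUE operator-norm bound and a concentration inequality. In your fallback you run a net argument for the mean anyway, which is essentially the paper's whole proof.
\item By comparison, the paper's version is fully elementary and gives failure probability exponentially small in $d$, rather than the constant $e^{-8}$.
\end{itemize}

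\textbf{The step you flagged is not needed.} The basis-independence and GUE identification you called the main obstacle is unnecessary for the Lipschitz step. The $1$-Lipschitz property of $\theta\mapsto\|\Delta_\theta\|_\infty$ holds in any orthonormal basis, since $\|\Delta_\theta-\Delta_{\theta'}\|_\infty\le\|\Delta_\theta-\Delta_{\theta'}\|_2=\|\theta-\theta'\|_2$. The per-direction variance is also basis-free. The identification only matters if you want to quote the sharp GUE constant. Whether the clean constant $2$ holds for the two-sided norm is immaterial, because any absolute constant $C$ suffices once $c_0$ is small.
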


\begin{proof}
    For a unit vector $u \in \mathbb{C}^{d}$, define the random variable $X_u \coloneqq \bra{u} \Delta_{\bftheta} \ket{u}$. Let $P_u \coloneqq \ket{u}\bra{u} - \frac{I}{d}$ denote the projection of $\ket{u}\bra{u}$ into the space of traceless Hermitian matrices, and observe that the tracelessness of $\Delta_{\bftheta}$ implies $X_u = \tr(P_u \Delta_{\bftheta})$. 

    As the ${\bftheta}_i$s have mean zero, we have $\mathbb{E}[X_u] = 0$. Thus, 
    \begin{equation}
        \mathrm{Var}[X_u] = \mathbb{E}[X_u^2] = v_i \sum_{r = 1}^D \tr(V_r P_u)^2,
    \end{equation}
    where we use the fact that $\bftheta \sim \mc{N}(0, v_i I_D)$. Now, as the $V_r$s form an orthonormal basis of the space of traceless Hermitian matrices, and $P_u$ belongs to this space, we have
    \begin{equation}
        \mathrm{Var}[X_u] = v_i \|P_u\|_2^2 = v_i \tr((\ket{u}\bra{u} - I/d)^2) = v_i(1-1/d) \leq v_i.
    \end{equation}
    Note that $\alpha < \frac14, \beta \leq 1$ and thus, $v_0,v_1 \leq 2c_0/d^3$, which serves as an upper bound on $\mathrm{Var}[X_u]$.

    We will use a standard covering-net argument to bound $\|\Delta_{\bftheta}\|_\infty$. Let $\mathcal{C}$ be a $1/4$-net of the unit Euclidean ball in $\mathbb{C}^d$, which has cardinality at most $9^{2d}$ (see e.g.~\cite[Corollary 4.2.11]{vershynin2019high} and map $\mathbb{C}^d$ to $\mathbb{R}^{2d}$). By standard arguments,
    \begin{equation}
        \|\Delta_{\bftheta}\|_{\infty} \leq 2 \cdot \max_{u \in \mc{C}} |\bra{u} \Delta_{\bftheta} \ket{u}|.
        \label{eq:operator-norm-to-net}
    \end{equation}
    We will thus use the Gaussian tail bound to upper bound the probability that any such inner product exceeds $1/4d$. In particular, for a fixed $u \in \mc{C}$, applying the Gaussian tail bound to the random variable $X_u$, we get
    \begin{equation}
        \mathbf{Pr}[|X_u| \geq 1/4d] \leq 2 \exp\left(\frac{-1}{32d^2 v_i}\right) \leq 2\exp\left(
            \frac{-d}{64c_0}
        \right).
    \end{equation}
    Consequently, taking a union bound over $\mc{C}$, 
    \begin{equation}
        \mathbf{Pr}[\max_{u \in \mc{C}} |X_u| \geq 1/4d] \leq |\mc{C}| \cdot 2\exp\left(
            \frac{-d}{64c_0}
        \right) \leq \exp\left(
            2d\log 9 - \frac{d}{64c_0}
        \right) \leq .01,
    \end{equation}
    where the inequality holds for sufficiently small $c_0$ and $d \geq d_0$ for an appropriate constant $d_0$. By \Cref{eq:operator-norm-to-net}, the above also serves as an upper bound on $\mathbf{Pr}[\|\Delta_{\bftheta}\|_\infty \geq 1/2d]$, concluding the proof.
\end{proof}

We also show that these operators are likely to have separated purities, meaning that they are valid hard instances for purity estimation.

\begin{lemma}
\label{lem:instance-purity-separation}
    For $i \in \{0,1\}$, there exists a set of $D$-dimensional vectors $\mathcal{V}_i \subseteq \mathbb{R}^D$ such that, for $\bftheta \sim \mc{N}(0, v_i I_D)$,
    \begin{equation}
        \mathbf{Pr}[\bftheta \notin \mathcal{V}_i] \leq .01. 
    \end{equation}
    Moreover, an $\eps$-accurate multiplicative error estimate of $\tr(\rho_\theta^2)$ can distinguish between $\theta \in \mathcal{V}_0$ and $\theta \in \mathcal{V}_1$.
\end{lemma}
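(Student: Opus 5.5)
Since the $V_r$ are orthonormal and traceless, $\tr(\rho_\theta^2) = \frac1d + \|\Delta_\theta\|_2^2 = \frac1d + \|\theta\|_2^2$. The purity of $\rho_\theta$ is therefore a deterministic function of the single scalar $\|\theta\|_2^2$. For $\bftheta \sim \mc{N}(0, v_i I_D)$ we have $\|\bftheta\|_2^2 = v_i \chi^2_D$, with mean $D v_i$ and variance $2D v_i^2$. I will take
\begin{equation}
    \mathcal{V}_i \coloneqq \left\{\theta \in \mathbb{R}^D : \bigl|\|\theta\|_2^2 - D v_i\bigr| \leq 10\sqrt{2D}\, v_i\right\}.
\end{equation}
Chebyshev's inequality then gives $\mathbf{Pr}[\bftheta \notin \mathcal{V}_i] \leq 1/100$, which is the first claim.

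For the separation, write $p_i \coloneqq \frac1d + D v_i$ for the mean purities. The gap is $p_1 - p_0 = D\alpha v_0 = 10\eps/d$. On $\mathcal{V}_i$ the purity lies in $[p_i - \delta_i, p_i + \delta_i]$, where $\delta_i = 10\sqrt{2D}\,v_i$. Because $\alpha < 1/4$, we have $v_i \leq 2v_0 = 2c_0\beta d^{-3}$. Together with $D \leq d^2$, this gives $\delta_i \leq 20\sqrt2\, c_0 \beta d^{-2} \leq 20\sqrt2\, c_0\, \eps/d$, using $\beta \leq d\eps$.

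The multiplicative error at purity $P$ is $\eps P$. Every purity involved is at most $\frac1d + 2Dv_0 \leq \frac{1 + 2c_0}{d} \leq \frac2d$, since $\beta \leq 1$. So the error intervals have half-width at most $2\eps/d$. An estimate for $\theta \in \mathcal{V}_0$ is then at most $p_0 + (20\sqrt2\,c_0 + 2)\eps/d$. An estimate for $\theta \in \mathcal{V}_1$ is at least $p_1 - (20\sqrt2\,c_0 + 2)\eps/d = p_0 + (10 - 20\sqrt2\,c_0 - 2)\eps/d$. For $c_0$ small these ranges are disjoint, so thresholding at $p_0 + 5\eps/d$ distinguishes the two sets.

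The main point to check is that the chi-square fluctuation $\sqrt{D}\,v_i \approx c_0\beta/d^2$ is a small constant times $\eps/d$. This is exactly where the choice $\beta = \min\{1, d\eps\}$ is used: $\beta \leq d\eps$ keeps the fluctuation below $\eps/d$, while $\beta \leq 1$ keeps the purities at $O(1/d)$. I will also verify that the gap $10\eps/d$ really dominates the multiplicative slack, using the bound $P \leq 2/d$ uniformly rather than relative to $p_0$.

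Validity of $\rho_\theta$ as a state is not needed here, since the purity formula holds for any Hermitian operator. If the combined statement is wanted, one can intersect $\mathcal{V}_i$ with the event of \Cref{lem:instance-valid-state}.
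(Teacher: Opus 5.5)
Your proof is correct and is essentially the paper's argument: the purity equals $\frac1d+\|\theta\|_2^2$, Chebyshev controls the $\chi^2_D$ fluctuation (variance $2Dv_i^2\le 8c_0^2\eps^2/d^2$ via $\beta\le d\eps$), $\beta\le 1$ keeps all relevant purities below $2/d$, and the $10\eps/d$ gap between $p_0$ and $p_1$ absorbs the resulting $O(\eps/d)$ slack. The difference is cosmetic. The paper uses the window $|\tr(\rho_\theta^2)-p_i|\le\eps/d$ and spends the smallness of $c_0$ on the probability bound, whereas you use a ten-standard-deviation window and spend it on the separation. One small correction: your bound $\frac1d+2Dv_0$ on the purity should also include $\delta_i$, but everything still stays below $2/d$ for small $c_0$.
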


\begin{proof}
    For any state $\rho_\theta$, its purity is given by
    \begin{equation}
        \tr(\rho_\theta^2) = \frac{1}{d} + \tr(\Delta_\theta^2) = \frac{1}{d} + \sum_{r = 1}^D \theta_r^2,
    \end{equation}
    as the $V_r$s are orthonormal. Note that $\bftheta \sim \mc{N}(0, v_i I_D)$ is equivalent to independently sampling each ${\bftheta}_r$ from $\mc{N}(0,v_i)$. Thus,
    \begin{equation}
        \mathbb{E}[\tr(\rho_{\bftheta}^2)] = \frac{1}{d} + v_i D.
    \end{equation}
    For ease of notation, let $p_i \coloneqq \frac{1}{d} + v_i D$. Then, note that
    \begin{equation}
        p_1 - p_0 = D(v_1 - v_0) = D\alpha v_0 = \frac{10\eps}{d},
    \end{equation}
    by \Cref{eq:gaussian-parameters}.
    
    Now, using the fact that the ${\bftheta}_r$s are independent, we bound the variance of the purity by
    \begin{equation}
        \mathrm{Var}[\tr(\rho_{\bftheta}^2)] = D \cdot \mathrm{Var}[{\bftheta}_1^2] = 2Dv_i^2,
    \end{equation}
    where we use the fact that a standard Gaussian's fourth moment is $3$ and its second moment is $1$. From \Cref{eq:gaussian-parameters}, we have $v_i \leq 2c_0 \beta d^{-3}$, $\beta \leq d\eps$, and $D \leq d^2$, yielding
    \begin{equation}
        \mathrm{Var}[\tr(\rho_{\bftheta}^2)] \leq \frac{8c_0^2\varepsilon^2}{d^2}.
    \end{equation}
    We will define $\mc{V}_i \coloneqq \{\theta \in \mathbb{R}^D : |\tr(\rho_\theta^2) - p_i| \leq \frac{\eps}{d}\}$. By Chebyshev's inequality,
    \begin{equation}
        \mathbf{Pr}_{\bftheta \sim \mc{N}(0,v_i I_D)}[\bftheta \notin \mc{V}_i] \leq 8c_0^2 \leq .01,
    \end{equation}
    for sufficiently small $c_0$. It remains to show that the two sets are separated in purity.

    For sufficiently small $\eps,c_0$, we have $\tr(\rho_\theta^2) \leq \frac{2}{d}$ whenever $\theta \in \mc{V}_i$. Suppose we have an $\eps$-accurate multiplicative-error estimate $\hat{p}$ of $\tr(\rho_\theta^2)$ for some $\theta \in \mc{V}_i$. Then,
    \begin{equation}
        |\hat{p} - p_i| \leq |\hat{p} - \tr(\rho_\theta^2)| + |\tr(\rho_\theta^2) - p_i| \leq \frac{2}{d} \cdot \eps + \frac{\eps}{d} = \frac{3\eps}{d},
    \end{equation}
    where we bound the first term using the accuracy of the estimator and the second term using membership in $\mc{V}_i$. As $p_0$ and $p_1$ are separated by $10\eps/d$, this estimate can be used to successfully distinguish between $\mc{V}_0$ and $\mc{V}_1$. 
\end{proof}

Consequently, any algorithm for purity estimation can distinguish between the two encoded Gaussian priors. However, instead of directly analyzing this mixture-versus-mixture testing problem at the level of quantum states, we will analyze this in a classical query model; we start by defining such queries.

\begin{definition}
    Let $\theta \in \mathbb{R}^D$ be an unknown vector. An ``$L$-bounded noisy linear query'' to $\theta$ takes as input a vector $a \in \mathbb{R}^D$ with $\|a\|_2 \leq L$ and returns $\bfY = a^\top \theta + \bfZ$, where $\bfZ \sim \mc{N}(0,1)$.
\end{definition}

We note that for an algorithm making multiple queries, the noise $Z$ will be sampled independently for each such query.

The key technical element of our reduction is the following lemma showing that these noisy linear queries can be used to \emph{exactly} simulate any adaptive POVM schedule on the state encoding the unknown vector.

\begin{lemma}
\label{lem:gaussian-simulation}
    Let $\rho_\theta$ be a $d$-dimensional quantum state encoding a vector $\theta \in \mathbb{R}^D$ such that $\|\Delta_\theta\|_\infty \leq 1/2d$. There exists an absolute constant $C > 0$ such that any adaptive schedule of $n$ single-copy measurements on the state $\rho_\theta$ can be simulated using $Cn$ adaptively chosen $d$-bounded noisy linear queries to $\theta$ with probability at least $.99$. Moreover, on failure or on input $\theta$ with $\|\Delta_\theta\|_\infty \geq \frac{1}{2d}$, the simulation terminates at $Cn$ queries and returns an arbitrary output.
\end{lemma}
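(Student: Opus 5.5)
The plan is to simulate each single-copy measurement exactly by rejection sampling, with the acceptance coin built from noisy linear queries through a Bernoulli factory. Three steps reduce everything to flipping one coin: refine, rewrite the outcome probabilities, then propose and accept.

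\emph{Refinement.} First replace every POVM in the adaptive schedule by a rank-one refinement. Write each effect as $M_y=\sum_j w_{y,j}\ket{u_{y,j}}\bra{u_{y,j}}$, measure the refined POVM, and report the coarse label $y$. Post-processing preserves the outcome law, so it suffices to simulate rank-one POVMs $\{w_k\ket{u_k}\bra{u_k}\}_k$ with $\sum_k w_k=d$.

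\emph{Outcome probabilities.} For such a POVM, $p_\theta(k)=\frac{w_k}{d}\bigl(1+d\bra{u_k}\Delta_\theta\ket{u_k}\bigr)$. As in the proof of \Cref{lem:instance-valid-state}, $\bra{u}\Delta_\theta\ket{u}=\tr(P_u\Delta_\theta)=a_u^\top\theta$ with $(a_u)_r=\tr(V_rP_u)$. Since $\|a_u\|_2=\|P_u\|_2\leq 1$, the query vector $d\,a_u$ is $d$-bounded.

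\emph{Propose and accept.} Draw $k$ from the $\theta$-independent proposal $q(k)=w_k/d$, then accept with probability
\begin{equation}
f(\mu)\coloneqq\frac{1+\mu}{2},\qquad \mu\coloneqq d\,a_{u_k}^\top\theta .
\end{equation}
When $\|\Delta_\theta\|_\infty\leq 1/2d$ we have $|\mu|\leq 1/2$, so $f(\mu)\in[1/4,3/4]$. The accepted outcome then has law exactly proportional to $p_\theta$. Each round accepts with probability exactly $1/2$, independently of $\theta$, so the expected number of rounds per copy is $2$.

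\emph{Producing the $f(\mu)$-coin.} A single $d$-bounded query $Y=\mu+\bfZ$ yields the coin $\mathbf{1}\{Y>0\}$, which has bias $p=\Phi(\mu)$ and lies in $[\Phi(-1/2),\Phi(1/2)]$, a compact interval strictly inside $(0,1)$. The target bias is $g(p)\coloneqq f(\Phi^{-1}(p))$. This function is real-analytic on a neighbourhood of that interval and takes values in $[1/4,3/4]$, hence is bounded away from $0$ and $1$. By the Bernoulli-factory theorem for such functions (Keane--O'Brien, with the Nacu--Peres efficiency results), an exact $g(p)$-coin can be produced from i.i.d.\ $p$-coins with a number of coin flips having bounded expectation, uniformly over the interval. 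Each flip costs one fresh noisy query at the same vector $d\,a_{u_k}$. Because every query's noise is independent and the adaptive choice of the next POVM depends only on the simulated outcomes, composing over the $n$ copies gives an exact simulation of the full transcript. The expected total number of queries is at most $C'n$ for an absolute constant $C'$.

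\emph{Truncation and the out-of-range case.} Set $C=100C'$. By Markov's inequality the simulation finishes within $Cn$ queries with probability at least $0.99$; otherwise we stop at $Cn$ queries and output arbitrarily. If $\|\Delta_\theta\|_\infty>1/2d$, the factory may run on a bias outside its design interval, but its query count is still capped at $Cn$ by the same truncation, so the statement's allowance for arbitrary output covers this case.

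\emph{Main obstacle.} The hard step is the exact coin of bias $f(\mu)$. A one-shot decision rule $h(Y)$ with $\mathbb{E}\,h(\mu+\bfZ)=f(\mu)$ cannot work. The Gaussian convolution $\mu\mapsto\mathbb{E}\,h(\mu+\bfZ)$ is entire, so if it equals the affine $f$ on an interval it equals $f$ everywhere, which forces $h$ to be affine and hence not $[0,1]$-valued. Randomized, adaptively stopped query counts are therefore essential. If the Bernoulli-factory citation needs care (uniform bounds on the expected number of flips over the compact parameter interval), a fallback is a direct construction. One would write $g$ as a convergent power series around $p=1/2$ and realize it by a randomized series or Russian-roulette scheme with geometrically many coins, verifying a uniform constant expected cost by hand.
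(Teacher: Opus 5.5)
Your proposal is correct and follows the paper's proof essentially step for step. Both arguments use rejection sampling from the $\theta$-independent proposal $\tr(M_y)/d$ with acceptance probability $(1+\mu)/2$ (overall acceptance exactly $1/2$), turn the coin $\mathbf{1}\{Y\ge 0\}$ of bias $\Phi(\mu)$ into the target coin via the Nacu--Peres factory for $(1+\Phi^{-1}(p))/2$ on $[\Phi(-1/2),\Phi(1/2)]$, and truncate by Markov at $100$ times the expected query count. The only difference is cosmetic: the paper skips your rank-one refinement and works directly with $A_y=dM_y/\tr(M_y)$, bounding $|\mu|\le\|A_y\|_1\|\Delta_\theta\|_\infty\le 1/2$ and the query norm by $d\|M_y\|_2/\|M_y\|_1\le d$.
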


We defer the proof of the above lemma to \Cref{sec:gaussian-simulation}.

\subsection{Gaussian testing lower bound}
\label{sec:gaussian-lower}

With the reduction in place, we now prove the hardness of the classical testing problem.

\begin{theorem}
\label{thm:classical-gaussian-testing-lower}
    Let $i \in \{0,1\}$, and let $\bftheta \sim \mc{N}(0, v_i I_D)$. For an algorithm making $n$ noisy linear queries to $\bftheta$ with norm bounded by $L$, let $Q_{v_i}$ denote the query-response transcripts. Then, even with adaptive queries,
    \begin{equation}
        \dtv(Q_{v_0},Q_{v_1}) \leq L\sqrt{n}|\sqrt{v_1}-\sqrt{v_0}|.
    \end{equation}
\end{theorem}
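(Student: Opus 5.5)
We plan to apply \Cref{fact:tv-and-fisher-info} with the standard deviation $\sigma=\sqrt{v}$ as the parameter, not the variance $v$. If the transcript Fisher information satisfies $I_\sigma\le 4nL^2$ for every $\sigma>0$, then
\begin{equation}
\dtv(Q_{v_0},Q_{v_1})\le\tfrac12\Bigl|\int_{\sqrt{v_0}}^{\sqrt{v_1}}2L\sqrt n\,\mathrm{d}\sigma\Bigr| = L\sqrt n\,|\sqrt{v_1}-\sqrt{v_0}|,
\end{equation}
which is the claimed bound. Shared randomness can be fixed first, since TV is jointly convex, so we may assume a deterministic adaptive query rule $a_t=a_t(y_{<t})$ with $\|a_t\|_2\le L$.

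The first step is to write the transcript density via the chain rule, $p_\sigma(y_{1:n})=\prod_t p_\sigma(y_t\mid y_{<t})$. Given the history, the posterior of $\bftheta$ is Gaussian, $\mathcal N(\mu_t,\Sigma_t)$, with
\begin{equation}
\Sigma_t=(\sigma^{-2}I+A_t^\top A_t)^{-1},\qquad \mu_t=\Sigma_tA_t^\top y_{<t},
\end{equation}
where $A_t$ stacks $a_1,\dots,a_{t-1}$. Hence $y_t\mid y_{<t}\sim\mathcal N(m_t,s_t^2)$ with $m_t=a_t^\top\mu_t$ and $s_t^2=1+a_t^\top\Sigma_ta_t$. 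The queries depend on $\sigma$ only through the observed history, so the scores of the conditionals are martingale differences. The Fisher information therefore splits as $I_\sigma=\sum_t\mathbb E\bigl[(\partial_\sigma m_t)^2/s_t^2+(\partial_\sigma s_t^2)^2/(2s_t^4)\bigr]$. This is where exact Gaussianity of the prior is essential.

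Next we compute the derivatives, with the history held fixed. Differentiating the inverse gives $\partial_\sigma\Sigma_t=2\sigma^{-3}\Sigma_t^2$, and hence $\partial_\sigma\mu_t=2\sigma^{-3}\Sigma_t\mu_t$.

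For the variance term, the bound $\Sigma_t\preceq\sigma^2 I$ gives $\partial_\sigma s_t^2\le 2\sigma^{-1}a_t^\top\Sigma_ta_t$. Therefore
\begin{equation}
\frac{(\partial_\sigma s_t^2)^2}{2s_t^4}\le 2\sigma^{-2}\,\frac{(a_t^\top\Sigma_ta_t)^2}{(1+a_t^\top\Sigma_ta_t)^2}\le 2\sigma^{-2}a_t^\top\Sigma_ta_t\le 2L^2.
\end{equation}
This holds pointwise, for every history.

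The main obstacle is the mean term $4\sigma^{-6}(a_t^\top\Sigma_t\mu_t)^2/s_t^2$. We cannot bound it pointwise: an adaptive learner can align $a_t$ with a posterior mean $\mu_t$ whose norm may be as large as $\sigma\sqrt D$. My first attempt would be a supermartingale argument on the quantity $\mu_t^\top\Sigma_t^{-1}\cdots$, but the cleaner route is to average. By the law of total covariance, $\mathbb E[\mu_t\mu_t^\top]=\sigma^2I-\mathbb E[\Sigma_t]$. If $a_t$ and $\Sigma_t$ were fixed, this would give
\begin{equation}
\mathbb E(a^\top\Sigma\mu)^2\le\sigma^2a^\top\Sigma^2a\le\sigma^4a^\top\Sigma a,
\end{equation}
and so a mean-term contribution of at most $4L^2$. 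To handle the dependence of $a_t$ and $\Sigma_t$ on the history, I would try to show that $a^\top\Sigma_t\mu_t$ is a martingale-like quantity in the information filtration. Concretely, the innovations $y_s-m_s$ are independent $\mathcal N(0,s_s^2)$, and $\mu_t$ is a sum of $\Sigma$-weighted innovations. Expanding $(a_t^\top\Sigma_t\mu_t)^2$ in these innovations and using $\Sigma_t\preceq\Sigma_s$ for $s<t$ should bound its conditional expectation by $\sigma^4a_t^\top\Sigma_ta_t$, possibly up to a constant. A constant loss here would only change the constant in front of $nL^2$, and I would tune the bookkeeping so that the two contributions total at most $4L^2$ per query.

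Summing over the $n$ queries then gives $I_\sigma\le 4nL^2$, and \Cref{fact:tv-and-fisher-info} finishes the proof.
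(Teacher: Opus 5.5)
Your reduction to per-query terms is correct and coincides with the paper's. Writing $b_t=\Sigma_t a_t/s_t$, your mean term is $4\sigma^{-6}(\mu_t^\top b_t)^2$ and your variance term is $2\sigma^{-6}\|b_t\|_2^4$. Their sum is exactly the conditional second moment of the score increment computed in the paper. Your pointwise bound on the variance term is also fine.

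The gap is in the mean term. The per-step bound you propose, $\mathbb{E}(a_t^\top\Sigma_t\mu_t)^2\lesssim\sigma^4\,\mathbb{E}[a_t^\top\Sigma_t a_t]$, is false for adaptive protocols, even up to constants. Here is a counterexample:
\begin{itemize}
\item Query $Le_1,\dots,Le_k$ non-adaptively ($k\le D$), then set $a_{k+1}=L\mu_{k+1}/\|\mu_{k+1}\|_2$.
\item With $u=L^2\sigma^2$, the covariance $\Sigma_{k+1}$ is deterministic and equals $\sigma^2/(1+u)$ on the queried span.
\item One has $\mathbb{E}\|\mu_{k+1}\|_2^2=k\sigma^4L^2/(1+u)$.
\item So the ratio of your two sides is $ku/(1+u)^2$, which is unbounded in $k$.
\end{itemize}
The mean contribution of step $k+1$ alone is $4kL^2u/\bigl((1+u)^2(1+2u)\bigr)$. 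At $u=1$ this is $kL^2/3$, far above $4L^2$. No expansion in innovations or use of $\Sigma_t\preceq\Sigma_s$ can repair this, because $a_t$ is itself a function of $\mu_t$, whose norm grows with the number of past queries. Your law-of-total-covariance averaging only works when $a_t$ is independent of $\mu_t$, which is essentially the non-adaptive case.

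The missing idea is to amortize across rounds: the information used to align $a_t$ with $\mu_t$ was paid for by earlier queries. The paper does this with the potential $K_t=\mu_t^\top\Sigma_t\mu_t$. Your abandoned supermartingale idea was the right instinct, but the weight should be $\Sigma_t$, not $\Sigma_t^{-1}$. Use the rank-one update $\mu_{t+1}=\mu_t+b_t\xi_t$, $\Sigma_{t+1}=\Sigma_t-b_tb_t^\top$, where $\xi_t=(y_t-m_t)/s_t$ is standard normal given $y_{<t}$. One then gets
\begin{equation*}
\mathbb{E}[K_{t+1}\mid y_{<t}]=K_t-(\mu_t^\top b_t)^2-\|b_t\|_2^4+b_t^\top\Sigma_t b_t,\qquad b_t^\top\Sigma_t b_t\le a_t^\top\Sigma_t^3a_t\le\sigma^6L^2 .
\end{equation*}
Hence $4(\mu_t^\top b_t)^2+2\|b_t\|_2^4\le 4\bigl(K_t-\mathbb{E}[K_{t+1}\mid y_{<t}]+\sigma^6L^2\bigr)$. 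Summing over $t=1,\dots,n$ with $K_1=0$ and $K_{n+1}\ge 0$ gives $I_\sigma\le 4nL^2$. This treats the mean and variance terms jointly. It also removes the constant-factor slack in your separate $4L^2+2L^2$ accounting, yielding the exact constant in the statement.
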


We will introduce parameters $s_0,s_1$ with $s_i = \sqrt{v_i}$, reparameterizing the distribution as $\bftheta \sim \mc{N}(0, s_i^2 I_D)$. We will then show the following bound on the Fisher information of the transcript distributions with respect to $s$:

\begin{lemma}
    \label{lem:fisher-information-upper}
    Let $\bftheta \sim \mc{N}(0, s^2 I_D)$ and let $Q_s$ denote the corresponding distribution over transcripts of any adaptive protocol. Then, 
    \begin{equation}
        I_s(Q_s) \leq 4nL^2.
    \end{equation}
\end{lemma}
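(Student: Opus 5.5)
The plan is to write the transcript likelihood as a product of one-step predictive Gaussian densities, and then bound the Fisher information one query at a time.

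Fix a deterministic adaptive protocol; shared randomness does not depend on $s$ and can be conditioned on, since Fisher information of a mixture with $s$-independent weights is at most the average. Let $y_{<t}$ denote the first $t-1$ responses and let $a_t=a_t(y_{<t})$ be the $t$th query. Then
\begin{equation}
Q_s(y_1,\dots,y_n)=\prod_{t=1}^n q_s(y_t\mid y_{<t}),
\end{equation}
where $q_s(\cdot\mid y_{<t})$ is the law of $\bfY_t$ given the past with $\bftheta$ integrated out.

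By the posterior-update property highlighted in the overview (\Cref{lem:posterior-updates}), the posterior of $\bftheta$ given $y_{<t}$ is $\mc{N}(m_t,\Sigma_t)$, with
\begin{equation}
\Sigma_t=(s^{-2}I+A_t^\top A_t)^{-1},\qquad m_t=\Sigma_tA_t^\top y_{<t}.
\end{equation}
Here $A_t$ is the matrix with rows $a_1,\dots,a_{t-1}$. Hence $q_s(\cdot\mid y_{<t})=\mc{N}(\mu_t,\sigma_t^2)$ with $\mu_t=a_t^\top m_t$ and $\sigma_t^2=1+a_t^\top\Sigma_ta_t\ge1$.

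The score of $Q_s$ is the sum of the per-step scores $\partial_s\log q_s(y_t\mid y_{<t})$. These form a martingale difference sequence, each having conditional mean zero given $y_{<t}$. The cross terms therefore vanish, and
\begin{equation}
I_s(Q_s)=\sum_{t=1}^n\mathbb{E}\Bigl[\frac{(\partial_s\mu_t)^2}{\sigma_t^2}+\frac{(\partial_s\sigma_t^2)^2}{2\sigma_t^4}\Bigr],
\end{equation}
using the Fisher information of a univariate Gaussian in terms of its mean and variance. The goal is to bound each summand by $4L^2$. Note that the derivative $\partial_s$ is taken at a fixed transcript, so the $a_j$ and $y_j$ are held fixed.

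\textbf{Variance term.} Differentiating gives $\partial_s\Sigma_t=2s^{-3}\Sigma_t^2$. Since $0\preceq\Sigma_t\preceq s^2I$, we get $\Sigma_t^2\preceq s^2\Sigma_t$, and so
\begin{equation}
0\le\partial_s\sigma_t^2=2s^{-3}a_t^\top\Sigma_t^2a_t\le 2s^{-1}a_t^\top\Sigma_ta_t .
\end{equation}
This bound is usable but not yet sufficient, so I will combine it with $a_t^\top\Sigma_t^2a_t\le s^2a_t^\top\Sigma_ta_t$ in a different way. Writing $u=a_t^\top\Sigma_ta_t\le s^2L^2$, I aim for
\begin{equation}
\frac{(\partial_s\sigma_t^2)^2}{2\sigma_t^4}\le \frac{2u^2/s^2}{(1+u)^2}\le \frac{2u}{s^2}\le 2L^2 .
\end{equation}

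\textbf{Mean term.} Differentiating gives $\partial_s m_t=2s^{-3}\Sigma_t^2A_t^\top y_{<t}=2s^{-3}\Sigma_tm_t$, hence $\partial_s\mu_t=2s^{-3}a_t^\top\Sigma_tm_t$. Applying Cauchy--Schwarz in the $\Sigma_t$ inner product gives
\begin{equation}
(\partial_s\mu_t)^2\le 4s^{-6}(a_t^\top\Sigma_ta_t)(m_t^\top\Sigma_tm_t)\le 4s^{-4}L^2\,m_t^\top\Sigma_tm_t .
\end{equation}
It then suffices to show $\mathbb{E}[m_t^\top\Sigma_tm_t]\lesssim s^4$, or, with the correct constant, $\le\tfrac12 s^4$.

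The natural route is the identity $\mathbb{E}[m_tm_t^\top]=s^2I-\mathbb{E}[\Sigma_t]\preceq s^2I$, which holds because $m_t$ is the posterior mean. The difficulty is that $\Sigma_t$ depends on the past queries, so the two factors cannot simply be separated. I would first try the pathwise bound $\Sigma_t\preceq s^2I$, which gives $\mathbb{E}[m_t^\top\Sigma_tm_t]\le s^2\mathbb{E}\|m_t\|^2$. This is too weak: it loses a factor $D$, because $\mathbb{E}\|m_t\|^2\le s^2D$.

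Instead, I plan to exploit the structure $\Sigma_t m_t=\Sigma_t^2A_t^\top y_{<t}$ together with $\mathbb{E}[y_{<t}y_{<t}^\top\mid A_t]$-type bounds. Concretely, the marginal covariance of the responses is $I+s^2A_tA_t^\top$, and one can check that
\begin{equation}
\Sigma_tA_t^\top(I+s^2A_tA_t^\top)A_t\Sigma_t\preceq s^2I .
\end{equation}
Handling the adaptivity of $A_t$ is where I expect the main difficulty. I would attempt it by tracking $m_t^\top\Sigma_tm_t$ as a supermartingale-like quantity through the rank-one posterior updates, and checking that its expectation never exceeds $s^4$. This step is the main obstacle, and the claimed constant $4$ depends on getting it right.

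Once both terms are bounded, summing over $t$ gives $I_s(Q_s)\le 4nL^2$. \Cref{thm:classical-gaussian-testing-lower} then follows from \Cref{fact:tv-and-fisher-info} applied in the parameter $s$, since $\tfrac12\int_{s_0}^{s_1}2L\sqrt n\,ds=L\sqrt n\,|s_1-s_0|$.
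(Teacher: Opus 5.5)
There is a genuine gap, and it is not a matter of constants. Your plan is to bound each per-step summand by $4L^2$, and in particular to prove $\mathbb{E}[m_t^\top\Sigma_tm_t]\le\tfrac12 s^4$. Both claims are false under adaptivity.

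Here is a counterexample. Take $s^2L^2=1$, query $a_j=Le_j$ for $j=1,\dots,D$, and then query $a_{D+1}=L\,y_{\le D}/\|y_{\le D}\|_2$. This gives:
\begin{itemize}
\item $\Sigma_{D+1}=\tfrac{s^2}{2}I$ and $m_{D+1}=\tfrac{s^2L}{2}\,y_{\le D}$, so $a_{D+1}$ is parallel to $m_{D+1}$;
\item $\sigma_{D+1}^2=\tfrac32$ and $\mathbb{E}\|m_{D+1}\|_2^2=\tr(s^2I-\Sigma_{D+1})=\tfrac{Ds^2}{2}$;
\item $\mathbb{E}[m_{D+1}^\top\Sigma_{D+1}m_{D+1}]=\tfrac{Ds^4}{4}$, which refutes your intermediate target;
\item by your own formula, $\mathbb{E}\bigl[(\partial_s\mu_{D+1})^2/\sigma_{D+1}^2\bigr]=\tfrac{4}{s^6}\cdot\tfrac{s^4L^2}{4}\cdot\tfrac{Ds^2}{2}\cdot\tfrac23=\tfrac{DL^2}{3}$.
\end{itemize}
Cauchy--Schwarz is tight here, since $\Sigma_{D+1}\propto I$ and $a_{D+1}\parallel m_{D+1}$. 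So the loss is not in that inequality: the step-$(D+1)$ Fisher information itself is of order $DL^2$. No per-step bound independent of $D$ exists, and $K_t=m_t^\top\Sigma_tm_t$ is certainly not a supermartingale capped at $s^4$, since it rises from $0$ to $Ds^4/4$. The lemma still holds in this example because the first $D$ queries each contributed only about $L^2/2$.

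The missing idea is amortization. Use $K_t$ as a telescoping potential rather than as something to bound uniformly. Set $b_t=\Sigma_ta_t/\sigma_t$ and $\xi_t=(y_t-\mu_t)/\sigma_t$. The rank-one update $m_{t+1}=m_t+b_t\xi_t$, $\Sigma_{t+1}=\Sigma_t-b_tb_t^\top$ gives
\[
\mathbb{E}[K_{t+1}\mid y_{<t}]=K_t-(m_t^\top b_t)^2+b_t^\top\Sigma_tb_t-\|b_t\|_2^4 .
\]
Your two summands are exactly $4s^{-6}(m_t^\top b_t)^2$ and $2s^{-6}\|b_t\|_2^4$. Hence the step-$t$ Fisher information is at most $4s^{-6}\bigl(K_t-\mathbb{E}[K_{t+1}\mid y_{<t}]+b_t^\top\Sigma_tb_t\bigr)$, with $b_t^\top\Sigma_tb_t\le a_t^\top\Sigma_t^3a_t\le s^6L^2$. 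Summing over $t$ telescopes, and $K_1=0$, $K_{n+1}\ge 0$ give $I_s(Q_s)\le 4nL^2$. This is the paper's argument.

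The rest of your proposal is correct and consistent with it: the martingale decomposition of the score, the derivative formulas $\partial_s\Sigma_t=2s^{-3}\Sigma_t^2$ and $\partial_s m_t=2s^{-3}\Sigma_tm_t$, and the per-step bound of $2L^2$ on the variance term. Only the mean term must be handled in amortized form.
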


By \Cref{fact:tv-and-fisher-info}, this would immediately imply \Cref{thm:classical-gaussian-testing-lower}. We will thus spend the rest of this section proving this Fisher information bound. We first show how a single query affects the posterior distribution of the unknown vector.

\begin{lemma}
    \label{lem:posterior-updates}
    Suppose that, conditioned on the current transcript $\mc{F}$, the posterior distribution of $\theta$ is given by $\theta \ | \ \mc{F} \sim \mc{N}(m,C)$, for some mean vector $m$ and covariance matrix $C$. After querying $\theta$ with a vector $a$ and observing
    \begin{equation}
        Y = a^\top \theta + Z, \quad Z \sim \mc{N}(0,1),
    \end{equation}
    define
    \begin{equation}
        b \coloneqq \frac{C a}{\sqrt{1 + a^\top C a}}, \quad \xi \coloneqq \frac{Y - a^\top m}{\sqrt{1 + a^\top C a}}.
    \end{equation}
    Then, conditioned on $\mc{F}$, $\xi \sim \mc{N}(0,1)$ and the updated posterior of $\theta$ is Gaussian with parameters
    \begin{equation}
        m^\prime = m + b\xi, \quad C^\prime = C - bb^\top.
    \end{equation}
    In particular, $C^\prime \preceq C$.
\end{lemma}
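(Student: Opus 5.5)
The plan is to treat this as standard Gaussian conditioning, i.e.\ a single Kalman-filter update. Conditioned on $\mathcal{F}$, the query vector $a$ is a deterministic function of the transcript and any shared randomness, so we may treat it as fixed. The fresh noise $Z\sim\mathcal{N}(0,1)$ is independent of $\theta$ and of $\mathcal{F}$. Hence, conditioned on $\mathcal{F}$, the pair $(\theta,Y)$ is jointly Gaussian, being an affine image of the independent Gaussians $\theta\mid\mathcal{F}\sim\mathcal{N}(m,C)$ and $Z$. I will first record its first two moments:
\begin{equation}
\mathbb{E}[Y\mid\mathcal{F}]=a^\top m,\qquad \mathrm{Var}[Y\mid\mathcal{F}]=a^\top C a+1,\qquad \mathrm{Cov}(\theta,Y\mid\mathcal{F})=Ca .
\end{equation}
From these, the claim $\xi\sim\mathcal{N}(0,1)$ given $\mathcal{F}$ is immediate, since $\xi$ is $Y$ centered and scaled by its conditional standard deviation. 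The denominator $\sqrt{1+a^\top Ca}$ is well defined because $C\succeq 0$.

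Next I will apply the Gaussian conditioning formula. For jointly Gaussian $(\theta,Y)$, the law of $\theta$ given $Y$ (and $\mathcal{F}$) is Gaussian with
\begin{equation}
m' = m + \frac{Ca}{1+a^\top Ca}\,(Y-a^\top m),\qquad C' = C - \frac{Ca\,a^\top C}{1+a^\top Ca}.
\end{equation}
Substituting the definitions of $b$ and $\xi$ gives $m'=m+b\xi$ and $C'=C-bb^\top$. Here we use that $C$ is symmetric, so $a^\top C=(Ca)^\top$.

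To make the argument self-contained, I will derive the conditioning formula rather than cite it. Set $\theta' := \theta - b\xi$. Then
\begin{equation}
\mathrm{Cov}(\theta',\xi\mid\mathcal{F}) = \frac{Ca}{\sqrt{1+a^\top Ca}} - b = 0 .
\end{equation}
Since $(\theta',\xi)$ is jointly Gaussian, zero covariance means $\theta'$ is independent of $\xi$. Therefore, conditioned on $(\mathcal{F},Y)$, the vector $\theta = \theta' + b\xi$ has mean $m+b\xi$ and covariance $\mathrm{Cov}(\theta')=C-bb^\top$, computed directly.

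Finally, $C'\preceq C$ holds because $bb^\top\succeq 0$. The resulting posterior is again Gaussian, with $C'\succeq 0$ since it is the covariance of $\theta'$. This lets the lemma be applied inductively along the transcript, starting from the prior $\mathcal{N}(0,s^2I_D)$.

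None of these steps is a real obstacle. The only point needing care is that adaptivity does not break joint Gaussianity, because $a$ is $\mathcal{F}$-measurable and $Z$ is independent of everything prior.
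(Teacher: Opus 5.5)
Your proof is correct, but it reaches the update by a different route from the paper's.

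The paper works with densities. It multiplies the prior density $\exp[-\frac12(\theta-m)^\top C^{-1}(\theta-m)]$ by the likelihood $\exp[-\frac12(Y-a^\top\theta)^2]$ and completes the square. This gives the posterior in information form, $C'=(C^{-1}+aa^\top)^{-1}$ and $m'=C'(C^{-1}m+Ya)$. It then converts to covariance form with the Sherman--Morrison formula and simplifies the mean. Finally, it checks $\xi\sim\mathcal{N}(0,1)$ separately from the marginal law $Y\mid\mathcal{F}\sim\mathcal{N}(a^\top m,1+a^\top Ca)$.

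You use only first and second moments. Because $\theta-b\xi$ is uncorrelated with $\xi$ and the pair is jointly Gaussian, they are independent, and the posterior mean and covariance can be read off directly.

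What each approach buys:
\begin{itemize}
\item Your argument never inverts $C$ and skips Sherman--Morrison, so it also covers degenerate posteriors.
\item It makes explicit the innovation decomposition $\theta=\theta'+b\xi$ with $\theta'$ independent of $\xi$. The score-increment computation in the proof of \Cref{lem:fisher-information-upper} relies on exactly this structure.
\item The paper's density route needs $C\succ 0$. That holds inductively from the prior $s^2I_D$, because $(C^{-1}+aa^\top)^{-1}$ is again invertible.
\item In exchange, the information form shows directly that invertibility of the posterior covariance persists along the transcript.
\end{itemize}
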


\begin{proof}
    Conditioned on $\mc{F}$, the query vector $a$ is deterministic. Then, using Bayes' rule, the posterior density is proportional to
    \begin{equation}
        \exp\left[-\frac{1}{2}(\theta - m)^\top C^{-1} (\theta - m)\right] \exp\left[-\frac12(Y-a^\top \theta)^2\right] \propto \exp\left[
        -\frac12 \left\{
            \theta^\top (C^{-1} + aa^\top) \theta -2(C^{-1}m + Ya)^\top\theta
        \right\}
        \right].
    \end{equation}
    Thus, the posterior is Gaussian with mean $m^\prime$ and covariance $C^\prime$ where
    \begin{equation}
        C^\prime = (C^{-1} + aa^\top)^{-1} \quad \textnormal{and} \quad m^\prime = C^{\prime}(C^{-1}m + Ya).
    \end{equation}
    By the Sherman--Morrison formula, we have
    \begin{equation}
        C^\prime = C - \frac{Caa^\top C}{1 + a^\top C a} = C - bb^\top.
    \end{equation}
    Substituting this into the mean update above, we get
    \begin{equation}
        m^\prime = m + \frac{(Y - a^\top m) Ca}{1 + a^\top Ca} = m + b\xi.
    \end{equation}
    Finally, conditioned on $\mc{F}$,
    \begin{equation}
        a^\top \theta \sim \mc{N}(a^\top m, a^\top C a) \implies Y \sim \mc{N}(a^\top m, 1 + a^\top C a),
    \end{equation}
    proving that $\xi \ | \ \mc{F} \sim \mc{N}(0,1)$.
\end{proof}

Consequently, starting from $\theta \sim \mc{N}(0, s^2 I_D)$, we see that the posterior distribution remains Gaussian after each query. Moreover, each query decreases the covariance matrix. We will use these properties to analyze the score of the transcript law.

\begin{lemma}
\label{lem:score-form}
    For $0 \leq j \leq n$, let $T_j$ denote the adaptive query transcript after the first $j$ queries to a vector $\theta \sim \mc{N}(0, s^2 I_D)$. Using \Cref{lem:posterior-updates}, write
    \begin{equation}
        \theta \ | \ T_j \sim \mc{N}(m_j, C_j).
    \end{equation}
    Let $q_{s,j}$ denote the transcript density. Then, its score is given by
    \begin{equation}
        S_s(T_j) \coloneqq \frac{\partial \log(q_{s,j}(T_j))}{\partial s} = \frac{\|m_j\|_2^2 + \tr(C_j)}{s^3} - \frac{D}{s}.
    \end{equation}
\end{lemma}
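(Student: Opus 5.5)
The plan is to write the transcript density as an integral of the joint density of $(\bftheta, T_j)$ over $\theta$. Then the score equals the posterior expectation of the score of the joint density with respect to $s$.

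Concretely, the joint density factorizes as
\begin{equation}
    p_s(\theta, T_j) = \phi_s(\theta)\prod_{k=1}^{j} \varphi\bigl(Y_k - a_k^\top \theta\bigr),
\end{equation}
where $\phi_s$ is the $\mc{N}(0,s^2 I_D)$ density and $\varphi$ is the standard normal density. Here each query vector $a_k$ is a deterministic function of $T_{k-1}$ and any pre-shared randomness, and neither depends on $s$. So only the prior factor depends on $s$, and
\begin{equation}
    \frac{\partial}{\partial s}\log p_s(\theta,T_j) = \frac{\partial}{\partial s}\Bigl(-\frac{\|\theta\|_2^2}{2s^2} - D\log s\Bigr) = \frac{\|\theta\|_2^2}{s^3} - \frac{D}{s}.
\end{equation}

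Next, write $q_{s,j}(T_j) = \int p_s(\theta, T_j)\,\mrm{d}\theta$ and differentiate under the integral sign. This is justified by dominated convergence, since the integrand is Gaussian in $\theta$ with polynomial prefactors, locally uniformly in $s>0$. We obtain
\begin{equation}
    S_s(T_j) = \frac{\int \partial_s p_s(\theta,T_j)\,\mrm{d}\theta}{q_{s,j}(T_j)} = \mathbb{E}\Bigl[\frac{\|\bftheta\|_2^2}{s^3} - \frac{D}{s}\,\Big|\, T_j\Bigr].
\end{equation}

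By \Cref{lem:posterior-updates}, applied inductively from the prior $\mc{N}(0,s^2 I_D)$, the posterior is $\mc{N}(m_j, C_j)$. Its second moment is $\mathbb{E}[\|\bftheta\|_2^2 \mid T_j] = \|m_j\|_2^2 + \tr(C_j)$, which yields the formula.

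The only delicate points are the following. First, adaptivity must not introduce extra $s$-dependence; it does not, because the query rule is a fixed function of the past transcript. Second, interchanging derivative and integral needs justification; the Gaussian tails handle this. Shared randomness can be conditioned on throughout, since its distribution does not depend on $s$.
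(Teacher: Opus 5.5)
Your proposal is correct and is essentially the paper's argument. The paper writes $q_{s,j}(t)=\int K_j(t\mid\theta)\,p_s(\theta)\,\mathrm{d}\theta$ with an $s$-independent likelihood, differentiates under the integral to get the posterior expectation of the prior score $\|\theta\|_2^2/s^3 - D/s$, and evaluates it with the Gaussian posterior from \Cref{lem:posterior-updates}. Your explicit Gaussian likelihood factorization and the dominated-convergence justification are details the paper leaves implicit.
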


\begin{proof}
    The prior density is given by
    \begin{equation}
        p_s(\theta) = (2\pi s^2)^{-D/2} \exp\left(\frac{-\|\theta\|_2^2}{2s^2}\right),
    \end{equation}
    implying its score is
    \begin{equation}
        \frac{\partial \log p_s(\theta)}{\partial s} = \frac{-D}{s} + \frac{\|\theta\|_2^2}{s^3}.
    \end{equation}
    Now, let $K_j(t\ |\ \theta)$ denote the conditional density of the transcript given $\theta$. As the queries only depend on $\theta$, this density is independent of $s$. Then, we can write
    \begin{equation}
        q_{s,j}(t) = \int K_j(t \ | \ \theta) p_s(\theta) \mathrm{d}\theta.
    \end{equation}
    Differentiating under the integral and using the fact that $ \frac{\partial p_s(\theta)}{\partial s} = p_s(\theta) \frac{\partial \log(p_s(\theta))}{\partial s}$, we get
    \begin{equation}
        \frac{\partial q_{s,j}(t)}{\partial s} =  \int K_j(t \ | \ \theta) p_s(\theta) \frac{\partial \log(p_s(\theta))}{\partial s} \mathrm{d}\theta.
    \end{equation}
    We will only consider transcripts with non-zero density and divide the above equation by $q_{s,j}$ to get
    \begin{equation}
         \frac{\partial \log q_{s,j}(t)}{\partial s} = \int \frac{K_j(t \ | \ \theta) p_s(\theta)}{q_{s,j}(t)} \frac{\partial \log(p_s(\theta))}{\partial s} \mathrm{d}\theta = \mathbb{E}_{\theta | t} \left[
            \frac{\partial \log(p_s(\theta))}{\partial s}
         \right],
    \end{equation}
    where the last equality is Bayes' rule. Then, we have
    \begin{equation}
        \frac{\partial \log q_{s,j}(T_j)}{\partial s} = \frac{-D}{s} + \frac{\mathbb{E}[\|\theta\|_2^2 \ | \ T_j]}{s^3} = \frac{-D}{s} + \frac{\|m_j\|_2^2 + \tr(C_j)}{s^3},
    \end{equation}
    proving the lemma.
\end{proof}

Using the above lemmas, we can now prove our main Fisher information upper bound.

\begin{proof}[Proof of \Cref{lem:fisher-information-upper}]
    Fix a round $j$ and condition on the transcript $T_j$ so far. Let the current posterior parameters be $m,C$ and let the updated ones from a queried vector $a$ be $m^\prime, C^\prime$ as in \Cref{lem:posterior-updates}, i.e.,
    \begin{equation}
        m^\prime = m + b\xi \quad \textnormal{and} \quad C^\prime = C - bb^\top,
    \end{equation}
    where we have $\xi \ | \ T_j \sim \mc{N}(0,1)$. By \Cref{lem:score-form}, the score increment from the $(j+1)$th query is given by
    \begin{equation}
        S_s(T_{j+1}) - S_s(T_j) = \frac{\|m^\prime\|_2^2 - \|m\|_2^2 + \tr(C^\prime) - \tr(C)}{s^3} = \frac{2m^\top b \xi + \|b\|_2^2(\xi^2 - 1)}{s^3}.
    \end{equation}
    As $\mathbb{E}[\xi \ | \ T_j] = 0$ and $\mathbb{E}[\xi^2 \ | \ T_j] = 1$, the score increment has conditional mean zero. Moreover, using
    \begin{equation}
        \mathbb{E}[\xi^2 \ | \ T_j] = 1, \quad \mathbb{E}[(\xi^2-1)^2 \ | \ T_j] = 2, \quad \mathbb{E}[\xi(\xi^2-1) \ | \ T_j] = 0,
    \end{equation}
    we have
    \begin{equation}
        \mathbb{E}[(S_s(T_{j+1}) - S_s(T_j))^2 \ | \ T_j] = \frac{4(m^\top b)^2 + 2\|b\|_2^4}{s^6}.
    \end{equation}
    To control the $m^\top b$ term, we introduce auxiliary quantities
    \begin{equation}
        K \coloneqq m^\top C m, \quad K^\prime \coloneqq m^{\prime\top} C^\prime m^\prime.
    \end{equation}
    Using the update rule and the first two conditional moments of $\xi$,
    \begin{align}
        \mathbb{E}[K^\prime \ | \ T_j] &= \mathbb{E}[(m+b\xi)^\top(C-bb^\top)(m+b\xi) \ | \ T_j]
        \\&= K - (m^\top b)^2 + b^\top C b - \|b\|_2^4.
    \end{align}
    Consequently, 
    \begin{align}
        4(m^\top b)^2 + 2 \|b\|_2^4 \leq 4 ((m^\top b)^2 + \|b\|_2^4) = 4 (K - \mathbb{E}[K^\prime \ | \ T_j] + b^\top C b).
    \end{align}
    As $b = \frac{C a}{\sqrt{1 + a^\top C a}}$, we have
    \begin{equation}
        b^\top C b = \frac{a^\top C^3 a}{1 + a^\top C a} \leq a^\top C^3 a \leq s^6 \|a\|_2^2 \leq s^6 L^2,
    \end{equation}
    where we use the fact that $C \preceq s^2 I_D$ (\Cref{lem:posterior-updates}) and the fact that all queries must have norm bounded by $L$. Consequently, 
    \begin{equation}
        \mathbb{E}[(S_s(T_{j+1}) - S_s(T_j))^2 \ | \ T_j] \leq \frac{4}{s^6}(K - \mathbb{E}[K^\prime \ | \ T_j] + s^6L^2).
    \end{equation}
    We will now compute the second moment of the full score. First, note that $S_s(T_0) = 0$, as the empty transcript has no information. Writing $S_s(T_n) = \sum_{j = 0}^{n-1} S_s(T_{j+1}) - S_s({T_j})$, and noting that each increment has conditional mean zero, we get
    \begin{equation}
        \mathbb{E}[S_s(T_n)^2] = \sum_{j = 0}^{n-1} \mathbb{E}[(S_s(T_{j+1}) - S_s(T_{j}))^2].
    \end{equation}
    Writing $K_j = m_j^\top C m_j$, we get
    \begin{equation}
        \mathbb{E}[S_s(T_n)^2] \leq \frac{4}{s^6} (K_0 - \mathbb{E}[K_n] + ns^6L^2) \leq 4nL^2,
    \end{equation}
    where we use $K_n \geq 0$ and that $m_0 = 0 \implies K_0 = 0$. As $I_s(Q_s) = \mathbb{E}[S_s(T_n)^2]$, this concludes the proof.
\end{proof}

As stated before, \Cref{lem:fisher-information-upper} and \Cref{fact:tv-and-fisher-info} immediately yield \Cref{thm:classical-gaussian-testing-lower}, concluding this section.

\subsection{Putting things together}
\label{sec:adaptive-lower-final}

We will now use the ingredients established in the preceding sections to prove \Cref{thm:adaptive-lower-main}.

\begin{proof}[Proof of \Cref{thm:adaptive-lower-main}]
    Suppose there exists a $(d,\eps)$-purity estimator making $n$ adaptively chosen single-copy measurements and succeeding with probability at least $\frac23$. We will use this estimator along with the simulator of \Cref{lem:gaussian-simulation} to solve the Gaussian testing task. In particular, we are given query access to a random $\bftheta \sim \mc{N}(0, v_i I_D)$ and asked to determine $i \in \{0,1\}$. 
    
    To solve this task, we will run the purity estimator and simulate its measurements with \Cref{lem:gaussian-simulation}, making $Cn$ queries to $\theta$. By \Cref{lem:instance-purity-separation}, when $\bftheta \in \mathcal{V}_i$, an accurate estimate can be used to determine $i$. Thus, on all desirable events, our algorithm will succeed at the testing problem. The failure events are 1) $\|\Delta_{\bftheta}\|_\infty \geq \frac{1}{2d}$, 2) $\bftheta \notin \mathcal{V}_i$, 3) failure of the simulator, and 4) failure of the purity estimator. By \Cref{lem:instance-valid-state,lem:instance-purity-separation,lem:gaussian-simulation}, the first three events occur with probability at most $.01$ each, and conditioned on them not occurring, the final event occurs with probability at most $\frac13$. By a union bound, the total failure probability is at most $.37$. 

    However, the success probability of any classical algorithm making such queries is bounded by $\frac12 + \frac12\dtv(Q_0,Q_1)$, where $Q_0,Q_1$ denote the distributions over query transcripts. Consequently, we must have
    \begin{equation}
    \label{eq:tv-distance-1}
        \dtv(Q_0,Q_1) \geq .26.
    \end{equation}
    We will now use \Cref{thm:classical-gaussian-testing-lower} to upper bound this TV distance. By \Cref{lem:gaussian-simulation}, we made a total of $Cn$ queries, each with norm bounded by $d$. From \Cref{thm:classical-gaussian-testing-lower}, we get
    \begin{align}
        \dtv(Q_0,Q_1) &\leq d\sqrt{Cn} |\sqrt{v_1} - \sqrt{v_0}| = d\sqrt{C n v_0} (\sqrt{1+\alpha} - 1)
        \\&\leq \frac{\alpha d\sqrt{C n v_0}}{2}. \label{eq:tv-distance-2}
    \end{align}
    From \Cref{eq:tv-distance-1,eq:tv-distance-2}, we get
    \begin{equation}
        n \geq \Omega\left(\frac{1}{\alpha^2 v_0 d^2}\right).
    \end{equation}
    Recall from \Cref{eq:gaussian-parameters} that $\alpha = 10\eps/Ddv_0$, $v_0 = \Theta(d^{-3} \min\{1,d\eps\}),$ and also that $D = d^2 - 1$. Altogether, we get
    \begin{equation}
        n \geq \Omega\Bigl(\min\Bigl\{\frac{d}{\eps^2}, \frac{d^2}{\eps}\Bigr\}\Bigr),
    \end{equation}
    as desired.
\end{proof}

%% file: sections/A_gaussian_simulation.tex
\section{Gaussian simulation}
\label{sec:gaussian-simulation}

This section is dedicated to proving \Cref{lem:gaussian-simulation}. For the simulation, we will make use of a ``Bernoulli factory'', i.e., an algorithm that takes multiple flips from a coin with success probability $p$ to produce one coin flip with success probability $f(p)$.

\begin{theorem}[Consequence of {\cite[Theorem 2]{nacu2005fast}}]
\label{thm:bernoulli-factory}
    Let $I \subset (0,1)$ be a closed interval and $f: I \to (0,1)$ be a real analytic function on this interval. Define a Bernoulli factory for $f$ as an algorithm that takes samples from a coin with an unknown success probability $p \in I$ and outputs a sample from a coin with success probability exactly $f(p)$. Then, there exists a Bernoulli factory for $f$ with expected number of input samples at most $C_f$ for   
    some fixed $0 < C_f < \infty$ and for any $p \in I$. \footnote{The version stated here is weaker than that of \cite{nacu2005fast}, who actually show that the number of samples has an exponentially decaying tail. However, this weaker version is sufficient for our purposes.}
\end{theorem}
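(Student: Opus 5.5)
The plan is to obtain the statement directly from the fast-simulation theorem of Nacu and Peres by integrating its tail bound. Recall that Nacu and Peres call a Bernoulli factory for $f$ on $I$ a \emph{fast simulation} if there are constants $C<\infty$ and $\rho\in(0,1)$ such that, for every $p\in I$ and every $k\ge 0$, the number $N$ of input coin flips satisfies $\Pr_p[N>k]\le C\rho^k$. Their Theorem~2 states that if $I\subset(0,1)$ is a closed interval and $f$ is real analytic on $I$ with values in $(0,1)$, then $f$ admits a fast simulation.

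The first step is to check that our hypotheses match theirs. A function real analytic on the closed interval $I$ has, at each point of $I$, a convergent power series, hence extends analytically to an open neighbourhood of $I$; I will use whichever of these formulations their theorem is stated with. Since $I$ is compact and $f$ is continuous with $f(I)\subset(0,1)$, the range $f(I)$ is a compact subset of $(0,1)$, so $f$ is bounded away from $0$ and $1$ on $I$. I expect that this is the nondegeneracy condition their construction needs, since they build the factory by approximating $f$ from above and below by Bernstein-type polynomials.

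The second step is to convert the tail bound into an expectation bound. For any $p\in I$,
\begin{equation}
  \mathbb{E}_p[N]=\sum_{k\ge0}\Pr_p[N>k]\le\sum_{k\ge0}C\rho^k=\frac{C}{1-\rho}.
\end{equation}
Setting $C_f\coloneqq C/(1-\rho)$ gives a finite constant independent of $p$. The factory also halts almost surely, because $\Pr_p[N>k]\to0$, and by their construction its output is a coin with success probability exactly $f(p)$.

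The only point I expect to require care is the uniformity in $p$. I need to confirm that the constants $C$ and $\rho$ in Nacu and Peres' definition are quantified uniformly over $p\in I$ rather than chosen separately for each $p$. If their statement were only pointwise, I would go back to their proof. There, the tail bound comes from the approximation errors of the polynomial envelopes of $f$, which decay geometrically at a rate governed by the width of the analyticity neighbourhood of $I$ and by the margin $\min_I\min\{f,1-f\}$. Both of these quantities depend only on $f$ and $I$, not on $p$, so the resulting constants are uniform over $I$.
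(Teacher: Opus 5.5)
Your proposal is correct and follows the paper's route. The paper gives no separate proof: it invokes Nacu--Peres' Theorem~2 and, as its footnote says, weakens the exponential tail bound to a bound on the expected number of flips, which is exactly your summation $\mathbb{E}_p[N]=\sum_{k\ge0}\Pr_p[N>k]\le C/(1-\rho)$. Your check that $C$ and $\rho$ are uniform over $p\in I$ addresses a point the paper leaves implicit. It is worth keeping, since the simulation in \Cref{lem:gaussian-simulation} needs a $p$-independent constant $C_f$.
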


We are now ready to prove \Cref{lem:gaussian-simulation}.

\begin{proof}[Proof of \Cref{lem:gaussian-simulation}]
    For ease of exposition, we only consider POVMs with finitely many outcomes, but the proof naturally extends to continuous POVMs as well. Conditioned on prior measurements and any parameter-independent randomness, the choice of any particular POVM is deterministic. We will thus first show how to simulate a single POVM, and the simulation of the full schedule will follow almost immediately.

    Suppose the POVM in question consists of measurement operators $\{M_y\}_y$. When applied to the maximally mixed state, the outcome distribution would be
    \begin{equation}
        p_0(y) \triangleq \frac{\tr(M_y)}{d}.
    \end{equation}
    However, we apply this to a Gaussian perturbation of the maximally mixed state, i.e., to a state $\rho_\theta = \frac{I}{d} + \Delta_\theta$. For such a state, we write
    \begin{equation}
        p(y) = \tr(M_y \rho_\theta) = p_0(y) + \tr(M_y \Delta_\theta).
    \end{equation}
    We define
    \begin{equation}
        A_y \coloneqq \frac{d \cdot M_y}{\tr(M_y)}, \quad \textnormal{and} \quad \mu(y) \coloneqq \tr(A_y \Delta_\theta),
    \end{equation}
    allowing us to write
    \begin{equation}
        p(y) = p_0(y) (1 + \mu(y)).
    \end{equation}
    In particular, we can view the outcome distribution as a tilted distribution with respect to the ``null'' one, i.e., the one corresponding to the maximally mixed state. As such, our simulation algorithm will draw $y \sim p_0$ and only accept it with a conditional probability. 

    Specifically, we draw $y$ from $p_0$, and then accept this with probability $\frac{1 + \mu(y)}{2}$. Note that $\|\Delta_\theta\|_\infty \leq \frac{1}{2d}$, and so $|\mu(y)| \leq \frac12$ for all $y$, implying that $\frac{1 + \mu(y)}{2} \in [1/4, 3/4]$ and is thus a valid probability. Now, on acceptance, we output $y$; on rejection, we repeat this for a new draw of $y$. The probability of outputting a certain $y$ in a single attempt is thus given by
    \begin{equation}
        \mathbf{Pr}[y \sim p_0 \textnormal{ and } \mathsf{Accept}] = p_0(y) \cdot \frac{1 + \mu(y)}{2} = p(y)/2.
    \end{equation}
    Consequently, the total probability of acceptance is given by
    \begin{equation}
        \sum_y \mathbf{Pr}[y \sim p_0 \textnormal{ and } \mathsf{Accept}] = \sum_y p(y)/2 = \frac12.
    \end{equation}
    Moreover, conditioned on acceptance, we have exactly simulated the desired POVM distribution:
    \begin{equation}
        \mathbf{Pr}[y | \mathsf{Accept}] = \frac{\mathbf{Pr}[y \sim p_0 \textnormal{ and } \mathsf{Accept}]}{\mathbf{Pr}[\mathsf{Accept}]} = p(y).
    \end{equation}
    Thus, these two stages successfully simulate a draw from the POVM, and each attempt has a $\frac12$ success probability.

    However, each attempt of this protocol requires a sample of the Bernoulli random variable with success probability $\frac{1 + \mu(y)}{2}$. As $\mu(y)$ depends on the unknown vector $\theta$, such a sample is not directly available. We will instead obtain such samples using noisy linear queries to $\theta$ along with the Bernoulli factory of \Cref{thm:bernoulli-factory}.

    For the operator $A_y$, let $\Tilde{A}_y \coloneqq A_y - \frac{\tr(A_y)}{d} I$ be its traceless version. By tracelessness of $\Delta_\theta$, we have $\mu(y) = \tr(\Tilde{A}_y \Delta_\theta)$. As $\Tilde{A}_y$ is a traceless Hermitian operator, we can write it in terms of the basis $\{V_r\}_{r \in [D]}$. Let $\{c_r\}$ be the associated coefficients. Then, note that
    \begin{equation}
        \mu(y) =  \tr(\Tilde{A}_y \Delta_\theta) = \sum_{r  = 1}^D c_r \theta_r.
    \end{equation}
    In particular, we can estimate $\mu(y)$ by querying $\theta$ with the vector $c$. Note that
    \begin{equation}
        \|c\|_2 = \|\Tilde{A_y}\|_2 \leq \|A_y\|_2 = d \cdot \frac{\|M_y\|_2}{\|M_y\|_1} \leq d,
    \end{equation}
    where we use the fact that $M_y$ is positive semidefinite. This proves that our queries will have norm bounded by $d$.

    Now, let the output of the query be $Y = \mu(y) + Z$, where $Z \sim \mc{N}(0,1)$. We will treat as our input coin the indicator variable $\mathbf{1}[Y \geq 0]$. This has success probability
    \begin{equation}
        \mathbf{Pr}[Z \geq -\mu(y)] = \Phi(\mu(y)),
    \end{equation}
    where $\Phi$ is the CDF of the standard normal distribution, i.e., $\Phi(x) = \mathbf{Pr}[Z \leq x]$ for $x \in \mathbb{R}$. Feed samples from the above Bernoulli random variable into \Cref{thm:bernoulli-factory} for the function
    \begin{equation}
        f(x) = \frac{1 + \Phi^{-1}(x)}{2}
    \end{equation}
    on the interval $I = [\Phi(-1/2), \Phi(1/2)]$. Note that $\Phi$ is real analytic and strictly increasing on $\mathbb{R}$, so its inverse is real analytic on $(0,1)$, and consequently $\Phi^{-1}(x)$ is real analytic on $I$, allowing us to apply \Cref{thm:bernoulli-factory}. Note that with $x = \Phi(\mu(y))$, we get $f(x) = \frac{1 + \mu(y)}{2}$, exactly the probability desired by the earlier rejection sampler. In expectation, this takes $C_f$ queries to $\theta$. 

    As the success probability of each simulation attempt was $\frac12$, we can simulate a single POVM with $2C_f$ queries in expectation. 

    We will now discuss how to simulate a full measurement schedule. Conditioned on a prior transcript, the choice of the next measurement is deterministic. We apply the simulation routine to each such measurement, and continue until we accept. Additionally, for each POVM, we will add a bookkeeping query to the zero vector to ensure termination under bad events. Thus, the expected number of queries per measurement is $2C_f + 1$.  
    We continue in this way until we succeed or the total number of queries exceeds $100(2C_f + 1)n$, terminating the overall simulation as soon as this occurs. The number of expected queries to succeed is $(2C_f + 1)n$; consequently, the probability of this failure event is at most $.01$ by Markov's inequality. Thus, our algorithm achieves \Cref{lem:gaussian-simulation} for $C = 100 (2C_f+1)$.
\end{proof}